\documentclass[prl]{revtex4}
\usepackage{amsmath,enumerate}
\usepackage{amssymb}
\usepackage{graphicx}
\usepackage{amsthm}
\newtheorem{theorem}{Theorem}
\newtheorem{lemma}{Lemma}
\newtheorem{cor}{Corollary}
\newtheorem{prop}{Proposition}

\newcommand{\be}{\begin{equation}}
\newcommand{\ee}{\end{equation}}
\newcommand{\bra}[1]{\left\langle #1\right|}
\newcommand{\ket}[1]{\left|#1\right\rangle}
\newcommand{\braket}[2]{\left\langle #1,#2\right\rangle}
\newcommand{\ketbra}[2]{|#1\rangle\langle #2|}
\newcommand{\pure}[1]{\ketbra{#1}{#1}}
\newcommand{\Tr}{{\mathop{\mathrm{Tr}}}}
\providecommand{\one}{\leavevmode\hbox{\small1\kern-3.8pt\normalsize1}}
\renewcommand\Im{\operatorname{Im}}
\newcommand{\Z}{\mathbb{Z}}
\newcommand{\err}{\check{\epsilon}}

\newcommand{\DX}{{\cal D}_X}
\newcommand{\DY}{{\cal D}_Y}
\newcommand{\gap}{\gamma}
\newcommand{\A}{\mathcal{A}}
\newcommand{\N}{\mathbb{N}}
\newcommand{\R}{\mathbb{R}}
\newcommand{\Sa}{\mathcal{S_{\alpha}}}
\newcommand{\diam}{\operatorname{diam}}
\makeatletter
\def\imod#1{\allowbreak\mkern10mu({\operator@font mod}\,\,#1)}
\makeatother

\begin{document}

\title{Quantization of Hall Conductance For Interacting Electrons Without Averaging Assumptions}
\author{Matthew B. Hastings}
\email{mahastin@microsoft.com}
\affiliation{Microsoft Research Station Q, CNSI Building, University of California, Santa Barbara, CA,
93106, USA}
\author{Spyridon Michalakis}
\email{spiros@lanl.gov}
\affiliation{T-4 and CNLS, LANL - Los Alamos, NM, 87544, USA}
%\pacs{???}

\begin{abstract}

We consider two-dimensional Hamiltonians on a torus with finite range,
finite strength interactions and a unique ground state with a
non-vanishing spectral gap, and a conserved local charge, as defined
precisely in the text.  Using the local charge operators,
we introduce a boundary magnetic flux in the horizontal and vertical
direction and evolve the
ground state quasi-adiabatically around a square of size one magnetic
flux, in flux space.
At the end of the evolution we obtain a trivial Berry phase, which we
compare, via a method reminiscent of Stokes' Theorem, to the Berry
phase obtained from an evolution around a small loop near the origin.
As a result, we prove, without any averaging assumption, that the Hall
conductance for interacting electron systems is quantized in integer
multiples of $e^2/h$ up to small corrections bounded by a function
that decays as a stretched exponential in the linear size $L$.
Finally, we discuss extensions to the fractional case under an additional
topological order assumption to describe the multiple degenerate ground states.
\end{abstract}
\maketitle

At low temperature, the Hall conductance of a quantum system can be
quantized to remarkable precision.  While this is an experimental fact,
and the essential ingredients of our intuitive understanding of it were
provided by Laughlin\cite{laughlin}, there is still no fully satisfactory
mathematical proof of why this happens.  The main approaches theoretically
are either the Chern number approach\cite{avron,niu}, which relies on
an additional averaging assumption, and the noncommutative geometry approach\cite{ncg} which is only applicable to non-interacting electrons.
In this paper, we present an approach which avoids averaging and is applicable
to interacting electrons, and can even be extended to handle fractional
quantization.

Before we formulate our main result, we discuss our setup of the Quantum Hall Effect.
We will consider a discrete, tight-binding model of interacting electrons with orbitals centered at sites on a torus $T$. In particular, $T$ is obtained by joining the boundaries of a finite $[1,L]\times [1,L]$ subset of $\Z^2$. At each site $s\in T$, we introduce the charge operator $q_s$ with eigenvalues $0$, $1, \dots , q_{\max}$
representing the states with the respective charge occupying the site.
Let $\A_Z$ be the algebra of observables associated with $Z\subset T$. 
From a general point of view, we are interested in properties of the ground state
of the Hamiltonian $H_0 = \sum_{Z\subset T} \Phi(Z)$, where the following assumptions are satisfied:
\begin{enumerate}
\item The interactions $\Phi(Z)$ have finite strength and finite range.
Formally,  $\exists\,  J, R > 0$ with $L>2R$ such that:
\begin{enumerate}
\item $\Phi(Z) = \Phi(Z)^{\dagger}\in \A_Z$, 
\item $\sup_{s\in T} \sum_{Z\ni s} \| \Phi(Z) \| \leq J$, 
\item $\forall Z\subset T,\quad \diam(Z) > R \implies \Phi(Z) = 0,$
\end{enumerate}
where
\be
\diam(Z) = \max_{\{ s_1,s_2\in Z\}} \{d(s_1,s_2)\}, \quad d(s_1,s_2) = |x(s_1)-x(s_2)| \imod{L} + |y(s_1)-y(s_2)| \imod{L},\nonumber
\ee
with $x(s)$, $y(s)$ denoting the $x$ and $y$ coordinates of site $s$, respectively.
\item The Hamiltonian $H_0$ has a unique,
normalized ground state, which we will denote by $\ket{\Psi_0}$, and a
spectral gap $\gap > 0$ to the first excited state.  We use $P_0=|\Psi_0\rangle \langle \Psi_0|$ to denote the projector onto the ground state.
\item Finally, the total charge $Q = \sum_{s\in T} q_s$ is conserved, so that $[Q,H_0]=0$.
\end{enumerate}

Our main result is the following:
\begin{theorem}
Let $H_0$ be a Hamiltonian satisfying the above assumptions.
Then, 
the Hall conductance $\sigma_{xy}$ of $\ket{\Psi_0}$, defined in (\ref{def:cond}) for a system of interacting particles described by $H_0$, satisfies the quantization condition 
\be
\label{mainres}
\left|\sigma_{xy} - n \cdot\frac{e^2}{h}\right| \le C \left(q_{\max} R^2 \frac{J}{\gamma} L\right)^{5/2}\, G_{R,J,\gamma}^{5/2}(L) \, e^{-G^2_{R,J,\gamma}(L)/6}, \quad G_{R,J,\gamma}(L)= \left(\frac{\gamma}{4 v R}\right) \left(\frac{L}{48(2\pi) q_{\max} \ln^3 L}\right)^{1/5},
\ee
for some $n \in \N$, for all $L$ such that the following inequality is satisfied:
\be
\label{Lrequire}
G^2_{R,J,\gamma}(L)\geq C' \ln\left(q_{max} R^2 \frac{J}{\gamma} L\right).
\ee
In (\ref{mainres}),
$e^2/h$ denotes the square of the electron charge divided by Planck's constant.
The quantities $C,C'$ are
numeric constants of order unity, and $v = 132e\, (1+R)^3 J<360 (1+R)^3 J$. 
\end{theorem}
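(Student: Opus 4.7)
My plan is to express the Hall conductance as $(e^2/h)$ times the Berry curvature $\Omega(0,0)$ of the ground state at zero flux, and then show via a Stokes-type argument on the flux torus that $(2\pi)^2\Omega(0,0)$ equals $2\pi n$, $n\in\Z$, up to stretched-exponential corrections. First I would introduce horizontal and vertical boundary fluxes $(\phi_x,\phi_y)$ by conjugating the interactions $\Phi(Z)$ that cross two transverse strips $S_x,S_y$ by $\exp(i\phi_\alpha Q_\alpha^{\mathrm{half}})$, where $Q_\alpha^{\mathrm{half}}$ is the total charge on one side of $S_\alpha$. Because $Q$ is conserved and each $q_s$ has integer spectrum, $\phi_\alpha\to\phi_\alpha+2\pi$ is implementable by a local unitary $V_\alpha$ with $V_\alpha H(\phi)V_\alpha^\dagger=H(\phi+2\pi\hat{e}_\alpha)$, so the twisted Hamiltonian is periodic on the flux torus $\mathbb{T}^2_\phi$ up to local gauge equivalence, and the gap $\gap$ persists throughout.

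Next, using the preserved gap, I would apply the quasi-adiabatic continuation machinery to produce unitaries $U(\phi)$ with $U(\phi)\ket{\Psi_0}=\ket{\Psi(\phi)}$ whose generators $D_\alpha(\phi)=i(\partial_\alpha U)U^\dagger$ can be approximated, via a filter function with stretched-exponential decay in both space and time, by operators supported in a neighborhood of $S_\alpha$. The Lieb--Robinson velocity $v=132e(1+R)^3 J$ controls how fast the approximation error spreads; the decay tail of the filter is the source of the factor $e^{-G^2/6}$ in (\ref{mainres}). Defining the Berry curvature $\Omega(\phi)=-2\,\Im\langle\partial_x\Psi(\phi)|\partial_y\Psi(\phi)\rangle$, a Kubo-type linear response calculation at zero flux identifies $\sigma_{xy}=(e^2/h)\,\Omega(0,0)/(2\pi)$.

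Finally I need to upgrade $\Omega(0,0)$ to a topological invariant. The periodicity from the first step turns $\Omega$ into the curvature of a line bundle over $\mathbb{T}^2_\phi$, so $\frac{1}{2\pi}\int_{\mathbb{T}^2_\phi}\Omega(\phi)\,d\phi_x\,d\phi_y=n\in\Z$ is the first Chern number. The heart of the proof, and the step I expect to be the principal obstacle, is showing that $|\Omega(\phi)-\Omega(0,0)|$ is stretched-exponentially small uniformly in $\phi$. My strategy is to write the difference as a sum of commutators between the quasi-local generators $D_x$ and $D_y$ (supported near \emph{disjoint} strips a distance $\sim L/2$ apart on the torus) and the gauge transformation interpolating $H(0)$ to $H(\phi)$; the disjoint supports combined with the Lieb--Robinson-controlled tails of $D_\alpha$ produce the exponential suppression, while careful bookkeeping of the $O(L)$-site strip volumes, factors of $J/\gap$, and powers of $G$ arising from differentiating the filter yield the explicit prefactor in (\ref{mainres}). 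Combining this near-constancy with the integer-valued integral gives $(2\pi)^2\Omega(0,0)=2\pi n+\text{error}$, and multiplying by $e^2/h$ yields the theorem; the condition (\ref{Lrequire}) is what guarantees that the stretched-exponential dominates the logarithmic prefactors hidden in $G$.
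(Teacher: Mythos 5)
The central gap in your proposal is the assertion, at the end of your first paragraph, that "the gap $\gamma$ persists throughout" the flux torus. The spectral gap is only assumed at $\theta_x=\theta_y=0$; threading a single boundary flux is not a unitary conjugation of $H_0$, so there is no a priori reason the gap survives away from the origin. The $2\pi$-periodicity you correctly derive from charge quantization only relates fluxes differing by $2\pi$ and says nothing about intermediate values. If you allow yourself a uniform gap over the whole flux torus you have reinstated precisely the assumption of Niu--Thouless--Wu that the theorem is designed to remove, and moreover you then only get \emph{averaged} quantization from the Chern integral, not quantization at the origin, unless you also establish near-constancy of $\Omega$ — which your plan does attempt. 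But without a uniform gap, the Berry curvature $\Omega(\phi)$ is not even defined at generic $\phi$, so the step where you identify $\frac{1}{2\pi}\int_{\mathbb{T}^2}\Omega\,d\phi_x\,d\phi_y$ with an integer Chern number collapses, and with it the rest of the argument.

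The paper's route is engineered to avoid exactly this. The quasi-adiabatic generator $\Sa(H,\partial H)$ is defined by a filter function whose ODE is well-posed regardless of any gap, so the evolution operators exist everywhere. When the ground state must be recovered after transporting around the big loop $\Lambda(2\pi)$, the paper never compares to adiabatic transport at intermediate fluxes; instead it introduces compensating anti-twists $\phi_x=-\theta_x$, $\phi_y=-\theta_y$ at $x=L/2+1$, $y=L/2+1$, exploits that $H(\theta,-\theta,0,0)$ is \emph{exactly} unitarily equivalent to $H_0$ (Eq.~(\ref{twist-anti-twist})), and then uses partial-trace/locality estimates (Lemmas~\ref{lem:partial_trace}, \ref{lem:local_approximation}) plus the energy estimate (Proposition~\ref{prop:energy}) to conclude $\ket{\Psi_X(2\pi)}$, $\ket{\Psi_Y(2\pi)}$ are close to the ground state. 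The "near-constancy" you correctly identify as the principal obstacle is proved in Lemma~\ref{lem:translation}, but it is phrased as near-constancy of the \emph{quasi-adiabatic loop amplitude} $\braket{\Psi_0}{\Psi_{\circlearrowleft}(\theta_x,\theta_y,r)}$, not of a Berry curvature; the proof writes the small-loop unitary as a rotation of a local operator supported in $\Omega_0$ (Lemma~\ref{lem:local_loop_unitary}) and then uses the translation unitary Lemma~\ref{lem:translation_unitary}, again via Lieb--Robinson and partial-trace bounds, never via a gap at $(\theta_x,\theta_y)$. One further geometric remark: your $D_x$ and $D_y$ are supported near perpendicular strips through $x=1$ and $y=1$, which intersect; the strips $\sim L/2$ away that do the heavy lifting in the paper carry the anti-twists, which your construction omits. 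To repair the proposal you would need to reformulate the near-constancy step entirely in terms of the quasi-adiabatic generators and locality, dropping any reference to a gap or a curvature at nonzero flux.
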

This implies that, for fixed $J,R,\gamma$ and $q_{\max}$, the Hall conductance is exactly quantized to
an integer multiple of $e^2/h$ in the thermodynamic ($L\rightarrow \infty$) limit.  Note
that, for fixed $R,J,\gamma,q_{max}$, the inequality (\ref{Lrequire})
is satisfied for all sufficiently large $L$.

Before giving the proof, we discuss the applicability of our assumptions to physical experimental systems.
The first assumption includes only finite range interactions.   While there are long range Coulomb interactions in real
experimental systems,
the screening of Coulomb interactions may justify this assumption. Moreover, the case in which the
torus is not exactly square, but has an aspect ratio of order unity, can be
handled by combining several sites in one direction into a single site to
make the aspect ratio unity; in fact, with minor changes in the proof,
a polynomial aspect ratio can be handled at the cost of polynomial changes in the prefactor of (\ref{mainres}).
We omit this case for simplicity.
Our techniques can also be extended to the case of exponentially decaying interactions, which we also omit.

We note here that sites $s, s' \in T$ with $y(s)=y(s')$ and $x(s) = L, x(s')=1$ are neighboring sites on $T$. The same is true for $s, s' \in T$ with $x(s)=x(s')$ and $y(s) = L, y(s')=1$. Thus, interactions $\Phi(X)$ in $H_0$ on subsets $X$ which cross the boundaries $x= 0$ and $y=0$ are not necessarily vanishing.  
Finally, we note that without loss of generality, we may assume that each term $\Phi(Z)$ of the initial Hamiltonian commutes with the total charge $Q$, by
substituting $\Phi(Z)$ with the averaged interaction which has the same range $R$ and strength still bounded by $J$:
$$\frac{1}{2\pi}\int_0^{2\pi} e^{i \theta Q}\, \Phi(Z)\, e^{-i \theta Q}\, d\theta.$$

%Sketch of the proof
\subsection*{Sketch of the main argument.}
Our proof is closely related to the Chern number approach\cite{avron,niu}, so we review it and then contrast our technique.
In the Chern number approach, one computes the Hall conductance of the above system of interacting electrons by introducing magnetic fluxes through twists $\theta_x, \theta_y$ at the boundary of the torus.  A time-dependent flux $\theta_x$ creates an electric
field in the $x$ direction and the flux $\theta_y$ measures the current in the $y$ direction, and thus the Hall
conductance can be identified with a certain current-current correlation evaluated at the ground state $\ket{\Psi_0}$.  In the Chern number approach, one
must
assume that there is a unique ground state for all $\theta_x,\theta_y$.  
If the ground state is adiabatically transported
around an infinitesimal loop near $\theta_x=\theta_y=0$, the state acquires a phase proportional to the area of the loop multiplied
by the Hall conductance.  That is, the Hall conductance is equal to the curvature of the connection given by parallel transport of
the ground state.  The average of the Hall conductance over the torus is then equal to an integer, the Chern number.
Unfortunately, this argument  requires a non-vanishing gap for all $\theta_x,\theta_y$ and
only provides a quantization result for the averaged Hall conductance and does not provide
a quantization result for $\theta_x=\theta_y=0$.

Our proof avoids both of these assumptions, while only requiring the presence of a gap at $\theta_x=\theta_y=0$.
Roughly, our argument proceeds as follows.  Given any state and any path in parameter space, we define the
\emph{quasi-adiabatic evolution}, introduced in \cite{hast-lsm}, of that state along the path. 
Various different quasi-adiabatic evolutions have been defined in \cite{hast-lsm,hast-qad,osborne}; the evolution we choose here
is yet a different one. 
The essential features of quasi-adiabatic evolution are that the evolution of a state be described by an approximately local operator
acting on that state and {\it so long as the path is in a region of the parameter space with a sufficiently large spectral gap},
the ground state of the Hamiltonian at the start of the path is quasi-adiabatically evolved into a state close to the ground state of
the Hamiltonian at the end of the path, up to multiplication by a phase.

Our proof then rests on four results. First, in corollary
\ref{cor:adiabatic_phase_3},
we show that the quasi-adiabatic
evolution of the ground state
around a small (but non-infinitesimal) square loop near the origin in parameter space ($\theta_x=\theta_y=0$) gives a state which
is close to the ground state multiplied by a phase, which we relate to
the Hall conductance at $\theta_x=\theta_y=0$.

Second, we show that quasi-adiabatic evolution of the ground state following the path starting at $(0,0)$, going to $(0,\theta_y)$, then to $(\theta_x,\theta_y)$,
evolving around a small loop at the given
$(\theta_x,\theta_y)$, and then returning to $(0,0)$ again gives a state
close to the ground state up to
a phase.  We do {\it not} require any lower bound on the gap on the path in this step, and if the gap does become small then the
quasi-adiabatic evolution around this path may not approximate the adiabatic evolution.  Rather, the proof that we return to the
ground state depends on locality estimates.  Crucially, we show 
in lemma \ref{lem:translation}
that the phase produced by evolving around this path is approximately
the same for any $\theta_x,\theta_y$.
Third, we show, in a procedure reminiscent of Stokes' theorem, that the quasi-adiabatic evolution around a large loop in parameter space,
$$\Lambda: (0,0) \rightarrow (2\pi,0) \rightarrow (2\pi,2\pi) \rightarrow (0, 2\pi) \rightarrow (0,0),$$ can be exactly decomposed
into the product of a large number of quasi-adiabatic evolutions around small loops as is described at the beginning of this paragraph (see also, Fig.~\ref{fig:decomposition}).
Fourth, we show in corollary 
\ref{cor:gs_evol}
that quasi-adiabatic evolution of the ground state around the large loop returns to a state which is close to
the ground state with a trivial overall phase.  

Combining these four steps, the trivial phase produced by quasi-adiabatic evolution around the large loop is close to
the product of the phases around each small loop.  Since the phase produced by evolution around each small loop is approximately the same,
we can then show that the phase around a small loop near the origin, raised to a power equal to the number of small loops, is close to unity.
Since this phase is related to the Hall conductance, this gives the desired proof of quantization.

The most difficult technical step is the second one.  The first step relies only on the fact that quasi-adiabatic continuation
for systems with a large gap approximates adiabatic evolution.  The fourth step
is similar to the proof of the Lieb-Schultz-Mattis theorem in higher dimensions\cite{hast-lsm}.  We use
energy estimates and
additional virtual fluxes $\phi_x=-\theta_x,\phi_y=-\theta_y$ introduced at $x= L/2+1$ and $y=L/2+1$ to show that the
quasi-adiabatic evolution of the ground state along any of the four sides of the square  (for example, from $\theta_x=\theta_y=0$ to $\theta_x=2\pi,\theta_y=0$)
gives a state which is close to the ground state up to a phase. Then, we show that these phases cancel between evolutions along opposite sides of the square.

Lieb-Robinson bounds play a critical role in our proof.  These bounds
were first introduced in \cite{lr}, and extended in \cite{hast-koma,ns}. Nachtergaele and Sims, in \cite{ns}, gave
the important extension to general lattices including exponentially decaying interactions.  The most recent, tightest, and most general estimates, which we use,
are from \cite{loc-estimates}.  The need to use Lieb-Robinson bounds does
currently limit us to lattice Hamiltonians; the extension to fermions moving
in $\mathbb{R}^2$ would require using unbounded interactions and an infinite
dimensional Hilbert space and is currently beyond our techniques, although
work such as \cite{anh} is an important step towards this.

We use $C$ throughout the text to refer to various numeric constants of order unity.  We use the phrase
``for sufficiently large $L$" to indicate that $L$ is large enough that Eq.~(\ref{Lrequire}) is satisfied.
There are three reasons that we have a lower bound on $L$.  The first reason is
that in many places, such as Eq.~(\ref{sets_omega}), we construct subsets of the lattice whose size
is a fraction of $L$; this first reason in fact only requires that $L$ be at least a constant times $R$.
The second reason is that many of the error estimates are a sum of error terms which have different behaviors in
$L$, and we want to require
in each case that the term in the
error estimate which is dominant in the asymptotic ($L\rightarrow \infty$) limit
is larger than the other error terms.  This second reason also only requires that $L$ be at least a constant times $R$.  The third reason, which in fact gives the most
stringent requirement on $L$ and leads to Eq.~(\ref{Lrequire}), is that our choice of $r$ in corollary (\ref{cor:adiabatic_phase_3}) requires an upper bound on $r$, which depends on $L$ as given in Eq.~(\ref{rchoice}) and so requires a lower bound on $\alpha\gamma$, and hence a lower bound on $G_{R,J,\gamma}(L)$.
For simplicity, we summarize all these requirements as ``for sufficientlly large".

While the final estimate (\ref{mainres}) relies on many estimates
along the way, we can give a very brief explanation of where the scaling
with $L^{2/5}$ arises: in the section on Lieb-Robinson bounds for interactions
$S_\alpha^k$, we are able to derive meaningful bounds for $\alpha$ scaling
as $L^{1/5}$ (up to log factors), and then this scaling of $\alpha$
determines the resulting error estimates, since the dominant terms elsewhere
scale exponentially in the square of $\alpha$.
After giving the proof, we give various extensions.  First, we give a corollary, using the Hall conductance as an obstruction
to finding certain paths in parameter space connecting certain Hamiltonians.
We then discuss extensions to the fractional quantum Hall case, in which we need an additional assumption discussed below which
describes  the presence of topological order; in the fractional case, there are multiple ground states, in which case the phases need not exactly cancel for evolution around the large loop. The assumption of topological order deals with this issue, as discussed in the section on the Fractional Hall Effect.

% QUASI-ADIABATIC EVOLUTION
\section{Introducing the twists}
The evolution we are interested in will be generated by Hamiltonians of the form
\begin{equation}\label{def:H_S}
H(\theta_x, \phi_x, \theta_y, \phi_y)=\sum_{Z\subset T} \Phi(Z;\theta_x, \phi_x,\theta_y,\phi_y).
\end{equation}

In order to specify the interaction terms in the Hamiltonian with twists we introduce, for any operator $A$, the following periodic operator twists: 
\begin{align}\label{def:charges}
&R_X (\theta_x, A) = e^{i \theta_x Q_X} A e^{-i \theta_x Q_X} = R_X (\theta_x+2\pi, A), &
&Q_X = \sum_{1\le x(s)\leq L/2} q_s&\\
&R_Y (\theta_y, A) = e^{i \theta_y Q_Y} A e^{-i \theta_y Q_Y} = R_Y (\theta_y+2\pi, A), &
&Q_Y = \sum_{1\le y(s)\leq L/2} q_s.&
\end{align}

Then, the flux-twisted interaction terms $\Phi(Z;\theta_x, \phi_x,\theta_y, \phi_y)$ are defined according to the following prescription:
\begin{enumerate}[{X}-1.]
\item If $\exists s \in Z : |x(s) - 1| < R$, then $\Phi(Z;\theta_x, \phi_x, \theta_y, \phi_y) = R_X(\theta_x, \Phi(Z;0,0,\theta_y,\phi_y))$.
\item If $\exists s \in Z : |x(s) - (L/2+1)| < R$, then $\Phi(Z;\theta_x, \phi_x,\theta_y, \phi_y) = R_X(-\phi_x, \Phi(Z;0,0,\theta_y,\phi_y))$.
\item Otherwise, $\Phi(Z;\theta_x, \phi_x,\theta_y, \phi_y) = \Phi(Z;0,0,\theta_y,\phi_y)$.
\end{enumerate}
Continuing, the terms $\Phi(Z;0,0,\theta_y,\phi_y)$ are defined as follows:
\begin{enumerate}[{Y}-1.]
\item If $\exists s \in Z : |y(s) - 1| < R$, then $\Phi(Z;0,0,\theta_y, \phi_y) = R_Y(\theta_y, \Phi(Z))$.
\item If $\exists s \in Z : |y(s) - (L/2+1)| < R$, then $\Phi(Z;0,0,\theta_y, \phi_y) = R_Y(-\phi_y, \Phi(Z))$.
\item Otherwise, $\Phi(Z;0,0,\theta_y, \phi_y) = \Phi(Z)$.
\end{enumerate}

Note that since the interactions $\Phi(Z)$ commute with the total charge $Q$ and have finite range $R$, the twists only affect $2$ horizontal and $2$ vertical strips, each of width $2R$, centered on the lines $x=1, x = L/2+1$ and $y=1, y= L/2+1$ (the boundaries of the sets on which $Q_X$ and $Q_Y$ are supported), as shown in Fig.~\ref{fig:interactions}.
Moreover, for each $\Phi(Z)$ at most two ``perpendicular" twists act non-trivially to produce the corresponding twisted interaction. In particular, there are $4$ sets of interactions that see both twists simultaneously, namely those that satisfy conditions X-$1$ $\wedge$ Y-$1$, X-$1$ $\wedge$ Y-$2$, X-$2$ $\wedge$ Y-$1$ or X-$2$ $\wedge$ Y-$2$. For example, interactions $\Phi(Z)$ that satisfy conditions X-$1$ $\wedge$ Y-$1$ transform into $R_Y(\theta_y, R_X(\theta_x,\Phi(Z)))$ (the order in which we apply the twists is irrelevant, since $[Q_X, Q_Y] = 0$).

\begin{figure}
\centering
\includegraphics[width=150px]{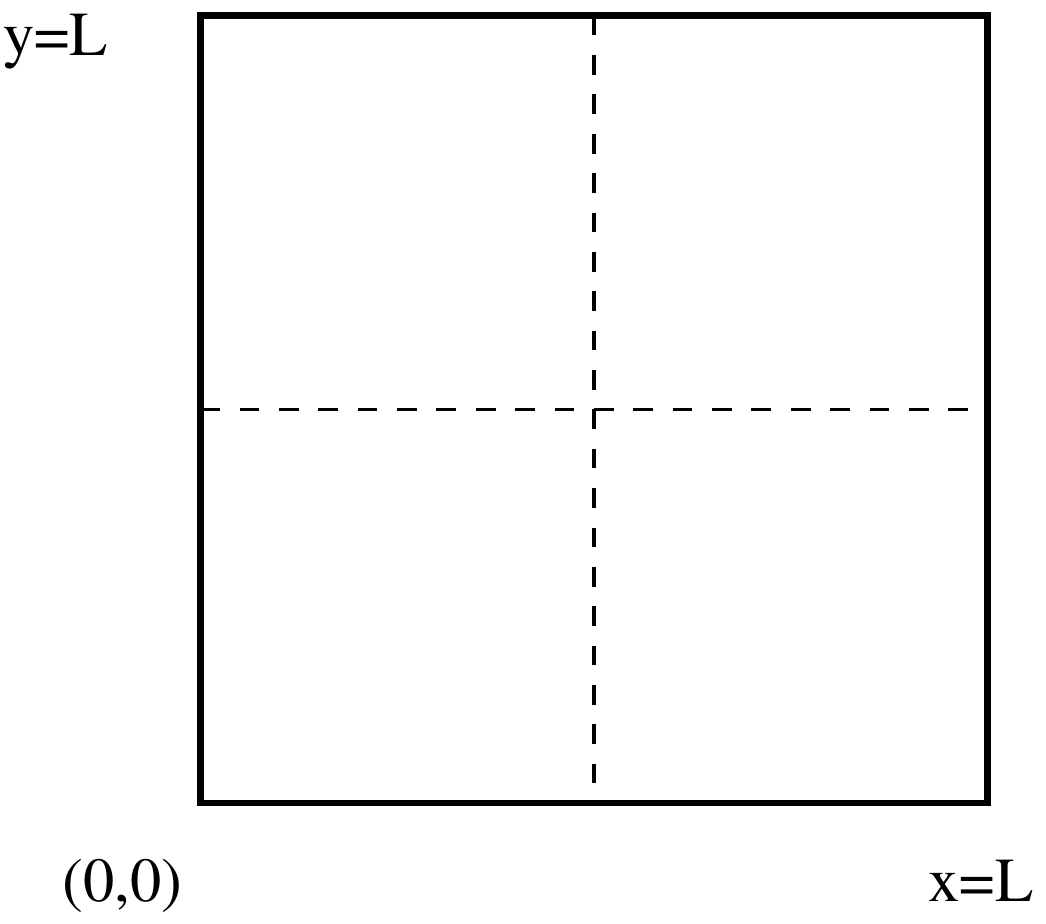}
\caption{{\small{Lines illustrating how the twists are defined on the torus.  The twists $\phi_x,\phi_y$ affect interactions
close to the vertical and horizontal dashed lines, respectively, while the twists $\theta_x,\theta_y$ affect interactions
close to the vertical and horizontal solid lines.
}}}
\label{fig:interactions}
\end{figure}

We will need bounds later on the partial derivatives of the Hamiltonian
with respect to the twists.
There are $L$ sites in $T$ (the lines $x=1$ and $y=1$, respectively) on which the individual terms in $\partial_{\theta} H(\theta,0,\theta_y,\phi_y)|_{\theta=\theta_x}$ and $\partial_{\theta} H(\theta_x,\phi_x,\theta,0)|_{\theta=\theta_y}$ may be grouped to act. Each set of interactions acting on a site $s$ has norm bounded as follows:
\begin{equation}\label{bound:terms}
\sum_{Z\ni s} \|\partial_{\theta} \Phi(Z;\theta,0,\theta_y,\phi_y)|_{\theta=\theta_x}\| = \sum_{Z\ni s} \|[Q_{Z\cap X}, \Phi(Z;\theta_x,0,\theta_y,\phi_y)]\| \le q_{\max} |Z\cap X| \, \sum_{Z\ni s} \|\Phi(Z)\| \le Q_{\max}\,J,
\end{equation}
where $Q_{Z\cap X}$ is the charge in $Z$ that is also contained in $Q_X$, and we defined:
\be
Q_{\max} \equiv  q_{\max} \left(\frac{R^2}{4}\right).
\ee
We also used that $A \ge 0 \implies \|[A,B]\| = \|[A- (\|A\|/2)\one,B]\| \le 2 \|A- (\|A\|/2)\one\| \|B\| = \|A\|\|B\|$ and $|Z\cap X| \le |Z| \le R^2/4$. Similarly, $\sum_{Z \ni s}\|\partial_{\theta} \Phi(Z;\theta_x,\phi_x,\theta,0)|_{\theta=\theta_y}\| \le Q_{\max}\, J$. The previous bounds imply:
\begin{eqnarray}\label{bound:rot-H_X}
\|\partial_{\theta} H(\theta,0,\theta_y,\phi_y)|_{\theta=\theta_x}\| &\le& Q_{\max} J L \\
\|\partial_{\theta} H(\theta_x,\phi_x,\theta,0)|_{\theta=\theta_y}\| &\le& Q_{\max} J L.\label{bound:rot-H_Y}
\end{eqnarray}

Finally, remembering that $\Phi(Z)$ commutes with the total charge $Q$, one may verify using the definitions of twisted interactions, that:
\begin{eqnarray}\label{eq:unitary_equiv_X}
R_X(-\theta, H(\theta_x,\phi_x,\theta_y,\phi_y)) = H(\theta_x-\theta,\phi_x+\theta,\theta_y,\phi_y)\\
R_Y(-\theta, H(\theta_x,\phi_x,\theta_y,\phi_y)) = H(\theta_x,\phi_x,\theta_y-\theta,\phi_y+\theta)\label{eq:unitary_equiv_Y}
\end{eqnarray}
which implies that the twist/anti-twist Hamiltonians (i.e. $\phi_x  = -\theta_x$ and $\phi_y = -\theta_y$) are unitarily equivalent to the original gapped Hamiltonian. In particular, we have:
\begin{equation}\label{twist-anti-twist}
R_X(\theta_x, R_Y(\theta_y, H_0)) = H(\theta_x, -\theta_x, \theta_y, -\theta_y).
\end{equation}

\section{The quasi-adiabatic evolution}
An important ingredient of cyclic adiabatic evolution is the persistence of the spectral gap in the Hamiltonians driving the evolution. The
quasi-adiabatic evolution defined below will approximate the
adiabatic evolution so long as the gap is sufficiently large.  Often times, though, we simply don't know how the spectral gap behaves during the evolution, but
using virtual fluxes we will
be able to show that a state quasi-adiabatically continued around certain cycles has energy that is very close to the ground state energy of the original gapped Hamiltonian. This observation is enough to allow us to deduce that the evolved state is close to the initial non-degenerate ground state, up to a phase factor.
%% APPROXIMATING ADIABATIC EVOLUTION %%
\subsection{Approximating the adiabatic evolution}
In order to approximate the adiabatic evolution of the ground state of a family of gapped Hamiltonians, we first need to write down a differential equation describing the desired evolution.
To do this, we note that if for some $s \ge 0$, we have that $\{H(\theta)\}_{\theta \in [0,s]}$ is a differentiable family of gapped Hamiltonians with a differentiable family of unique (up to a phase) ground states $\ket{\Psi_0(\theta)}$ and ground state energies $E_0(\theta)$, then differentiating $(H(\theta)-E_0(\theta))\,\ket{\Psi_0(\theta)}=0$ yields:
\begin{equation}\label{gs:evol_0}
(\one - P_0(\theta))\partial_{\theta} \ket{\Psi_0(\theta)} = -\frac{\one-P_0(\theta)}{H(\theta)-E_0(\theta)}\, \partial_\theta \, H(\theta) \, \ket{\Psi_0(\theta)},
\end{equation}
where $P_0(\theta) = \pure{\Psi_0(\theta)}$.
Since the phase of $\ket{\Psi_0(\theta)}$ is arbitrary, we may choose
\be
\ket{\Psi'_0(\theta)} = e^{-\int_0^{\theta} \braket{\Psi_0(t)}{\partial_t \Psi_0(t)} dt}\ket{\Psi_0(\theta)}.
\ee
which is also a ground state of $H(\theta)$ and satisfies the {\it parallel transport} condition:
\be\label{parallel-transport}
\braket{\Psi'_0(\theta)}{\partial_{\theta} \Psi'_0(\theta)} =0.
\ee
Combining the above condition with (\ref{gs:evol_0}) we have:
\begin{equation}\label{gs:evol}
\partial_{\theta} \ket{\Psi'_0(\theta)} = -\frac{\one-P_0(\theta)}{H(\theta)-E_0(\theta)}\, \partial_\theta \, H(\theta) \, \ket{\Psi'_0(\theta)}.
\end{equation}
From here on, we will assume that whenever we {\it adiabatically} transport a unique ground state along a path in parameter
space, we do this to satisfy the {\it parallel transport} condition. Moreover, if $H(\theta)$ has a spectral gap $\Delta(\theta) > 0$, then we have $\Delta(\theta) \cdot (\one - P_0(\theta)) \le H(\theta) - E_0(\theta)$ which combined with (\ref{gs:evol}) implies the bound:
\be\label{bound:partial}
\|\partial_{\theta} \ket{\Psi'_0(\theta)}\| \le \frac{\|\partial_\theta H(\theta)\|}{\Delta(\theta)} 
\ee
Moreover, we get after expanding $\partial^2_\theta [(H(\theta)-E_0(\theta)) \ket{\Psi'_0(\theta)}] = 0$ and rearranging terms:
\begin{eqnarray}
(\one-P_0(\theta)) \partial^2_\theta \ket{\Psi'_0(\theta)} &=& 
-\frac{\one-P_0(\theta)}{H(\theta)-E_0(\theta)}\, [2\partial_\theta \, H(\theta) \, (\partial_\theta \ket{\Psi'_0(\theta)})+ (\partial^2_\theta H(\theta)) \ket{\Psi'_0(\theta)}] \implies \nonumber\\
\partial^2_\theta \ket{\Psi'_0(\theta)} + (\braket{\partial_\theta \Psi'_0(\theta)}{\partial_\theta\Psi'_0(\theta)})\ket{\Psi'_0(\theta)} &=&
-\frac{\one-P_0(\theta)}{H(\theta)-E_0(\theta)}\, [2\partial_\theta \, H(\theta) \, (\partial_\theta \ket{\Psi'_0(\theta)}) + (\partial^2_\theta H(\theta)) \ket{\Psi'_0(\theta)}]\implies \nonumber\\
\|\partial^2_\theta \Psi'_0(\theta)\| &\le& \|\partial_\theta \Psi'_0(\theta)\|^2 + \frac{2\|\partial_\theta \, H(\theta)\| \cdot \|\partial_\theta \Psi'_0(\theta)\| + \|\partial^2_\theta H(\theta)\|}{\Delta(\theta)} \nonumber\\
&\le& \frac{3\|\partial_\theta \, H(\theta)\|^2}{\Delta^2(\theta)} + \frac{\|\partial^2_\theta H(\theta)\|}{\Delta(\theta)}\label{bound:partial_2}
\end{eqnarray}
where we used condition (\ref{parallel-transport}) to get $\braket{\Psi'_0(\theta)}{\partial^2_\theta\Psi'_0(\theta)}= -\braket{\partial_\theta \Psi'_0(\theta)}{\partial_\theta\Psi'_0(\theta)}$ from $\partial_\theta\braket{\Psi'_0(\theta)}{\partial_\theta\Psi'_0(\theta)} = 0$, for the second line and used (\ref{bound:partial}) to get the final estimate.

%% LEMMA: GAP ESTIMATE %%
It will be useful to have an estimate on the gap $\Delta(\theta)$ as $\theta$ changes from $0$. 
\begin{lemma}\label{lem:gap-estimate}
If for some $s \ge 0$, we have that $\{H(\theta)\}_{\theta \in [0,s]}$ is a differentiable path of Hamiltonians each with spectral gap $\Delta(\theta)$ and ground state $\ket{\Psi_0(\theta)}$, then for $\theta\in[0,s]$:
\be\label{bound:gap-estimate}
\Delta(\theta) \ge \Delta(0) - 2|\theta| \cdot \sup_{r\in[0,\theta]} \|\partial_r H(r)\|
\ee
\begin{proof}
Let $E_0(\theta)$ be the ground state energy of $H(\theta)$.
The spectrum of $H(0)$ is contained in $\{E_0(0)\}\cup [E_0(0)+\Delta(0),\infty)$.  
Therefore, a well-known result in functional analysis implies that the spectrum of $H(\theta)$ is contained in
$[E_0(0)-\|H(\theta)-H(0)\|,E_0(0)+\|H(\theta)-H(0)\|]\cup [E_0(0)+\Delta(0)-\|H(\theta)-H(0)\|,\infty)$.
We have $\| H(\theta)-H(0) \| \leq \int_0^\theta \| \partial_r H(r) \| dr\leq |\theta|\cdot \sup_{r\in[0,\theta]}\|\partial_r H(r)\|$.
Thus, $\Delta(\theta) \ge \Delta(0) - 2|\theta| \cdot \sup_{r\in[0,\theta]} \|\partial_r H(r)\|$.
\end{proof}
\end{lemma}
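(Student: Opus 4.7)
The plan is to reduce this to a standard Weyl-type perturbation bound for self-adjoint operators, treating $H(\theta)$ as a perturbation of $H(0)$ whose size is controlled by the hypothesized differentiability.

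First I would estimate the operator-norm distance between $H(\theta)$ and $H(0)$ using the fundamental theorem of calculus for operator-valued functions: since $\theta \mapsto H(\theta)$ is differentiable,
\begin{equation}
\| H(\theta) - H(0) \| = \Bigl\| \int_0^\theta \partial_r H(r)\, dr \Bigr\| \leq |\theta|\, \sup_{r \in [0,\theta]} \| \partial_r H(r) \|.
\end{equation}
Denote this bound by $\varepsilon$. Second, I would invoke the Weyl-type statement that for self-adjoint operators $A,B$ with $\|A-B\|\le \varepsilon$, the spectrum of $B$ is contained in the $\varepsilon$-neighborhood of the spectrum of $A$. Since by assumption $\mathrm{spec}(H(0)) \subseteq \{E_0(0)\} \cup [E_0(0)+\Delta(0), \infty)$, applying this yields
\begin{equation}
\mathrm{spec}(H(\theta)) \subseteq \bigl[E_0(0)-\varepsilon,\, E_0(0)+\varepsilon\bigr] \cup \bigl[E_0(0)+\Delta(0)-\varepsilon,\, \infty\bigr).
\end{equation}
Third, reading off the gap: by hypothesis $H(\theta)$ has a nonzero spectral gap, so its ground energy must lie in the lower island and all excited energies in the upper one (the alternative would be inconsistent with the assumed gap for sufficiently small $\varepsilon$, and the bound is vacuous otherwise). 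The minimum separation between the two islands is $\Delta(0)-2\varepsilon$, which yields the claimed inequality.

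The only nontrivial ingredient is the Weyl inclusion, and I would justify it either by the min-max characterization $\lambda_k(H(\theta)) = \inf_{\dim V = k}\sup_{\psi \in V, \|\psi\|=1}\langle \psi, H(\theta)\psi\rangle$, which immediately gives $|\lambda_k(H(\theta))-\lambda_k(H(0))| \le \|H(\theta)-H(0)\|$, or equivalently by the resolvent estimate $\|(z-H(\theta))^{-1}\| \le (\mathrm{dist}(z,\mathrm{spec}(H(0)))-\varepsilon)^{-1}$ valid whenever the denominator is positive, showing that no new spectrum of $H(\theta)$ can appear farther than $\varepsilon$ from $\mathrm{spec}(H(0))$. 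There is no real obstacle here; the lemma is essentially a textbook statement of operator-norm continuity of the spectrum, and the proof is short. The slightly delicate point worth flagging is that the bound is only informative when $2\varepsilon < \Delta(0)$, and the monotone behavior in $\theta$ is automatic because we took the supremum of $\|\partial_r H(r)\|$ over the full interval $[0,\theta]$.
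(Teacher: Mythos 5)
Your proof is correct and follows essentially the same route as the paper's: bound $\|H(\theta)-H(0)\|$ by integrating $\|\partial_r H(r)\|$, then apply the Weyl-type spectral inclusion to localize $\mathrm{spec}(H(\theta))$ in two islands and read off the gap. The extra justification you offer for the Weyl inclusion and the remark about vacuity when $2\varepsilon \ge \Delta(0)$ are fine additions but not a different argument.
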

Therefore, the gap $\Delta(\theta)$ remains non-vanishing as long as
$|\theta| \cdot \sup_{\eta\in[0,\theta]} \|\partial_\eta H(\eta)\| < \Delta(0)/2$.

%% SUPER-OPERATOR \Sa(H,A) %%
\subsection{The super-operator $\Sa(H,A)$}
Now, combining ideas from \cite{hast-lsm} and \cite{osborne}, we introduce the following super-operator that will allow us to approximate the adiabatic evolution given by (\ref{gs:evol}):
\begin{equation}
\Sa(H,A)=\int_{-\infty}^{\infty} s_{\alpha}(t) \left[\int_0^t \tau_u^{H}(A) du \right] dt,
\end{equation}
with $\tau_u^{H}(A) = e^{iuH} A e^{-iuH}$ and 
\begin{equation}
s_{\alpha}(t) = \frac{1}{\alpha \sqrt{2\pi}}\, e^{-\frac{t^2}{2\alpha^2}}.
\end{equation} 
The free parameter $\alpha$ can be fine tuned later to give the best approximation possible.
Note that the following useful bound holds:
\be \label{naive_bnd:sa}
\|\Sa(H,A)\| \le 2 \int_0^{\infty} s_{\alpha}(t) \left(\int_0^t \left\|\tau_{u}^H(A)\right\| du\right) dt
= 2 \|A\|\, \int_{0}^{\infty} s_{\alpha}(t)\,t\, dt = \frac{2\alpha}{\sqrt{2\pi}}\, \|A\| \int_0^{\infty} e^{-s} ds = \frac{2\alpha}{\sqrt{2\pi}}\, \|A\|,
\ee
where we made the substitution $s = t^2/(2\alpha^2)$ in the previous to last equality.
Note also that for Hermitian $H$ and $A$, the operator $\Sa(H,A)$ is also Hermitian (which will be essential in error estimates later).

We define now the following state as an approximation to the true ground state of $H(\theta)$:
\begin{equation}\label{eq:partial_psi}
\partial_{\theta} \ket{\Phi_{\alpha}(\theta)}= i\, \Sa(H(\theta),\partial_{\theta} H(\theta)) \, \ket{\Phi_{\alpha}(\theta)},\quad \ket{\Phi_{\alpha}(0)} = \ket{\Psi_0(0)},
\end{equation}
where $\ket{\Psi_0(0)}$ is the ground state of $H(0)$, assuming that such a state is unique.
Defining the unitary $U_{\alpha}(\theta)$ by 
\begin{equation}\label{def:unitary_alpha}
\partial_{\theta} U_{\alpha}(\theta) = i\, S_a(H(\theta),\partial_\theta H(\theta)) \, U_{\alpha}(\theta), \quad U_{\alpha}(0) = \one,
\end{equation}
we may also write:
\begin{equation}\label{eq:phi_alpha}
\ket{\Phi_{\alpha}(\theta)}= U_{\alpha}(\theta) \ket{\Psi_0(0)}.
\end{equation}
We say that the state $\ket{\Phi_{\alpha}(\theta)}$ is the {\it quasi-adiabatic} evolution of $\ket{\Psi_{0}(0)}$ along the simple path $0 \rightarrow \theta$.

The following lemma gives precise bounds on the deviation of the above approximation from the true ground state of $H(\theta)$:
%LEMMA: ADIABATIC APPROXIMATION
\begin{lemma}[Adiabatic approximation]\label{lem:quasi} Assume that for some $s \ge 0$, $\{H(\theta)\}_{\theta \in [0,s]}$ is a differentiable path of gapped Hamiltonians, each with spectral gap $\Delta(\theta) >0$. For $\ket{\Phi_{\alpha}(\theta)}$ defined in (\ref{eq:partial_psi}) and $\ket{\Psi_0(\theta)}$ the ground state of $H(\theta)$ satisfying (\ref{parallel-transport}), the following bound holds for $\theta \in [0,s]$:
\begin{equation}\label{delta-small}
\left\|\ket{\Phi_{\alpha}(\theta)}-\ket{\Psi_0(\theta)}\right\| \le 2\, |\theta| \,\sup_{\eta \in [0,\theta]} \|\partial_{\theta'} H(\theta')|_{\theta'=\eta}\| \, \frac{e^{-\alpha^2\Delta^2(\theta)}}{\Delta(\theta)}.
\end{equation}
Moreover, we have the bound:
\begin{eqnarray}\label{delta-theta}
\|i\Sa(H(0),\partial_\theta H(\theta)_{\theta=0}) \ket{\Psi_0(0)} -\partial_{\theta} \ket{\Psi_0(\theta)}|_{\theta=0}\| \le \|\partial_{\theta} H(\theta)|_{\theta=0}\| \, \frac{e^{-\alpha^2\Delta^2(0)}}{\Delta(0)}.
\end{eqnarray}
\begin{proof}
The above estimates follow from considering the norm of the vector 
$$\ket{\delta_{\alpha}(\theta)}=\ket{\Phi_{\alpha}(\theta)}-\ket{\Psi_0(\theta)},$$
which satisfies $\ket{\delta_{\alpha}(0)}=0$ and
\begin{eqnarray}\label{delta}
\partial_{\theta} \ket{\delta_{\alpha}(\theta)} &=& i \, \Sa(\theta) \ket{\delta_{\alpha}(\theta)} + i \left(\Sa(\theta) - \mathcal{T}(\theta)\right)\,\ket{\Psi_0(\theta)},
\end{eqnarray}
where
\be
\Sa(\theta)=\Sa(H(\theta),\partial_{\theta} H(\theta)), \quad
i\, \mathcal{T}(\theta) = - \frac{\one - P_0(\theta)}{H(\theta)-E_0(\theta)}\, \partial_{\theta} H(\theta).
\ee
and the definition of $i\, \mathcal{T}(\theta)$ follows from (\ref{gs:evol}).

We note that for $\{\ket{\Psi_n(\theta)}\}_{n \ge 0}$ an eigenbasis of $H(\theta)$, we have:
\begin{eqnarray}
\braket{\Psi_n(\theta)}{i\,\Sa(\theta) \Psi_0(\theta)} &=& \left(i\, \int_{-\infty}^{\infty} s_{\alpha}(t) \left( \int_0^t e^{iu(E_n(\theta)-E_0(\theta))} du \right) dt\right) \braket{\Psi_n(\theta)}{\partial_{\theta} H(\theta) \Psi_0(\theta)} \nonumber\\
&=& \left(\int_{-\infty}^{\infty} s_{\alpha}(t) \left(e^{it(E_n(\theta)-E_0(\theta))} -1 \right) dt\right) \frac{\braket{\Psi_n(\theta)}{\partial_{\theta} H(\theta) \Psi_0(\theta)}}{E_n(\theta)-E_0(\theta)}\quad n\ge 1,\\
\braket{\Psi_0(\theta)}{i\, \Sa(\theta) \Psi_0(\theta)} &=& \left(i\, \int_{-\infty}^{\infty} s_{\alpha}(t) \, t \,dt\right) \braket{\Psi_0(\theta)}{\partial_{\theta} H(\theta) \Psi_0(\theta)} = 0,
\end{eqnarray}
where the last equality follows from $s_{\alpha}(t)$ being an even function. Now, setting 
\begin{equation}
\ket{\Psi'(\theta)} = i(\Sa(\theta)-\mathcal{T}(\theta))\ket{\Psi_0(\theta)},
\end{equation}
we have that :
\begin{eqnarray}
\braket{\Psi_n(\theta)}{\Psi'(\theta)} &=& \left(\int_{-\infty}^{\infty} s_{\alpha}(t) \, e^{it(E_n(\theta)-E_0(\theta))} dt\right) \frac{\braket{\Psi_n(\theta)}{\partial_{\theta} H(\theta) \Psi_0(\theta)}}{E_n(\theta)-E_0(\theta)}, \quad n \ge 1,\\
\braket{\Psi_0(\theta)}{\Psi'(\theta)} &=& 0,
\end{eqnarray}
where $E_n(\theta)$ is the eigenvalue of $\ket{\Psi_n(\theta)}$.
Recalling that $E_n(\theta)-E_0(\theta) \ge \Delta(\theta)$, for $n \ge 1$, the above equations and the Fourier transform of the Gaussian $s_{\alpha}(t)$ readily imply:
\begin{equation}\label{psi-small}
\|\ket{\Psi'(\theta)}\| \le \|\partial_{\theta'} H(\theta')|_{\theta'=\theta}\|\, \frac{e^{-\alpha^2\Delta^2(\theta)}}{\Delta(\theta)},
\end{equation}
where we used the identity $\|A \ket{\Psi}\|^2 = \sum_{n \ge 0} |\braket{\Psi_n(\theta)}{A \Psi}|^2$ and the norm inequality $\|A \ket{\Psi}\| \le \|A\|$ for unit vectors $\ket{\Psi}$.
Finally, setting $$\Delta_{\alpha}(\theta) = \pure{\delta_{\alpha}(\theta)},$$ and recalling the definition of $U_{\alpha}(\theta)$ in (\ref{def:unitary_alpha}) we have:
\begin{equation}
\Delta_{\alpha}(\theta) = U_{\alpha}(\theta) \left(\int_0^\theta U^{\dagger}_{\alpha}(t) \left(\ketbra{\Psi'(t)}{\delta_{\alpha}(t)} + \ketbra{\delta_{\alpha}(t)}{\Psi'(t)}\right) U_{\alpha}(t) \, dt \right) U^{\dagger}_{\alpha}(\theta),
\end{equation}
which one may verify by differentiating with respect to $\theta$ and then using (\ref{delta}). Now,
taking the trace on both sides, we get:
\begin{eqnarray}
\|\delta_{\alpha}(\theta)\|^2 &=& \int_0^{\theta} \left(\braket{\delta_{\alpha}(t)}{\Psi'(t)}+\braket{\Psi'(t)}{\delta_{\alpha}(t)} \right) dt\\
&\le& 2 |\theta| \, \sup_{t \in [0,\theta]} \|\delta_{\alpha}(t)\| \, \sup_{t\in[0,\theta]} \|\Psi'(t) \|
\end{eqnarray}
Choosing $t_{\theta} \in [0,\theta]: \|\delta_{\alpha}(t_{\theta})\| = \sup_{t \in [0,\theta]} \|\delta_{\alpha}(t)\|$, the above bound implies:
\begin{equation}
 \|\delta_{\alpha}(t_{\theta})\|^2 \le 2 |t_{\theta}| \, \|\delta_{\alpha}(t_{\theta})\| \, \sup_{t\in[0,t_{\theta}]} \|\Psi'(t) \| \implies \|\delta_{\alpha}(\theta)\| \le \|\delta_{\alpha}(t_{\theta})\| \le 2 |t_{\theta}| \, \sup_{t\in[0,t_{\theta}]} \|\Psi'(t) \|
\end{equation}
Combining the above bound with (\ref{psi-small}), we immediately get (\ref{delta-small}).
Now, note that (\ref{delta}) combined with (\ref{psi-small}) imply (\ref{delta-theta}) as follows:
\begin{equation}
\|\partial_{\theta} \ket{\delta_{\alpha}(\theta)}|_{\theta=0}\| = \|\ket{\Psi'(0)}\| \le \|\partial_{\theta} H(\theta)|_{\theta=0}\| \, \frac{e^{-\alpha^2\Delta^2(0)}}{\Delta(0)},
\end{equation}
where we used (\ref{naive_bnd:sa}) in the estimate above. This completes the proof.
\end{proof}
\end{lemma}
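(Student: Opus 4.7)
The plan is to introduce the error vector $\ket{\delta_\alpha(\theta)} := \ket{\Phi_\alpha(\theta)} - \ket{\Psi_0(\theta)}$ (with $\ket{\Psi_0(\theta)}$ satisfying the parallel-transport gauge) and to control it via a Duhamel-type argument. Subtracting (\ref{gs:evol}) from (\ref{eq:partial_psi}) gives the inhomogeneous linear equation
\[
\partial_\theta \ket{\delta_\alpha(\theta)} \;=\; i\,\Sa(\theta)\,\ket{\delta_\alpha(\theta)} \;+\; i\bigl(\Sa(\theta) - \mathcal{T}(\theta)\bigr)\ket{\Psi_0(\theta)},
\]
where $i\mathcal{T}(\theta) := -(\one - P_0(\theta))(H(\theta)-E_0(\theta))^{-1}\partial_\theta H(\theta)$ is the exact adiabatic generator read off from (\ref{gs:evol}), and $\ket{\delta_\alpha(0)}=0$. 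Because $\Sa(\theta)$ is Hermitian, the homogeneous part $i\Sa$ generates a unitary flow $U_\alpha(\theta)$, so only the forcing term $\ket{\Psi'(\theta)}:=i(\Sa-\mathcal{T})\ket{\Psi_0(\theta)}$ can actually grow the error.

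The technical heart is the bound on $\|\ket{\Psi'(\theta)}\|$, which I would obtain by expanding in the instantaneous eigenbasis $\{\ket{\Psi_n(\theta)}\}$ of $H(\theta)$. Computing the inner integral yields $\int_0^t e^{iu(E_n-E_0)}du = (e^{it(E_n-E_0)}-1)/(i(E_n-E_0))$; averaging against $s_\alpha$ and using the Fourier-transform identity $\int s_\alpha(t)\, e^{itE}\,dt = e^{-\alpha^2 E^2/2}$ shows that, on each excited state, $i\Sa$ contributes $(e^{-\alpha^2(E_n-E_0)^2/2}-1)/(E_n-E_0)$ times $\bra{\Psi_n}\partial_\theta H\ket{\Psi_0}$, while $i\mathcal{T}$ contributes $-1/(E_n-E_0)$ times the same. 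The two almost cancel, leaving only the Gaussian remainder $e^{-\alpha^2(E_n-E_0)^2/2}/(E_n-E_0)$, which on the gap is bounded by $e^{-\alpha^2\Delta^2(\theta)}/\Delta(\theta)$ (up to the customary $1/2$ in the exponent depending on how one handles the squared-matrix-element sum). The $n=0$ diagonal piece vanishes separately: for $\Sa$ because its integrand is odd in $t$, and for $\mathcal{T}$ because of the $(\one-P_0)$ projector in its definition---this projector being available precisely because of the parallel-transport gauge (\ref{parallel-transport}). A Parseval summation over $n$ together with $\|(\partial_\theta H)\ket{\Psi_0}\|\le \|\partial_\theta H\|$ then yields the announced estimate on $\|\ket{\Psi'(\theta)}\|$.

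To finish the main bound, I would integrate the linear ODE by variation of parameters, $\ket{\delta_\alpha(\theta)} = U_\alpha(\theta)\int_0^\theta U_\alpha(t)^{\dagger}\ket{\Psi'(t)}\,dt$, yielding $\|\ket{\delta_\alpha(\theta)}\|\le |\theta|\,\sup_{t\in[0,\theta]}\|\ket{\Psi'(t)}\|$ and hence (\ref{delta-small}); equivalently, one can differentiate $\pure{\delta_\alpha(\theta)}$, take traces, and use self-consistency to pin down the factor of $2$ in front of $|\theta|$. The infinitesimal bound (\ref{delta-theta}) is then immediate: since $\ket{\delta_\alpha(0)}=0$, the ODE at $\theta=0$ reduces to $\partial_\theta\ket{\delta_\alpha}|_{\theta=0}=\ket{\Psi'(0)}$, whose norm has just been controlled.

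The main obstacle is recognizing, and then executing, the cancellation between $i\Sa$ and $i\mathcal{T}$ at the matrix-element level: one must see $i\Sa$ as a Gaussian-smoothed version of the pseudoinverse of $(H-E_0)$ on the orthogonal complement of the ground state, identify the ``$-1$'' coming out of the inner $du$-integral as exactly what $i\mathcal{T}$ subtracts, and verify that both operators annihilate the $n=0$ mode. Once this structural cancellation is in hand, the remainder is routine Duhamel/Gronwall bookkeeping and Parseval summation.
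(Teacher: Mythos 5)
Your proposal reproduces the paper's argument exactly: introduce $\ket{\delta_\alpha(\theta)}=\ket{\Phi_\alpha(\theta)}-\ket{\Psi_0(\theta)}$, derive the inhomogeneous ODE with the Hermitian (hence unitary-generating) $\Sa(\theta)$ as homogeneous part, expand the forcing vector $\ket{\Psi'(\theta)}=i(\Sa-\mathcal{T})\ket{\Psi_0}$ in the instantaneous eigenbasis so the ``$-1$'' from the inner $du$-integral cancels $\mathcal{T}$ and only the Gaussian Fourier tail survives, bound it by Parseval plus the gap, and integrate with $\ket{\delta_\alpha(0)}=0$ to get (\ref{delta-small}), with (\ref{delta-theta}) read off at $\theta=0$. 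Both of your side remarks are on point: direct variation of parameters gives $\|\delta_\alpha(\theta)\|\le\int_0^{\theta}\|\Psi'(t)\|\,dt\le|\theta|\sup\|\Psi'\|$, a factor of $2$ better than the paper's density-matrix trace argument; and the Parseval summation actually yields $\|\Psi'(\theta)\|\le\|\partial_\theta H\|\,e^{-\alpha^2\Delta^2(\theta)/2}/\Delta(\theta)$, so the exponent in (\ref{psi-small}) as printed is missing the customary $1/2$.
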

%% INTRODUCTION OF LOOP UNITARIES%%
\subsection{Quasi-adiabatic unitaries and loop operators}
Since we will be looking at quasi-adiabatic evolutions corresponding to Hamiltonians $H(\theta_x,\theta_y)$ with domain embedded in $\R^2$ (actually, $(\theta_x,\theta_y) \in 2\pi \times 2\pi$), we introduce here the unitaries describing the quasi-adiabatic evolution of $\ket{\Psi_0}=\ket{\Psi_0(0,0)}$ around a closed loop in flux-space.
We begin by defining below the generators of the quasi-adiabatic dynamics: 
\begin{eqnarray}\label{def:approx_x}
\DX(\theta_x,\theta_y) &=& \Sa(H(\theta_x,\theta_y),\; \partial_{\theta} H(\theta,\theta_y)|_{\theta=\theta_x})\\
\DY(\theta_x,\theta_y) &=& \Sa(H(\theta_x,\theta_y),\; \partial_{\theta} H(\theta_x,\theta)|_{\theta=\theta_y}).\label{def:approx_y}
\end{eqnarray}
Note that for notational purposes we suppress the dependence of these operators on $\alpha$.
We continue with the unitaries corresponding to evolution through changes in the $\theta_x$ and $\theta_y$ fluxes, respectively:
\begin{eqnarray}\label{def:unitary_X}
\partial_{r} U_{X}(\theta_x,\theta_y,r) &=& i \, \DX(\theta_x+r,\theta_y)\, U_{X}(\theta_x,\theta_y,r),\quad U_{X}(\theta_x,\theta_y,0) = \one\\
\partial_{r} U_{Y}(\theta_x,\theta_y,r) &=& i \, \DY(\theta_x,\theta_y+r)\, U_{Y}(\theta_x,\theta_y,r),\quad U_{Y}(\theta_x,\theta_y,0) = \one.\label{def:unitary_Y}
\end{eqnarray}
It will be useful to note here the following composition rule for $U_X(\theta_x,\theta_y,r)$, $U_Y(\theta_x,\theta_y,r)$:
\begin{eqnarray}
U_X(\theta_x+r,\theta_y,s) \, U_X(\theta_x,\theta_y,r) &=& U_X(\theta_x,\theta_y,r+s),\\
U_Y(\theta_x,\theta_y+r,s) \, U_Y(\theta_x,\theta_y,r) &=& U_Y(\theta_x,\theta_y,r+s),
\end{eqnarray}
which is easily verified upon differentiating both sides with respect to $s$ and is equivalent to evolving for ``time" $r+s$ in the $X$ and $Y$ directions, respectively, by evolving first for ``time'' $r$ and then for ``time'' $s$.

Using the above unitaries, we construct the following useful evolution operators:
\begin{eqnarray}
V[(\theta_x,\theta_y)\rightarrow (\phi_x,\phi_y)] &=& U_X(\theta_x,\phi_y,\phi_x-\theta_x)\, U_Y(\theta_x,\theta_y,\phi_y-\theta_y)\label{def:evol_ur}\\
W[(\theta_x,\theta_y)\rightarrow (\phi_x,\phi_y)] &=& U_Y(\phi_x,\theta_y,\phi_y-\theta_y)\, U_X(\theta_x,\theta_y,\phi_x-\theta_x)\label{def:evol_ru}\\
V_{\circlearrowleft}(\theta_x, \theta_y ,r) &=& V^{\dagger}[(\theta_x,\theta_y)\rightarrow(\theta_x+r, \theta_y+r)]\,W[(\theta_x,\theta_y)\rightarrow (\theta_x+r,\theta_y+r)] \label{def:unitary_loop_R}
\end{eqnarray}
Note that $V[(\theta_x,\theta_y),(\phi_x,\phi_y)]$ can be thought of as evolving a state along the path $\Gamma_V:(\theta_x,\theta_y) \rightarrow (\theta_x,\phi_y) \rightarrow (\phi_x,\phi_y)$ in parameter space, while $W[(\theta_x,\theta_y),(\phi_x,\phi_y)]$ would evolve a state along the path $\Gamma_W:(\theta_x,\theta_y) \rightarrow (\phi_x,\theta_y) \rightarrow (\phi_x,\phi_y)$. Finally, $V_{\circlearrowleft}(\theta_x, \theta_y ,r)$ is equivalent to evolving counter-clockwise around a square of side $r$ and origin $(\theta_x,\theta_y)$.

We introduce now the following important family of states:
\begin{eqnarray}\label{def:loop-states}
\ket{\Psi_{\circlearrowleft}(r)} = V_{\circlearrowleft}(0, 0 , r) \ket{\Psi_0(0,0)} = U^{\dagger}_{Y}(0, 0 , r)\,U^{\dagger}_{X}(0, r , r)\, U_{Y}(r , 0 , r)\, U_{X}(0, 0 , r) \ket{\Psi_0(0,0)}
\end{eqnarray}
These states describe the quasi-adiabatic evolution of the initial ground state $\ket{\Psi_0(0,0)}$ around a square of size $r$, starting at the origin in flux-space. 
%% COROLLARY: PHASE ESTIMATE %%
The estimate (\ref{delta-small}) directly implies the following phase estimate for an adiabatic evolution around a closed path 
\be\label{def:loop}
\Lambda(r) = (0,0)\rightarrow (r,0)\rightarrow(r,r)\rightarrow(0,r)\rightarrow(0,0),
\ee
where the arrows represent straight lines between two points in flux-space: 
\begin{cor}\label{cor:phase-estimate}
Assume that the two-parameter path of differentiable Hamiltonians $\{H(\theta_x,\theta_y)\}_{(\theta_x,\theta_y) \in \Lambda(r)}$ maintains a uniform lower bound $\Delta > 0$ on the spectral gap.
Then, for $\ket{\Psi_{\circlearrowleft}(r)}$ defined in (\ref{def:loop-states}) and $\ket{\Psi_{0}(\theta_x,\theta_y)}$ the ground state of $H(\theta_x,\theta_y)$, the following phase estimate holds:
\be\label{phase-estimate}
\left|\braket{\Psi_{0}(0,0)}{\Psi_{\circlearrowleft}(r)} - e^{i\phi(r)} \right| \le 8\, r \, D_H(r) \, \frac{e^{-\alpha^2\Delta^2}}{\Delta},
\ee
where
\begin{eqnarray}
\phi(r) &=& \int_0^r d\theta_x \, \Im \left\{\braket{\Psi_0(\theta_x,r)}{\partial_{\theta_x} \Psi_0(\theta_x,r)} - \braket{\Psi_0(\theta_x,0)}{\partial_{\theta_x} \Psi_0(\theta_x,0)}\right\} \nonumber\\
&-& \int_0^r d\theta_y \, \Im \left\{\braket{\Psi_0(r,\theta_y)}{\partial_{\theta_y} \Psi_0(r,\theta_y)} - \braket{\Psi_0(0,\theta_y)}{\partial_{\theta_y} \Psi_0(0,\theta_y)}\right\} \label{eq:phase_1}\\
&=& 2 \int_0^{r} d\theta_x \int_0^r d\theta_y \, \Im\left\lbrace \braket{\partial_{\theta_y} \Psi_0(\theta_x,\theta_y)}{\partial_{\theta_x} \Psi_0(\theta_x,\theta_y)}\right\rbrace, \label{eq:phase_2}\\
D_H(r) &=& \sup_{(\theta_x,\theta_y) \in \Lambda(r)}\left\{\|\partial_{\theta} H(\theta,\theta_y)|_{\theta=\theta_x}\|,\|\partial_{\theta} H(\theta_x,\theta)|_{\theta=\theta_y}\|\right\}. \label{def:D_H}
\end{eqnarray}
and $\Im\{\cdot\}$ stands for the imaginary part of a complex number.
\begin{proof}
Define the following four vectors, which correspond to deviations of the quasi-adiabatic evolution at the end of each leg in $\Lambda(r)$:
\begin{eqnarray}
\ket{\delta_\alpha(r,0)} &=& U_{X}(0, 0 , r) \ket{\Psi_0(0,0)} -  e^{i\phi_1(r)}\ket{\Psi_0(r,0)} ,\quad i\phi_1(r) = -\int_0^{r} \braket{\Psi_0(\theta_x,0)}{\partial_{\theta_x} \Psi_0(\theta_x,0)} d\theta_x\\
\ket{\delta_\alpha(r,r)} &=& U_{Y}(r, 0 , r) \ket{\Psi_0(r,0)} -  e^{i\phi_2(r)}\ket{\Psi_0(r,r)} ,\quad i\phi_2(r) = -\int_0^{r} \braket{\Psi_0(r,\theta_y)}{\partial_{\theta_y} \Psi_0(r,\theta_y)} d\theta_y\\
\ket{\delta_\alpha(0,r)} &=& U^{\dagger}_{X}(0, r , r) \ket{\Psi_0(r,r)} -  e^{i\phi_3(r)}\ket{\Psi_0(0,r)} ,\quad i\phi_3(r) = \int_0^{r} \braket{\Psi_0(\theta_x,r)}{\partial_{\theta_x} \Psi_0(\theta_x,r)} d\theta_x\\
\ket{\delta_\alpha(0,0)} &=& U^{\dagger}_{Y}(0, 0 , r) \ket{\Psi_0(0,r)} -  e^{i\phi_4(r)}\ket{\Psi_0(0,0)} ,\quad i\phi_4(r) = \int_0^{r} \braket{\Psi_0(0,\theta_y)}{\partial_{\theta_y} \Psi_0(0,\theta_y)} d\theta_y.
\end{eqnarray}
Then, we have the following decomposition:
\begin{eqnarray}
\ket{\Psi_{\circlearrowleft}(r)} 
&=& e^{i\phi_1(r)+i\phi_2(r)+i\phi_3(r)+i\phi_4(r)} \ket{\Psi_0(0,0)} + e^{i\phi_1(r)+i\phi_2(r)+i\phi_3(r)} \ket{\delta_\alpha(0,0)} \nonumber\\
&+& e^{i\phi_1(r)+i\phi_2(r)} U^{\dagger}_{Y}(0, 0 , r) \ket{\delta_\alpha(0,r)} + e^{i\phi_1(r)} U^{\dagger}_{Y}(0, 0 , r)\,U^{\dagger}_{X}(0, r , r)\,\ket{\delta_\alpha(r,r)} \nonumber\\
&+& U^{\dagger}_{Y}(0, 0 , r)\,U^{\dagger}_{X}(0, r , r)\,U_{Y}(r, 0 , r)\,\ket{\delta_\alpha(r,0)},
\end{eqnarray}
which implies:
\be
\left|\braket{\Psi_0(0,0)}{\Psi_{\circlearrowleft}(r)} - e^{i\phi_1(r)+i\phi_2(r)+i\phi_3(r)+i\phi_4(r)}\right| \le \|\delta_\alpha(0,0)\|+\|\delta_\alpha(0,r)\|+\|\delta_\alpha(r,r)\|+\|\delta_\alpha(r,0)\|.
\ee
Now, from Stokes' Theorem we have:
\be
i\,\phi(r) = i\phi_1(r)+i\phi_2(r)+i\phi_3(r)+i\phi_4(r) = 2i\int_0^{r} d\theta_x \int_0^r d\theta_y \, \Im\left\lbrace \braket{\partial_{\theta_y} \Psi_0(\theta_x,\theta_y)}{\partial_{\theta_x} \Psi_0(\theta_x,\theta_y)}\right\rbrace,
\ee
so it remains to show that the vectors $\delta_a$ have small norm. But, this follows immediately from (\ref{delta-small}) and the fact that the phases $\{\phi_i(r)\}_{1\le i \le 4}$ are chosen so that the respective ground states satisfy the parallel transport condition (\ref{parallel-transport}). In particular,  applying (\ref{delta-small}) successively to the gapped Hamiltonians $H(r,0),\, H(r,r), H(0,r)$ and $H(0,0)$, we have the bounds:
\begin{eqnarray}
\|\delta_\alpha(r,0)\| &\le& 2\, r \,\sup_{\eta \in [0,r]} \|\partial_{\theta} H(\theta,0)|_{\theta=\eta}\| \, \frac{e^{-\alpha^2\Delta^2}}{\Delta}\\
\|\delta_\alpha(r,r)\| &\le& 2\, r \,\sup_{\eta \in [0,r]} \|\partial_{\theta} H(r,\theta)|_{\theta=\eta}\| \, \frac{e^{-\alpha^2\Delta^2}}{\Delta}\\
\|\delta_\alpha(r,0)\| &\le& 2\, r \,\sup_{\eta \in [0,r]} \|\partial_{\theta} H(\theta,r)|_{\theta=\eta}\| \, \frac{e^{-\alpha^2\Delta^2}}{\Delta}\\
\|\delta_\alpha(0,0)\| &\le& 2\, r \,\sup_{\eta \in [0,r]} \|\partial_{\theta} H(0,\theta)|_{\theta=\eta}\| \, \frac{e^{-\alpha^2\Delta^2}}{\Delta}.
\end{eqnarray}
which completes the proof.
\end{proof}
\end{cor}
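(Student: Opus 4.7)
The plan is to decompose the loop unitary $V_{\circlearrowleft}(0,0,r)$ into its four constituent leg unitaries $U_X(0,0,r)$, $U_Y(r,0,r)$, $U_X^\dagger(0,r,r)$, $U_Y^\dagger(0,0,r)$, and at each leg compare the quasi-adiabatic evolution to the true parallel-transported ground state of the intermediate Hamiltonian. Concretely, I would define four ``deviation'' vectors $\ket{\delta_\alpha(r,0)}, \ket{\delta_\alpha(r,r)}, \ket{\delta_\alpha(0,r)}, \ket{\delta_\alpha(0,0)}$, one at each corner visited by the loop, each given by the difference between the leg's quasi-adiabatic output and $e^{i\phi_j(r)}\ket{\Psi_0(\cdot,\cdot)}$, where the scalar phase $\phi_j(r)$ is chosen exactly so that the reference ground state satisfies the parallel transport condition (\ref{parallel-transport}) along that leg. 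Since the gap is uniformly bounded below by $\Delta$ on $\Lambda(r)$, Lemma \ref{lem:quasi} applies leg-by-leg and bounds each $\|\delta_\alpha(\cdot,\cdot)\| \le 2r\, D_H(r)\, e^{-\alpha^2\Delta^2}/\Delta$.

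Next I would expand $V_{\circlearrowleft}(0,0,r)\ket{\Psi_0(0,0)}$ by inserting, after each leg unitary, the identity in the form ``(leg unitary on $\ket{\Psi_0}$) = (phased reference $\ket{\Psi_0}$) + $\ket{\delta_\alpha}$.'' Telescoping yields an expression of the form
\begin{equation}
\ket{\Psi_{\circlearrowleft}(r)} = e^{i(\phi_1+\phi_2+\phi_3+\phi_4)(r)}\ket{\Psi_0(0,0)} + \sum_{j=1}^{4} (\text{product of unitaries acting on a }\ket{\delta_\alpha}).
\end{equation}
Taking the inner product with $\bra{\Psi_0(0,0)}$ and applying the triangle inequality together with unitarity (which preserves the norm of each $\ket{\delta_\alpha}$) gives the claimed bound $8 r D_H(r) e^{-\alpha^2\Delta^2}/\Delta$. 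The total accumulated phase $\phi(r) := \phi_1(r)+\phi_2(r)+\phi_3(r)+\phi_4(r)$ is precisely the contour integral of the Berry connection $i\braket{\Psi_0}{d\Psi_0}$ around $\Lambda(r)$, written out as the four line integrals in (\ref{eq:phase_1}).

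Finally, to obtain the double-integral form (\ref{eq:phase_2}) I would apply Stokes' theorem to the Berry connection one-form on the parameter square $[0,r]^2$, noting that the exterior derivative of the connection is $2\,\Im\braket{\partial_{\theta_y}\Psi_0}{\partial_{\theta_x}\Psi_0}\,d\theta_x \wedge d\theta_y$; the real diagonal contributions $\braket{\Psi_0}{\partial_{\theta_j}\Psi_0}$ are purely imaginary, and the factor of $2$ comes from antisymmetrization. The likely trouble spot is bookkeeping the phase signs across the two backward legs ($U_X^\dagger$ and $U_Y^\dagger$), which flip the sign of the parallel-transport integrand, and making sure that the supremum in the definition of $D_H(r)$ genuinely dominates each per-leg supremum of $\|\partial_\theta H\|$ used when invoking Lemma \ref{lem:quasi}. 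Both points are more bureaucratic than conceptual; the substantive input is simply the leg-wise adiabatic approximation lemma combined with Stokes' theorem.
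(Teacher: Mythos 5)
Your proposal matches the paper's proof essentially step for step: the same four corner deviation vectors with parallel-transport phases, the same telescoping decomposition of $\ket{\Psi_{\circlearrowleft}(r)}$ into a phased reference ground state plus four unitary-conjugated deviations, the same leg-by-leg invocation of the adiabatic approximation lemma (Lemma \ref{lem:quasi}) under the uniform gap hypothesis, and the same final appeal to Stokes' theorem to pass from the four Berry-connection line integrals to the double integral in (\ref{eq:phase_2}). Your flagged ``trouble spots'' (sign flips on the two backward legs, checking $D_H(r)$ dominates each per-leg supremum) are indeed the only bookkeeping points, and the paper handles them exactly as you anticipate.
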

Note that the quantity $\Im \left \lbrace \braket{\partial_{\theta_y} \Psi_0(\theta_x,\theta_y)}{\partial_{\theta_x} \Psi_0(\theta_x,\theta_y)}\right \rbrace$ is gauge-invariant; in particular, it remains constant under phase changes $\ket{\Psi'_0(\theta_x,\theta_y)} = e^{i\,f(\theta_x,\theta_y)} \ket{\Psi_0(\theta_x,\theta_y)}$, with $f(\theta_x,\theta_y)$ any real, differentiable function of $\theta_x$ and $\theta_y$.

%% STUDYING THE PHASE \phi(r) %%
We now take a closer look at the phase $\phi(r)$ accumulated during the adiabatic evolution of $\ket{\Psi_0(0,0)}$ along the closed path $\Lambda(r)$ defined in (\ref{def:loop}).
\begin{cor}\label{cor:phi(r)}
Let $\phi(r)$ be the phase defined in (\ref{eq:phase_1}-\ref{eq:phase_2}). Then, the following bound holds for $r > 0$:
\begin{eqnarray}
\label{bound:phase_partials}
&&\left|\phi(r)/r^2 - 2 \Im \left\lbrace\braket{\partial_{\theta_y}\Psi_0(0,\theta_y)_{\theta_y=0}}{\partial_{\theta_x}\Psi_0(\theta_x,0)_{\theta_x=0}}\right\rbrace\right| \nonumber \\
&\le& 
\left[\sup_{s_3\in[0,r]} \left(\frac{1}{2}\|[\partial_s g(s,s)]_{s=s_3}\| + \frac{r}{4!}\cdot \sup_{s_4\in[0,r]}\|\partial^2_s [g(s,s_4)+g(s_4,s)]_{s=s_3}\|\right)\right] \cdot r,
\end{eqnarray}
with $g(s,s') = 2\Im \left\lbrace \braket{\partial_{s''} \Psi_0(s,s'')_{s''=s'}}{\partial_{s''} \Psi_0(s'',s')_{s''=s}}\right\rbrace$.
\begin{proof}
The above bound follows from the Taylor expansion of $\phi(r)$ up to order $3$. We have from (\ref{eq:phase_2}) that $\phi(0) = \partial_r \phi(r)|_{r=0} = 0$ and $\partial^2_r \phi(r)|_{r=0} = 4 \Im \left\lbrace\braket{\partial_{\theta_y}\Psi_0(0,\theta_y)}{\partial_{\theta_x}\Psi_0(\theta_x,0)}_{\theta_x=\theta_y=0}\right\rbrace$. This follows from the Taylor expansion of the general integral operator $K(r) = \int_0^r d\theta_x \int_0^r d\theta_y \, g(\theta_x,\theta_y)$, where $g(\theta_x,\theta_y)$ is doubly differentiable for $\theta_x, \theta_y \in [0,r]$. In particular, we have 
\be
\left[\partial^3_s K(s)\right]_{s=s_3} = 3 [\partial_s g(s,s)]_{s=s_3} + \int_0^{s_3} ds_4 \, [\partial^2_s (g(s,s_4)+g(s_4,s))]_{s=s_3}
\ee
which follows from higher order partials of $\left(\partial_s K(s)\right)_{s=s_1} =  \int_0^{s_1} ds [g(s,s_1)+g(s_1,s)]$. Then, expanding around $r=0$ up to third order, we get:
\be
K(r) = g(0,0)\, r^2 + \int_0^r ds_1\, \int_0^{s_1} ds_2\, \int_0^{s_2} ds_3 \, \left[3 (\partial_s g(s,s))_{s=s_3} + \int_0^{s_3} ds_4 \, [\partial^2_s (g(s,s_4)+g(s_4,s))]_{s=s_3}\right].
\ee
The above expansion implies
\be
|K(r)/r^2 - g(0,0)| \le \left[\sup_{s_3\in[0,r]} \left(\frac{1}{2}\|[\partial_s g(s,s)]_{s=s_3}\| + \frac{r}{4!}\cdot \sup_{s_4\in[0,r]}\|\partial^2_s [g(s,s_4)+g(s_4,s)]_{s=s_3}\|\right)\right] \cdot r.
\ee
In our case, $g(s,s') = 2\Im \left\lbrace \braket{\partial_{s''} \Psi_0(s,s'')_{s''=s'}}{\partial_{s''} \Psi_0(s'',s')_{s''=s}}\right\rbrace$.
\end{proof}
\end{cor}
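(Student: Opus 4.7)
The plan is to recognize that (\ref{eq:phase_2}) expresses $\phi(r)$ as $K(r)=\int_0^r\!\!\int_0^r g(\theta_x,\theta_y)\,d\theta_x d\theta_y$ with $g$ as defined in the statement, and then simply Taylor-expand $K(r)$ about $r=0$ to second order with a third-order integral remainder. The desired bound is essentially a remainder estimate after dividing through by $r^2$.

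First I would compute the low derivatives of $K$. Since $K(s) = \int_0^s\!\!\int_0^s g(a,b)\,da\,db$, a direct application of the Leibniz rule gives $[\partial_s K(s)]_{s=s_1} = \int_0^{s_1}\bigl[g(s,s_1)+g(s_1,s)\bigr]\,ds$, and differentiating once more yields $[\partial_s^2 K(s)]_{s=s_2} = 2\,g(s_2,s_2) + \int_0^{s_2}\bigl[\partial_{s_2}g(s,s_2)+\partial_{s_2}g(s_2,s)\bigr]ds$. Evaluating at $0$ gives $K(0)=\partial_sK(0)=0$ and $\partial_s^2K(0)=2g(0,0)$, which corresponds to $g(0,0)\cdot r^2$ in the expansion. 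Differentiating a third time (collecting boundary contributions from the upper limits and the integrand together) yields the expression stated in the excerpt, namely $[\partial_s^3 K(s)]_{s=s_3} = 3[\partial_s g(s,s)]_{s=s_3} + \int_0^{s_3}\bigl[\partial_s^2(g(s,s_4)+g(s_4,s))\bigr]_{s=s_3}\,ds_4$.

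Next I would use the integral form of Taylor's theorem,
\begin{equation}
K(r) = g(0,0)\,r^2 + \int_0^r\!\!\int_0^{s_1}\!\!\int_0^{s_2} [\partial_s^3 K(s)]_{s=s_3}\, ds_3\, ds_2\, ds_1,
\end{equation}
and substitute the formula above for $[\partial_s^3 K(s)]_{s=s_3}$. Bounding the two pieces by their suprema on $[0,r]$ and using $\int_0^r\!\int_0^{s_1}\!\int_0^{s_2}ds_3\,ds_2\,ds_1 = r^3/6$ for the first piece and $\int_0^r\!\int_0^{s_1}\!\int_0^{s_2} s_3\,ds_3\,ds_2\,ds_1 = r^4/24$ for the second piece (owing to the extra $s_3$ factor from the inner integration range) gives
\begin{equation}
|K(r)-g(0,0)\,r^2| \le \frac{r^3}{2}\sup_{s_3\in[0,r]}\|[\partial_s g(s,s)]_{s=s_3}\| + \frac{r^4}{4!}\sup_{s_3,s_4\in[0,r]}\|[\partial_s^2(g(s,s_4)+g(s_4,s))]_{s=s_3}\|.
\end{equation}
Dividing both sides by $r^2$ and combining the two suprema yields exactly (\ref{bound:phase_partials}).

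There is no real obstacle here; the argument is a routine second-order Taylor remainder estimate, and the only subtle bookkeeping is in computing the third derivative of $K$, where one must track both boundary terms from the outer integrals (producing the coefficient $3$ in front of $\partial_s g(s,s)$) and interior terms (producing the remaining single integral over $s_4$). Gauge-invariance of $g$ noted after Corollary \ref{cor:phase-estimate} guarantees that $g$ is a well-defined differentiable function of $(\theta_x,\theta_y)$ so that the manipulations above are justified.
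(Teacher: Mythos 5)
Your proposal is correct and follows essentially the same route as the paper: identifying $\phi(r)$ with $K(r)=\int_0^r\int_0^r g$, computing $K,\,K',\,K''$ at the origin, deriving the same third-derivative formula (with the coefficient $3$ coming from combining the two boundary contributions with the interior term), and bounding the integral-form Taylor remainder by pulling out the suprema, yielding the factors $r^3/6\cdot 3 = r^3/2$ and $r^4/24$. The only cosmetic difference is that you take the two suprema separately while the paper takes a single outer supremum of the combined integrand, but these give the same estimate.
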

%% LOCALIZING THE QUASI-ADIABATIC EVOLUTION %%
\section{Localizing the operator $\Sa(H,A)$}
We turn our focus in showing that the operator $\Sa(H, A_Z)$, with $H = \sum_{X\subset T} \Phi(X)$ and $A_Z \in \A_Z$, is effectively localized around $Z$; the only requirement for the interactions $\Phi(X)$ of the Hamiltonian $H$ is that they have finite range $R$ and finite strength $J$. For $M \ge R$, let us write:
\begin{equation}\label{def:H_M}
H_M(Z) = \sum_{X\subset Z(M)} \Phi(X),\quad \mbox{with}\quad Z(M) = \{ X\subset T: d(X,Z) < M-R\},
\end{equation}
for the interactions restricted to within a radius $M$ of the set $Z$. Then, for any operator $A_Z$ with support on $Z\subset T$, we define:
\begin{equation}\label{defn:evolution_dif}
\Delta_k^{u}(H,A_Z) = \tau_u^{H_k(Z)}(A_Z) - \tau_u^{H_{k-1}(Z)}(A_Z), \quad \mbox{for } k > R, \quad
\Delta_R^{u}(H,A_Z) = A_Z.
\end{equation}
Note that the action of $\Delta_k^{u}(H,A_Z)$ depends on the support $Z$ of its second argument, but we suppress this dependence for notational purposes.
It follows that for $R \le M \le L$ we have:
$$\tau_u^{H_M(Z)}(A_Z) = \sum_{k = R}^M \Delta_k^{u}(H,A_Z).$$
From the above decomposition, we have for $A_Z \in \A_Z$:
\begin{equation}\label{sa_decomposition}
\Sa^{(M)}(H,A_Z) \equiv \Sa(H_M(Z), A_Z) = \sum_{k = R}^M \Sa^k(H,A_Z), \quad \Sa^k(H,A_Z) = \int_{-\infty}^{\infty}s_{\alpha}(t) \left[\int_0^t \Delta_k^{u}(H,A_Z) du \right] dt
\end{equation}
It will also be useful to define the following operator here for $A = \sum_{Z\subset T} A_Z$:
\begin{equation}\label{sa_decomposition_2}
\Sa^{(M)}(H,A) \equiv \sum_{Z\subset T} \Sa^{(M)}(H,A_Z).
\end{equation}
% Using Lieb-Robinson bounds
Now, for any $A_Z\in \A_Z$ with $\diam(Z) \le R$ and $\Delta_M(Z) = H_{M+1}(Z)-H_{M}(Z)$, the following Lieb-Robinson bound holds:
\begin{eqnarray}
\left\|\left[\tau_u^{H_{M+1}(Z)}(A_Z),\Delta_M(Z)\right]\right\| \le
2 |Z| \cdot \|\Delta_{M}(Z)\| \|A_Z\| \, e^{v_a\,|u|- a \cdot d(A_Z, \Delta_M(Z))} \le C_{R,J} \|A_Z\| \, M e^{v|u|- M/R} \label{bound:LR}
\end{eqnarray}
where $C_{R,J} = 2 R^2 J$ and $v = 132e\, (1+R)^{3} J$, in the last bound. The first inequality, known as a Lieb-Robinson bound, and the constants involved are obtained in \cite{loc-estimates}. The second bound follows from the observations below:
\begin{enumerate}
\item The distance between the terms $A_Z$ and $\Delta_M(Z)$ is calculated using the definition of $H_M(Z)$ in (\ref{def:H_M}): 
\begin{equation*}
d(A_Z, \Delta_M(Z)) = d(A_Z, H_{M+1}(Z)-H_{M}(Z)) = M-R.
\end{equation*}
\item For Hamiltonians embedded in the torus $T$, we have the bound:
\be
|Z| \cdot \|H_{M+1}(Z)-H_{M}(Z)\| \le (R/2)^2 \cdot (4JM) \le R^2 J\, M.\nonumber
\ee
\item The Lieb-Robinson velocity $v$ is given by $v_a = 2 C_0(1) \|\Phi\|_{1}^a$, for $a= 1/R$, with $C_0(1)$ defined in (\ref{c_epsilon}) and $\|\Phi\|_{1}^a$ defined in (\ref{def:phi_a}), with $ f(s_1,s_2) = \sum_{Z \ni s_1, s_2} \|\Phi(Z)\| \le J$. We bound $\|\Phi\|_{1}^a$ by choosing $d(s_1,s_2) =R$, since each interaction $\Phi(Z)$ has range at most $R$.
\end{enumerate}
Moreover, using the above Lieb-Robinson bound, we have the estimate:
\begin{eqnarray*}
\left\|\Delta_{M+1}^{u}(H,A_Z)\right\| &=& \left\|\int_0^u \partial_s \left( \tau_{-s}^{H_M(Z)}\left(\tau_{s}^{H_{M+1}(Z)}(A_Z)\right)\right)_{s=t} \, dt \right\|\\
&\le& \int_0^{|u|} \left\|\left[\tau_t^{H_{M+1}(Z)}(A_Z), H_{M+1}(Z)-H_{M}(Z)\right]\right\| dt\\
&\le& C_{R,J} \|A_Z\|\, M\,e^{-M/R} \int_0^{|u|} e^{v|t|} dt\\
&\le&  \frac{1}{2R} \|A_Z\|\, M e^{-M/R + v |u|}.
\end{eqnarray*}
The equality in the first line of the above equation follows because the integral over $t$
is equal to $\tau_{-u}^{H_M(Z)}\left(\tau_{u}^{H_{M+1}(Z)}(A_Z)\right)-A_Z$, which is unitarily equivalent to
$\tau_{u}^{H_{M+1}(Z)}(A_Z)-\tau_{u}^{H_M(Z)}(A_Z)$.  The inequality in the second line follows from
$\partial_s \left(\tau_{-s}^{H_M(Z)}\left(\tau_{s}^{H_{M+1}(Z)}(A_Z)\right)\right)=
i\tau_{-s}^{H_M(Z)}\left(\left[H_{M+1}(Z)-H_{M}(Z),\tau_{s}^{H_{M+1}}(A_Z)\right]\right)$ and using a triangle inequality.
Of course, for all $M \ge R$ we also have the naive upper bound:
\be
\|\Delta_{M+1}^{u}(H,A_Z)\| \le \left\|\tau_u^{H_{M+1}(Z)}(A_Z)\right\| + \left\|\tau_u^{H_{M}(Z)}(A_Z)\right\| \le 2\|A_Z\|, 
\ee
which is useful for larger times $u$.

We now estimate the error we make by using $\Sa^{(M)}(H,A_Z)$, defined in (\ref{sa_decomposition}), to effectively localize the operator $\Sa(H,A_Z)$ to within a radius $M$ of the support of $A_Z$.
%Lemma of Localization
\begin{lemma}[Localization of $\Sa(H,A)$]\label{lem:sa_approx}
For any operator $A_Z \in \A_Z$, with $\diam(Z) \le R$, and $H$ a Hamiltonian with finite range $R$  and finite strength $J$ interactions, the following bound holds:
\begin{equation}
\|\Sa(H,A_Z) - \Sa^{(M)}(H,A_Z)\| \le \sum_{N \ge M} \|\Sa^{N+1}(H,A_Z)\| \le \frac{2\alpha}{\sqrt{2\pi}}\, \|A_Z\| \,  g_\alpha(M/R),
\end{equation}
where for $\sigma \equiv 2\alpha v \ge \sqrt{2\pi}$:
\[g_\alpha(x) = \left\{ 
\begin{array}{l l}
 8R \sqrt{2\pi}\, \sigma\, e^{- \frac{x^2}{2\sigma^2}} & \quad \mbox{for  } 1 \le x < \sigma^2 \bigskip\\
 \frac{32 R}{\sigma^2}\, x\, e^{- \frac{x}{2}} & \quad \mbox{for  } x \ge \sigma^2
 \end{array} \right. \]
\begin{proof}
We begin by bounding the norm of $\Sa^{N+1}(H,A_Z)$, for $N \ge M$, as defined in (\ref{sa_decomposition}), since 
\begin{equation}
\Sa(H,A_Z) - \Sa^{(M)}(H,A_Z) = \sum_{N\ge M}^L \Sa^{N+1}(H,A_Z).
\end{equation}
Using the estimates on $\|\Delta_{N+1}^u(H,A_Z)\|$ derived above, and noting that $s_{\alpha}(t) \le \frac{1}{\alpha \sqrt{2\pi}}, \, \forall \, t \in \R$, we have:
\begin{eqnarray}
\| \Sa^{N+1}(H,A_Z)\| &\le& 2 \int_0^{\infty} s_{\alpha}(t) \left(\int_0^t \|\Delta_{N+1}^{u}(H,A_Z)\| du\right) dt\\
&\le& 2 (2\|A_Z\|)\, \int_{T_N}^{\infty} s_{\alpha}(t)\,t\, dt  + 2  \frac{1}{2R}\,\|A_Z\| N e^{- N/R} \int_0^{T_N} \frac{1}{\alpha \sqrt{2\pi}} \left(\int_0^t  e^{v |u|} du\right) dt\\
&\le& \frac{4\alpha}{\sqrt{2\pi}}\, \|A_Z\|\,\int_{(T_N/\alpha)^2/2}^{\infty} e^{-t} \, dt  + 
\frac{2}{\sqrt{2\pi}} \|A_Z\|  \frac{1}{2R} \frac{N}{\alpha v} e^{- N/R} \int_0^{T_N} e^{v t} dt\\
&=& \frac{4\alpha}{\sqrt{2\pi}}\, \|A_Z\|\, e^{-(T_N/\alpha)^2/2}+ 
\frac{2\alpha}{\sqrt{2\pi}}\|A_Z\|  \frac{1}{2R} \frac{N}{\alpha^2 v^2} e^{- N/R+vT_N}\\
&=& \frac{4\alpha}{\sqrt{2\pi}}\, \|A_Z\| \left(1 +  \frac{1}{4R} \cdot \frac{N}{\alpha^2 v^2} \right) e^{- \epsilon(N/R) \frac{N}{R} }\\
&\le& \frac{4\alpha}{\sqrt{2\pi}}\, \|A_Z\| \left(1 + \sigma^{-2} \frac{N}{R} \right) e^{- \epsilon(N/R) \frac{N}{R}} \label{exp_decay}
\end{eqnarray}
where we set
\begin{equation}\label{def:epsilon}
\sigma = 2\alpha v,\quad  \epsilon(x) = 1 - \frac{2}{1 + \sqrt{1 + 8 \frac{x}{\sigma^2}}}, 
\end{equation}
after choosing $T_N$ so that $(T_N/\alpha)^2/2 + v T_N - N/R =0$.
We note here that $\epsilon(x)$ is a positive, increasing function of $x \ge 0$, with the following properties:
\begin{equation}\label{bound:alpha_small}
x \ge \sigma^2 \implies \epsilon(x) \ge 1/2.
\end{equation}
Moreover,
\begin{equation}\label{bound:alpha_large}
x \le \sigma^2 \implies \epsilon(x) = \frac{8 \frac{x}{\sigma^2}}{\left(1 + \sqrt{1 + 8 \frac{x}{\sigma^2}}\right)^2} \ge \frac{x}{2\sigma^2}.
\end{equation}
Using the above observations, (\ref{exp_decay}) implies for $M \ge \sigma^2 R \ge 2\pi R$:
\begin{eqnarray}
\sum_{N \ge M} \left\|\Sa^{N+1}(H,A_Z) \right\| \le \frac{4\alpha}{\sqrt{2\pi}}\, \|A_Z\| \sum_{N \ge M} \frac{2}{\sigma^2}\frac{N}{R} e^{- \frac{N}{2R}}
\le  \frac{16\alpha}{\sqrt{2\pi}}\, \|A_Z\| \, \frac{4R}{\sigma^2} \left(\frac{M}{R} \right) e^{- \frac{M}{2R}},
\label{sa_approx_0}
\end{eqnarray}
where, noting that $N e^{- \frac{N}{2R}}$ is decreasing for $N \ge 2R$, we used a integral approximation for the following sum:
\begin{eqnarray*}
&&\sum_{N \ge M} N e^{- \frac{N}{2R}} = e^{- \frac{M}{2R}} \left(M \sum_{N \ge 0} e^{- \frac{N}{2R}} + \sum_{N \ge 0} N e^{- \frac{N}{2R}}\right)
\le \frac{M}{1-e^{-\frac{1}{2R}}} e^{- \frac{M}{2R}} + \left(-\partial_s \int_0^\infty e^{-s x} \,dx\right)_{s= \frac{1}{2R}} e^{- \frac{M}{2R}}\\
&&\le 4\left(MR+R^2\right) e^{- \frac{M}{2R}} \le 8R^2 \left(\frac{M}{R}\right) e^{- \frac{M}{2R}} .
\end{eqnarray*}
To get the last inequality we used the bound $1- e^{-x} \ge x/2$ for $0 \le x \le \ln 2$, noting that $R \ge 1 \implies 1/(2R) < \ln 2$.
For $R \le M < \sigma^2 R$:
\begin{eqnarray}
\sum_{N \ge M} \left\|\Sa^{N+1}(H,A_Z) \right\| &\le& \frac{4\alpha}{\sqrt{2\pi}}\, \|A_Z\| \left(\sum_{N \ge M}^{\sigma^2 R} 2 e^{- \frac{N^2}{2\sigma^2 R^2}} + \sum_{N \ge \sigma^2 R} 2\frac{N}{R \sigma^2} e^{- \frac{N}{2R}}\right)\nonumber\\
&\le&  \frac{4\alpha}{\sqrt{2\pi}}\, \|A_Z\| \, \left[\left(2 + \sqrt{2\pi}\, \sigma R\right) e^{- \frac{M^2}{2\sigma^2 R^2}}+ 16 R \, e^{- \frac{\sigma^2}{2}}\right] \nonumber\\
&\le&  \frac{16\alpha}{\sqrt{2\pi}}\, R \|A_Z\| \, \left(\sqrt{2\pi}\, \sigma \right) e^{- \frac{M^2}{2\sigma^2 R^2}}
\label{sa_approx}
\end{eqnarray}
where we assumed $\sigma \ge \sqrt{2\pi}$ to get the last inequality.
Combining all of the previous observations with the bounds (\ref{sa_approx_0}) and (\ref{sa_approx}), we get the desired behavior of $g_\alpha(x)$, as claimed in the statement of the Lemma.
\end{proof}
\end{lemma}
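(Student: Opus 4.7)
The plan is to bound $\|\Sa(H,A_Z)-\Sa^{(M)}(H,A_Z)\|$ by the tail $\sum_{N\ge M}\|\Sa^{N+1}(H,A_Z)\|$ via the triangle inequality on the decomposition (\ref{sa_decomposition}), and then to estimate each summand $\|\Sa^{N+1}(H,A_Z)\|$ by carefully bounding $\|\Delta_{N+1}^u(H,A_Z)\|$ in two regimes of the time $u$. For small $|u|$, one uses the Lieb-Robinson bound derived just before the lemma, namely $\|\Delta_{N+1}^u(H,A_Z)\|\le \frac{1}{2R}\|A_Z\|N\,e^{-N/R+v|u|}$; for large $|u|$ the naive bound $\|\Delta_{N+1}^u(H,A_Z)\|\le 2\|A_Z\|$ is used, and the Gaussian weight $s_\alpha(t)$ provides the decay. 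The integrals then split at a well-chosen cutoff $T_N$.

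Concretely, in the integral $\Sa^{N+1}(H,A_Z)=\int s_\alpha(t)\int_0^t \Delta_{N+1}^u(H,A_Z)\,du\,dt$, I would pick $T_N$ to be the positive root of the quadratic $T^2/(2\alpha^2)+vT=N/R$, so that the Gaussian tail factor $e^{-T_N^2/(2\alpha^2)}$ and the Lieb-Robinson growth $e^{vT_N}$ balance against the spatial factor $e^{-N/R}$. For $|t|\le T_N$, bound $s_\alpha(t)\le 1/(\alpha\sqrt{2\pi})$ and integrate the LR-bound; for $|t|\ge T_N$, use $\|\Delta_{N+1}^u\|\le 2\|A_Z\|$ and evaluate the Gaussian tail $\int_{T_N}^\infty s_\alpha(t)\,t\,dt=\alpha/\sqrt{2\pi}\cdot e^{-T_N^2/(2\alpha^2)}$. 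This yields a bound of the form
\begin{equation}
\|\Sa^{N+1}(H,A_Z)\|\le \frac{4\alpha}{\sqrt{2\pi}}\|A_Z\|\bigl(1+\sigma^{-2}N/R\bigr)e^{-\epsilon(N/R)\,N/R},
\end{equation}
with $\sigma=2\alpha v$ and $\epsilon(x)=1-2/(1+\sqrt{1+8x/\sigma^2})$ encoding the balance chosen above.

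The last step is to control $\epsilon$ and perform the sum over $N\ge M$. Elementary algebra on the quadratic defining $\epsilon$ gives the two regimes: $\epsilon(x)\ge x/(2\sigma^2)$ when $x\le \sigma^2$ (so the decay is Gaussian, $e^{-N^2/(2\sigma^2 R^2)}$), and $\epsilon(x)\ge 1/2$ when $x\ge \sigma^2$ (so the decay is purely exponential, $e^{-N/(2R)}$). For $M\ge \sigma^2 R$ only the second regime contributes; the sum $\sum_{N\ge M}(N/R)e^{-N/(2R)}$ is handled by an integral comparison, using that $Ne^{-N/(2R)}$ is decreasing for $N\ge 2R$, and produces the $(32R/\sigma^2)(M/R)e^{-M/(2R)}$ branch of $g_\alpha$. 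For $R\le M<\sigma^2 R$, split the tail at $N=\sigma^2 R$: the Gaussian portion gives a factor of order $\sigma R\cdot e^{-M^2/(2\sigma^2 R^2)}$ (from comparing the sum to $\int e^{-N^2/(2\sigma^2 R^2)}dN$), while the remainder $\sum_{N\ge \sigma^2 R}$ is exponentially smaller than this Gaussian term under $\sigma\ge\sqrt{2\pi}$, yielding the $8R\sqrt{2\pi}\sigma\,e^{-M^2/(2\sigma^2 R^2)}$ branch.

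The main obstacle is the choice of the time cutoff $T_N$ and the bookkeeping of the function $\epsilon(N/R)$: one has to make the balance between the Lieb-Robinson ballistic spreading (velocity $v$), the Gaussian width $\alpha$, and the spatial exponent $N/R$ completely explicit so that the final piecewise decay of $g_\alpha$ emerges with clean constants. Once the correct $T_N$ and the estimates $\epsilon(x)\ge\min(1/2,x/(2\sigma^2))$ are in hand, the remainder is careful but routine summation.
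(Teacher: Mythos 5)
Your proposal follows the paper's proof essentially step for step: the same triangle-inequality decomposition into a tail of $\Sa^{N+1}$ terms, the same two-regime bound on $\|\Delta_{N+1}^u\|$ (Lieb-Robinson for short times, the naive $2\|A_Z\|$ with Gaussian tail decay for long times) split at the same cutoff $T_N$ balancing $T^2/(2\alpha^2)+vT=N/R$, the same auxiliary function $\epsilon(x)$ with the same bounds $\epsilon\ge 1/2$ for $x\ge\sigma^2$ and $\epsilon\ge x/(2\sigma^2)$ for $x\le\sigma^2$, and the same two-case summation with integral approximations. This matches the paper's argument in both structure and detail.
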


%% Adiabatic Lieb-Robinson bounds %%
\subsection{Lieb-Robinson bounds for interactions $\Sa^k(H,A)$}
We are interested in Lieb-Robinson bounds for systems evolving under Hamiltonians with interaction terms
$\Sa^k(H,A_Z)$ defined in (\ref{sa_decomposition}), where $k \ge R$ and $A_Z \in \A_Z$. We note that each term has support on $Z(k)$, the set of radius $k$ centered on $Z$, defined in (\ref{def:H_M}). 
It will be useful to define the set of supports for all non-zero interactions:
\be
\ell = \{Z \subset T: \|A_Z\| \neq 0\}
\ee 
For sites $s_1,s_2 \in T$, setting $d(Z;s_1,s_2) = \max_{i=1,2}d(Z,s_i)$ and $S(M;s_1,s_2)= \{Z\in \ell: d(Z;s_1,s_2)=M\}$, we have:
\be\label{set_implications}
S(M;s_1,s_2) \neq \emptyset \implies M \ge d(s_1,s_2)/2, \quad Z(k) \ni s_1,s_2 \implies k \ge d(Z;s_1,s_2),
\ee
where the first implication follows from $2M \ge d(s_1,Z)+d(s_2,Z)$ and the second one follows from its contrapositive statement.
Finally, we note that for all $s_1,s_2$, the sets $S(M;s_1,s_2)$ and $S(M';s_1,s_2)$ are disjoint for
$M\neq M'$ and
\be\label{ell_decomposition}
\bigcup_{M\ge d(s_1,s_2)/2} S(M;s_1,s_2)=\ell.
\ee
For a quantum system embedded in $T$, the Lieb-Robinson velocity is bounded above by $2 C_0(\epsilon) \|\Phi\|_{\epsilon}^a$ \cite{loc-estimates}, where for $\epsilon \le 1$:
\begin{equation}\label{c_epsilon}
C_0(\epsilon) = \sup_{s_1,s_2 \in T} \sum_{z \in T} \left(\frac{1+d(s_1,s_2)}{(1+d(s_1,z)) (1+d(z,s_2))}\right)^{2+\epsilon} \le 34 + \frac{32}{\epsilon}
\ee
and for $a > 0$:
\be\label{def:phi_a}
\|\Phi\|_{\epsilon}^a = \sup_{s_1,s_2 \in T} (1+d(s_1,s_2))^{2+\epsilon}\, e^{a d(s_1,s_2)} \,  f(s_1,s_2),
\ee
with
\begin{eqnarray}
 f(s_1,s_2) &\equiv& \sum_{Z\in \ell} \min \left\{ \|\Sa(H,A_Z)\|, \left\|\sum_{k\ge d(Z;s_1,s_2)} \Sa^{k}(H,A_Z)\right\| \right\}\\
 &=& \sum_{M \ge d(s_1,s_2)/2} \left(\sum_{Z\in S(M;s_1,s_2)}\min \left\{ \|\Sa(H,A_Z)\|, \left\|\Sa(H,A_Z)- \Sa^{(M)}(H,A_Z)\right\| \right\}\right),
\end{eqnarray}
where we used (\ref{set_implications}) and (\ref{ell_decomposition}) to get the second line.

The bounds for $C_0(\epsilon)$ are derived as follows:
\begin{eqnarray}
\sum_{z \in T} \left(\frac{1+d(s_1,s_2)}{(1+d(s_1,z)) (1+d(z,s_2))}\right)^{2+\epsilon} 
&\leq& 2+ \sum_{z\neq s_1,s_2} \left(\frac{1+d(s_1,s_2)}{1+d(s_1,s_2) + d(s_1,z)\cdot d(z,s_2)}\right)^{2+\epsilon}\\
&\le& 2+ \sum_{z\neq s_1, s_2} \left(\frac{2}{\min\{d(s_1,z),d(z,s_2)\}}\right)^{2+\epsilon}\\
&\le& 2+ \sum_{m \ge 1} \left(C_2\, m\right)\cdot \left(2/m \right)^{2+\epsilon} \\
&\le& 2+  2^{4+\epsilon} \left(1+ \frac{1}{\epsilon}\right),
\end{eqnarray}
where $C_2=4$ is the surface area of the unit ball in $\Z^{2}$, the first inequality follows from setting $y=\min\{d(s_1,z), d(s_2,z)\}$, $x=\max\{d(s_1,z), d(s_2,z)\}$, $w= d(s_1,s_2)$ and noting that:
\be
(y \le x) \, \wedge\, (w \le 2x) \implies (1+w-2x) (y-2) \le 4x \implies (1+w)y \le 2 (1+w +xy) \implies \frac{1+w}{1+w+xy} \le \frac{2}{y}
\ee
and the last bound is derived from $\sum_{m \ge 1} m^{-(1+\epsilon)} \le 1 + \int_1^\infty x^{-(1+\epsilon)} \,dx$.

Now, in order to bound $\|\Phi\|_{\epsilon}^a$, we estimate an upper bound for $ f(s_1,s_2)$ using (\ref{naive_bnd:sa}) and Lemma \ref{lem:sa_approx} as follows:
\begin{eqnarray}
 f(s_1,s_2) &\le& \frac{2\alpha}{\sqrt{2\pi}}\, \sum_{M \ge d(s_1,s_2)/2} \min \{1, g_\alpha(M/R)\} \left(\sum_{Z\in S(M;s_1,s_2)}  \|A_Z\| \right) \\
&\le& \frac{2\alpha}{\sqrt{2\pi}}\, \left(\sup_{M\ge 0}  \sum_{Z\in S(M;s_1,s_2)}  \|A_Z\| \right) \sum_{M \ge d(s_1,s_2)/2} \min \{1, g_\alpha(M/R)\}\\
&\le& \frac{2\alpha}{\sqrt{2\pi}}\, j_{\max} \cdot s_{\max} \sum_{M \ge d(s_1,s_2)/2} \min \{1, g_\alpha(M/R)\},
\end{eqnarray}
where we set 
\be\label{s_max}
j_{\max} = \left(\sup_{s \in T} \sum_{Z\ni s} \|A_Z\| \right), \quad s_{\max} = \sup_{s_1,s_2 \in T} \left(\sup_{M \ge 0} |S(M;s_1,s_2)|\right),
\ee
with $|S|$ denoting the number of sites in the set $S$ (considered as a union of the sites within its subsets).
We turn our attention to $h_{\alpha}(d) \equiv \sum_{M \ge d/2} \min \{1, g_\alpha(M/R)\}$, which we estimate in three separate cases, using the definition of $g_\alpha(x)$ in Lemma \ref{lem:sa_approx} and the fact that $1 \le g_{\alpha}(x)$ for $x \le \sqrt{2\ln (8 R \sigma^2)} \sigma$. Setting 
\be\label{def:d_0}
d_0 = \sqrt{2\ln (8 \sigma^2 R)} \,(2\sigma R),
\ee
we have:
\[ h_\alpha(d) \le \left\{ 
\begin{array}{l l}
(d_0 - d)/2 + 4\pi R & \quad \mbox{for  } 0 \le d < d_0 \medskip\\
 32\pi \, \sigma^2 R^2\, e^{-\frac{d^2}{2 (2\sigma R)^2}} & \quad \mbox{for  } d_0 \le d < 2\sigma^2 R  \bigskip \\
128 \frac{d R}{\sigma^2} \,e^{-\frac{d}{4R}} & \quad \mbox{for  } d \ge 2\sigma^2 R
 \end{array} \right. \]
where we used integral approximations to bound $h_{\alpha}(d)$, as in the proof of Lemma \ref{lem:sa_approx}.
Putting everything together, we have:
\be
\|\Phi\|_a^\epsilon \le  \frac{2\alpha}{\sqrt{2\pi}}\, j_{\max} \cdot s_{\max} \max_{0 \le d \le L} \left \{(1+d)^{2+\epsilon}\, e^{a d}\, h_\alpha(d)\right\}
\ee
Differentiating with respect to $d$ the expression in brackets after choosing the appropriate bound for $h_\alpha(d)$, we obtain the following uniform upper bound for the given value of the parameter $a$:
\[ \max_{0 \le d \le L} \left\{(1+d)^{2+\epsilon}\, e^{a d}\, h_\alpha(d)\right\} \le
\begin{array}{l l} d^{3+\epsilon}_{0} & \quad \mbox{with } a =\frac{1}{d_0}. \end{array} \]

Finally, setting $\epsilon = (\ln d_0)^{-1}$, we get:
\begin{eqnarray}
\|\Phi\|_a^\epsilon &\le& \frac{2e}{\sqrt{2\pi}}\, j_{\max}\cdot s_{\max} \,\alpha\, d^{3}_{0}.
\end{eqnarray}
Applying the bound in \cite{loc-estimates} to a unitary $U(\theta)$ satisfying:
\be
\partial_\theta' \, U(\theta')_{\theta=\theta'} = i \sum_{Z \in \ell} \Sa(H(\theta), \partial_\theta \Phi(Z;\theta))\, U(\theta), \quad U(0)=\one,
\ee
with $j_{\max} \le Q_{\max}\, J$, using (\ref{bound:terms}), and $s_{\max} \le 16 R^2$, from the geometry of the set $\ell$, we have shown that for $a = d_0^{-1}$ and $\epsilon = (\ln d_0)^{-1}$, with $d_0$ given in (\ref{def:d_0}):
\be
2 C_{0}(\epsilon) \|\Phi\|_a^{\epsilon} \le 4 Q_{\max} R^{-2} d_0^4 \sqrt{\ln d_0},
\ee
where we used $2C_0(\epsilon) \le 132/\epsilon$, for $\epsilon \le 1$, to get 
$$2 C_{0}(\epsilon) \left(\frac{2e}{\sqrt{2\pi}}\, j_{\max}\cdot s_{\max} \,\alpha \right) \le 4 Q_{\max} \,R^{-2}  d_0 \sqrt{\ln d_0},$$
recalling that $v = 132e\, (1+R)^3 J$ and noting that $\frac{\sqrt{2\pi}}{2} \sqrt{\ln d_0} \le \sqrt{2\ln(8\sigma^2 R)}$.
Finally, for  $A \in \A_X$, $B \in \A_Y$ and $X,Y \subset T$, we get:
\be\label{lr-bound:adiabatic}
\|[U(\theta) \,A \,U^{\dagger}(\theta), B]\| \le 2 \|A\|\|B\| \min \{|X|,|Y|\} \, e^{- \frac{d(X,Y)}{d_0} + v(Q_{\max}, d_0) |\theta|},
\ee
where $d_0$ is given in (\ref{def:d_0}) and we define:
\be \label{velocity_bound}
\quad v(Q_{\max}, d_0) \equiv 4 Q_{\max} R^{-2} d_0^4 \sqrt{\ln d_0}.
\ee
%% Application to U_X %%
\subsection{Application to evolutions based on $U_X(0,\theta_y,\theta_x)$.}
We end this section with an important lemma relating localized versions of the unitaries $U_X(0,\theta_y,\theta_x)$ and $U_X(0,0,\theta_x)$ defined in (\ref{def:unitary_X}) with $H(\theta_x,\theta_y) = H(\theta_x,0,\theta_y,0)$. First, we introduce the unitaries $U_\Omega(\theta_x,\theta_y,\theta)$ satisfying $\forall \,\theta_x , \theta_y \in [0,2\pi]$:
\be\label{unitary_omega}
\partial_{\theta'} U_\Omega(\theta_x,\theta_y,\theta')_{\theta'=\theta} = -i\, \Sa^{(L/24)}(H(\theta_x-\theta,\theta_y), \partial_{\theta'} H_\Omega(\theta',\theta_y)_{\theta'=\theta_x-\theta})\,U_\Omega(\theta_x,\theta_y,\theta), \quad U_\Omega(\theta_x,\theta_y,0) =\one,
\ee
with $H_\Omega(\theta_x,\theta_y)= \sum_{Z\subset \Omega} \Phi(Z;\theta_x,0,\theta_y,0)$ and
\be\label{set_omega}
\Omega = \left\{s \in T: |y(s)| \le \frac{5}{24}L-R \right\}.
\ee
The following Hamiltonian decomposition will be useful:
\be\label{hamiltonian_decomposition}
H(\theta_x,\theta_y) = H_\Omega(\theta_x,\theta_y) + H_{\Omega^c}(\theta_x,\theta_y), \quad H_{\Omega^c}(\theta_x,\theta_y) = \sum_{Z\not \subset \Omega} \Phi(Z;\theta_x,0,\theta_y,0).
\ee
Finally, since the unitary $U_\Omega(\theta_x,\theta_y,\theta)$ acts trivially outside of the set $\Omega_Y$ defined in (\ref{sets_omega}) below, the following is true:
\be\label{rotation_omega}
R_Y(\theta_y, U_\Omega(\theta_x,0,\theta)) = U_\Omega(\theta_x,\theta_y,\theta)
\ee
%% Twisting Lemma %%
The Lieb-Robinson bound we have obtained on the velocity of propagation for interactions of the form $\Sa^k(H,A_Z)$, has a direct application to estimating the following important upper bounds:
\begin{lemma}[Twisting Lemma]\label{lem:local_approximation}
For $U_X(0,\theta_y,\theta_x)$, $U_X(0,0,\theta_x)$ defined in (\ref{def:unitary_X}) with $H(\theta_x,\theta_y) = H(\theta_x,0,\theta_y,0)$ and $U_\Omega(\theta_x,\theta_y,\theta)$ defined in (\ref{unitary_omega}), we have for sufficiently large $L$ and $\alpha = \frac{1}{4vR} \left(\frac{L \xi}{48 \ln^3 L}\right)^{\frac{1}{5}}$, with $\xi = \frac{(R/2)^2}{2\pi Q_{\max}}$:
\begin{eqnarray}
\|U^{\dagger}_X(0,\theta_y,\theta_x) \,A_{\Omega_0}\, U_X(0,\theta_y,\theta_x) - U_\Omega(\theta_x,\theta_y,\theta_x) \,A_{\Omega_0}\, U^{\dagger}_\Omega(\theta_x,\theta_y,\theta_x)\| \le |\theta_x| \frac{2\alpha}{\sqrt{2\pi}} Q_{\max}\, J \|A_{\Omega_0}\|\, L^3 e^{- \left(\frac{L}{48 R}\right)^{4/5}} \\
\|U^{\dagger}_X(0,0,\theta_x) \,A_{\Omega_0}\, U_X(0,0,\theta_x)-U_\Omega(\theta_x,0,\theta_x) \,A_{\Omega_0}\, U^{\dagger}_\Omega(\theta_x,0,\theta_x)\| \le |\theta_x| \frac{2\alpha}{\sqrt{2\pi}} Q_{\max}\, J \|A_{\Omega_0}\|\, L^3 e^{- \left(\frac{L}{48 R}\right)^{4/5}},
\end{eqnarray}
for all $\theta_x,\theta_y \in [0,2\pi]$ and $A_{\Omega_0} \in \A_{\Omega_0}$ with 
\be\label{omega_0}
\Omega_0 = \left\{s \in T: |x(s)| \le L/8-R \mbox{ and } |y(s)| \le L/8-R\right\}.
\ee
Moreover, for all $\theta_x,\theta_y \in [0,2\pi]$ the evolved operator $U_\Omega(\theta_x,\theta_y,\theta_x) \,A_{\Omega_0}\, U^{\dagger}_\Omega(\theta_x,\theta_y,\theta_x)$ has support strictly within the set $\Omega_X \cap \Omega_Y$, where:
\be \label{sets_omega}
{\Omega_X} = \{s \in T: |x(s)| \le L/4 \}, \quad \Omega_Y = \{s \in T: |y(s)| \le L/4 \}.
\ee
\begin{proof}
We only show the first bound, since the second one follows from a similar argument.
First, note that $U_\Omega(\theta_x,\theta_y,\theta)$ satisfies:
\begin{eqnarray*}
\partial_{\theta'} U_\Omega(\theta_x,\theta_y,\theta_x-\theta')_{\theta'=\theta} = -\partial_{\theta''} U_\Omega(\theta_x,\theta_y,\theta'')_{\theta''=\theta_x-\theta}
= i\, \Sa^{(L/24)}(H(\theta,\theta_y), \partial_{\theta'} H_\Omega(\theta',\theta_y)_{\theta'=\theta})\,U_\Omega(\theta_x,\theta_y,\theta_x-\theta),
\end{eqnarray*}
where we made the substitution $\theta'' = \theta_x-\theta'$ to get the first equality and used (\ref{unitary_omega}) to get the second equality.
Set
$$\Delta_U(\theta_x,\theta_y) \equiv U^{\dagger}_X(0,\theta_y,\theta_x) \,A_{\Omega_0}\, U_X(0,\theta_y,\theta_x)-U_\Omega(\theta_x,\theta_y,\theta_x) \,A_{\Omega_0}\, U^{\dagger}_\Omega(\theta_x,\theta_y,\theta_x)$$ 
and
\begin{eqnarray*}
\Delta_S(\theta,\theta_y) &\equiv& \Sa(H(\theta,\theta_y), \partial_{\theta'} H(\theta',\theta_y)_{\theta'=\theta})-\Sa^{(L/24)}(H(\theta,\theta_y), \partial_{\theta'} H_\Omega(\theta',\theta_y)_{\theta'=\theta}) \\
&=& \Delta^{(L/24)}_S(\theta,\theta_y) + \Sa^{(L/24)}(H(\theta,\theta_y), \partial_{\theta'} H_{\Omega^c}(\theta',\theta_y)_{\theta'=\theta}),
\end{eqnarray*}
for 
\begin{eqnarray*}
\Delta^{(L/24)}_S(\theta,\theta_y) &\equiv& \Sa(H(\theta,\theta_y), \partial_{\theta'} H_{\Omega}(\theta',\theta_y)_{\theta'=\theta}) -\Sa^{(L/24)}(H(\theta,\theta_y), \partial_{\theta'} H_\Omega(\theta',\theta_y)_{\theta'=\theta}) \\
&+& \Sa(H(\theta,\theta_y), \partial_{\theta'} H_{\Omega^c}(\theta',\theta_y)_{\theta'=\theta})-\Sa^{(L/24)}(H(\theta,\theta_y), \partial_{\theta'} H_{\Omega^c}(\theta',\theta_y)_{\theta'=\theta}).
\end{eqnarray*}
Now, we have
\begin{eqnarray}
\|\Delta_U(\theta_x,\theta_y)\| &=& \left\|\int_0^{\theta_x}  \partial_{\theta'} \left(U^{\dagger}_X(0,\theta_y,\theta')\,U_\Omega(\theta_x,\theta_y,\theta_x-\theta') \,A_{\Omega_0}\, U^{\dagger}_\Omega(\theta_x,\theta_y,\theta_x-\theta') \,U_X(0,\theta_y,\theta')\right)_{\theta'=\theta}\, d\theta \right\| \nonumber\\
&\le& |\theta_x| \sup_{\theta \in [0,\theta_x]} \|[U_\Omega(\theta_x,\theta_y,\theta) \,A_{\Omega_0}\, U^{\dagger}_\Omega(\theta_x,\theta_y,\theta), \Delta_S(\theta,\theta_y)]\|\nonumber\\
&\le& 2 |\theta_x| \|A_{\Omega_0}\| \sup_{\theta \in [0,\theta_x]} \left\|\Delta^{(L/24)}_S(\theta,\theta_y)\right\| \nonumber\\
&+& |\theta_x| \sup_{\theta \in [0,\theta_x]} \left\|[U_\Omega(\theta_x,\theta_y,\theta) \,A_{\Omega_0}\, U^{\dagger}_\Omega(\theta_x,\theta_y,\theta), \Sa^{(L/24)}(H(\theta,0), \partial_{\theta'} H_{\Omega^c}(\theta',0)_{\theta'=\theta})]\right\|\label{bound:twist}
\end{eqnarray}
Applying the Lieb-Robinson bound (\ref{lr-bound:adiabatic}), we have for $\theta \in [0,2\pi]$:
\be\label{bound:lr_twist}
\left\|\left[U_\Omega(\theta_x,\theta_y,\theta) \,A_{\Omega_0}\, U^{\dagger}_\Omega(\theta_x,\theta_y,\theta), \Sa^{(L/24)}(H(\theta,0), \partial_{\theta} H_{\Omega^c}(\theta,0))\right]\right\| \le  \frac{4\alpha}{\sqrt{2\pi}} Q_{\max}\, J \,L |{\Omega_0}| \|A_{\Omega_0}\| e^{- \frac{L}{24 d_0} + 2\pi \,v(Q_{\max}, d_0)},
\ee
where $d_0$ is given in (\ref{def:d_0}), $v(Q_{\max}, d_0)$ is defined in (\ref{velocity_bound}) and we made use of (\ref{naive_bnd:sa}) with the bound (\ref{bound:terms}), to get:
$$\|\Sa^{(L/24)}(H(\theta,0), \partial_{\theta} H_{\Omega^c}(\theta,0))\| \le \frac{2\alpha}{\sqrt{2\pi}} Q_{\max}\, J \,L.$$ 
Moreover, $|{\Omega_0}| \le L^2/64$. Now, setting 
\be
\xi = \frac{R^2}{4 (2\pi) Q_{\max}}
\ee
and assuming
$d_0 \le \left(\frac{L \xi}{48 \sqrt{\ln L}}\right)^{\frac{1}{5}}$
implies the bounds $\frac{L}{48 d_0} \ge 2\pi \,v(Q_{\max}, d_0)$ and, hence, $\frac{L}{24 d_0} - 2\pi \,v(Q_{\max}, d_0) \ge \frac{L}{48 d_0} \ge \left(\frac{L}{48}\right)^{4/5} \left(\frac{\sqrt{\ln L}}{\xi}\right)^{1/5} \ge \left(\frac{L}{48 R}\right)^{4/5}$ for the exponent in (\ref{bound:lr_twist}). Going back to (\ref{bound:lr_twist}), we get:
\be\label{bound:lr_twist2}
\left\|\left[U_\Omega(\theta_x,\theta_y,\theta) \,A_{\Omega_0}\, U^{\dagger}_\Omega(\theta_x,\theta_y,\theta), \Sa^{(L/24)}(H(\theta,0), \partial_{\theta} H_{\Omega^c}(\theta,0))\right]\right\| \le  \frac{\alpha}{16\sqrt{2\pi}} Q_{\max}\, J \|A_{\Omega_0}\|\, L^3 e^{- \left(\frac{L}{48 R}\right)^{4/5}},
\ee
Finally, using Lemma \ref{lem:sa_approx}, after noting that $\frac{L}{24R}\ge (d_0/2R)^2 \ge \sigma^2$, we have the bound 
\be\label{bound:local_twist}
2 \|A_{\Omega_0}\| \sup_{\theta \in [0,\theta_x]} \left\|\Delta^{(L/24)}_S(\theta,\theta_y)\right\| \le \frac{32\alpha}{3\sqrt{2\pi}} Q_{\max}\, J\,\|A_{\Omega_0}\| \frac{L^2}{\sigma^2} \,e^{-\frac{L}{48R}}.
\ee
Noting that (\ref{bound:lr_twist2}) dominates (\ref{bound:local_twist}) in (\ref{bound:twist}), completes the proof of the first bound in the Lemma. To justify the choice of the global parameter $\alpha$ in the statement of this Lemma, we recall that we have assumed $d_0 = 2\sigma R \sqrt{\ln (8\sigma^2 R)} \le \left(\frac{L \xi}{48 \sqrt{\ln L}}\right)^{\frac{1}{5}}$ and since $\sigma \equiv 2\alpha v$, we choose 
\be
\alpha = \frac{1}{4vR} \left(\frac{L \xi}{48 \ln^3 L}\right)^{\frac{1}{5}}
\ee
to satisfy the bound on $d_0$ while maximizing $\alpha$.
\end{proof}
\end{lemma}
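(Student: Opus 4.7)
The plan is to estimate the operator difference $\Delta_U(\theta_x,\theta_y) \equiv U_X^{\dagger}(0,\theta_y,\theta_x)\,A_{\Omega_0}\,U_X(0,\theta_y,\theta_x) - U_\Omega(\theta_x,\theta_y,\theta_x)\,A_{\Omega_0}\,U_\Omega^{\dagger}(\theta_x,\theta_y,\theta_x)$ by the standard interpolation trick: I introduce the one-parameter family $f(\theta) = U_X^{\dagger}(0,\theta_y,\theta)\,U_\Omega(\theta_x,\theta_y,\theta_x-\theta)\,A_{\Omega_0}\,U_\Omega^{\dagger}(\theta_x,\theta_y,\theta_x-\theta)\,U_X(0,\theta_y,\theta)$, which satisfies $\Delta_U(\theta_x,\theta_y) = f(\theta_x) - f(0)$. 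Differentiating and using (\ref{def:unitary_X}) and (\ref{unitary_omega}) --- here the minus sign in the generator of $U_\Omega$ is crucial, since it produces $\partial_\theta U_\Omega(\theta_x,\theta_y,\theta_x-\theta) = +i\,\Sa^{(L/24)}(H(\theta,\theta_y),\partial_{\theta'} H_\Omega(\theta',\theta_y)|_{\theta'=\theta})\,U_\Omega(\theta_x,\theta_y,\theta_x-\theta)$, matching the sign convention of $\partial_\theta U_X$ --- a direct computation shows that the four derivative terms combine into a single commutator with the generator mismatch $\Delta_S(\theta,\theta_y) \equiv \Sa(H,\partial H) - \Sa^{(L/24)}(H,\partial H_\Omega)$, sandwiched by $U_X^\dagger(\cdots)U_X$. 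Thus $\|\Delta_U(\theta_x,\theta_y)\| \le |\theta_x|\sup_{\theta\in[0,\theta_x]}\|[U_\Omega A_{\Omega_0} U_\Omega^\dagger,\,\Delta_S(\theta,\theta_y)]\|$.

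I then split $\Delta_S = \Delta_S^{(L/24)} + \Sa^{(L/24)}(H,\partial_\theta H_{\Omega^c})$, where $\Delta_S^{(L/24)}$ collects the pure truncation errors $\Sa-\Sa^{(L/24)}$ applied to both $\partial_\theta H_\Omega$ and $\partial_\theta H_{\Omega^c}$. The commutator with $\Delta_S^{(L/24)}$ is bounded crudely by $2\|A_{\Omega_0}\|\,\|\Delta_S^{(L/24)}\|$ and controlled by applying Lemma \ref{lem:sa_approx} to each of the $O(L)$ local terms in $\partial_\theta H$; this yields a contribution of order $\alpha Q_{\max} J L^2 \sigma^{-2}\, e^{-L/(48R)}$, which is subdominant once $L/(24R)\ge\sigma^2$. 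The commutator with $\Sa^{(L/24)}(H,\partial_\theta H_{\Omega^c})$ is the essential piece: its support lies within $R$ of $\Omega^c$ after a spread of $L/24$, hence at distance at least roughly $L/24$ from $\Omega_0$ by the geometry of $\Omega$ and $\Omega_0$. I apply the Lieb-Robinson bound (\ref{lr-bound:adiabatic}) --- derived for precisely the $\Sa^k$-type interactions that generate $U_\Omega$ --- to bound this commutator by $\|A_{\Omega_0}\|\cdot\|\Sa^{(L/24)}(H,\partial H_{\Omega^c})\|\cdot e^{-L/(24 d_0) + 2\pi v(Q_{\max},d_0)}$, where $\|\Sa^{(L/24)}(H,\partial H_{\Omega^c})\| \le (2\alpha/\sqrt{2\pi}) Q_{\max} J L$ by (\ref{naive_bnd:sa}) and (\ref{bound:terms}).

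The main obstacle is balancing the two competing exponents inside the Lieb-Robinson bound. Since $d_0 = 2\sigma R\sqrt{\ln(8\sigma^2 R)}$ and $v(Q_{\max},d_0) \sim d_0^4\sqrt{\ln d_0}$ both grow with $\alpha$, the requirement $L/(48 d_0) \ge 2\pi v(Q_{\max},d_0)$ forces $d_0^{\,5} \lesssim L\xi/\sqrt{\ln L}$ up to polylog factors, with $\xi = R^2/(8\pi Q_{\max})$. Saturating this constraint --- equivalently, making the choice $\alpha = (4vR)^{-1}(L\xi/(48\ln^3 L))^{1/5}$ stated in the Lemma --- guarantees that the net exponent is bounded by $-(L/(48R))^{4/5}$. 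Collecting the algebraic prefactors into the displayed $L^3$ and the universal $|\theta_x|\cdot 2\alpha/\sqrt{2\pi}\cdot Q_{\max} J$ factor gives the first inequality; the second inequality follows from the identical argument with $\theta_y$ replaced by $0$ throughout, so that $U_X(0,0,\theta_x)$ and $U_\Omega(\theta_x,0,\theta_x)$ play the roles of $U_X(0,\theta_y,\theta_x)$ and $U_\Omega(\theta_x,\theta_y,\theta_x)$.

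Finally, the support statement is essentially geometric and does not require the interpolation: $U_\Omega(\theta_x,\theta_y,\theta)$ is generated by $\Sa^{(L/24)}$-type interactions supported within the $(L/24)$-neighborhood of the vertical boundary strips inside $\Omega$, so conjugation of $A_{\Omega_0}$ can extend its support by at most $L/24$ in each coordinate direction. Since $\Omega_0 \subset \{|x(s)|,|y(s)|\le L/8 - R\}$ and $\Omega\subset\{|y(s)|\le 5L/24 - R\}$, the resulting operator has support strictly inside $\{|x(s)|\le L/4\}\cap\{|y(s)|\le L/4\} = \Omega_X\cap\Omega_Y$, as claimed.
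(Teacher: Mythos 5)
Your proposal is correct and follows essentially the same route as the paper: the same interpolation path in $\theta$, the same split of the generator mismatch $\Delta_S$ into a pure truncation error $\Delta_S^{(L/24)}$ and the residual $\Sa^{(L/24)}(H,\partial_\theta H_{\Omega^c})$, the same use of Lemma~\ref{lem:sa_approx} for the first piece and the Lieb-Robinson bound (\ref{lr-bound:adiabatic}) for the second, and the same constraint balancing that produces the choice of $\alpha$. The one small imprecision is in the support statement: what matters is not that conjugation ``extends the support by $L/24$'' but that $U_\Omega A_{\Omega_0} U_\Omega^\dagger$ is supported on the union of $\Omega_0$ with the support of the generator of $U_\Omega$, which lies within $L/24$ of the $\theta_x$-twist strip intersected with $\Omega$ — still giving the stated conclusion for sufficiently large $L$.
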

%% COMPUTING THE HALL CONDUCTANCE %%
\section{Computing the Hall Conductance}
We use the Kubo formula from linear response theory as applied to the setting of a torus pierced by two solenoids carrying magnetic fluxes $\theta_x$ and $\theta_y$ in the $x$ and $y$ directions, respectively\cite{thouless}. We denote the Hall conductance at the origin in flux-space by $\sigma_{xy}$, given by the formula:
\begin{equation}\label{def:cond}
\sigma_{xy} = 2 \Im \left\lbrace\braket{\frac{\partial}{\partial \theta_y}\Psi_0(0,\theta_y)_{\theta_y=0}}{\frac{\partial}{\partial \theta_x}\Psi_0(\theta_x,0)_{\theta_x=0}}\right\rbrace \cdot \left(2\pi \, \frac{e^2}{h}\right)
\end{equation}
where $\ket{\Psi_0(0,\theta_y)}$ and $\ket{\Psi_0(\theta_x,0)}$ are the ground states of $H(0,0,\theta_y,0)$ and $H(\theta_x,0,0,0)$, respectively. We note here that the flux angles $\theta_x$ and $\theta_y$ measure the flux in units of $\hbar/e$, the unit of magnetic flux divided by $2\pi$, so that $H(\theta_x,\theta_y)$ is periodic in both angles with period $2\pi$. With this in mind, we may also write for dimensionless $\theta_x,\theta_y$:
 
To show quantization of $\sigma_{xy}$, it is sufficient to prove that $\left|1- e^{2\pi i\,\sigma_{xy}/(e^2/h)}\right|$ is bounded above by a function that decays as a stretched exponential in $L$. It follows then, that $\exists \,n \in \N$ such that:
\be\label{main_bound}
\left|\sigma_{xy}-n \frac{e^2}{h}\right| \le \frac{\sqrt{2} e^2}{2\pi h} \left(
\left|\braket{\Psi_0}{\Psi_{\circlearrowleft}(r)}^{\left(\frac{2\pi}{r}\right)^2} - e^{i\,\sigma_{xy}\frac{2\pi h}{e^2}}\right| + \left|1-\braket{\Psi_0}{\Psi_{\circlearrowleft}(2\pi)}\right|
+ \left|\braket{\Psi_0}{\Psi_{\circlearrowleft}(2\pi)}-\braket{\Psi_0}{\Psi_{\circlearrowleft}(r)}^{\left(\frac{2\pi}{r}\right)^2}\right| \right)
\ee
where we used
\be
\left|\sigma_{xy}\frac{2\pi h}{e^2}-2\pi n\right| \le {\sqrt{2}} \left|1- e^{i\,\left(\sigma_{xy}\frac{2\pi h}{e^2}-2\pi n\right)}\right|,\nonumber
\ee
which follows from assuming $\left|1-e^{i\,\theta}\right|\le 1\implies |1-e^{i\theta}|^2 = 2(1-\cos\theta) \le 1\implies |\theta| \in [0,\pi/3]$, up to integer multiples of $2\pi$. In that case, $\cos|\theta|$ is monotonically decreasing from $1$ to $1/2$ in that range. In particular, we have $$\left|1-e^{i|\theta|}\right|^2 = 2(1-\cos|\theta|) = 2\int_0^{|\theta|} d\phi \int_0^{\phi} \cos\eta \, d\eta \ge 2\int_0^{|\theta|}  {\phi} \cos\phi \,d\phi \ge |\theta|^2 \cos|\theta| \ge |\theta|^2/2.$$ Since $\left|1-e^{i|\theta|}\right|=\left|1-e^{i\theta}\right|$, we get $|\theta| \le \sqrt{2} \left|1-e^{i\theta}\right|$.

At this point, we are ready to derive bounds for each of the three terms in (\ref{main_bound}). The first one emerges from the close relationship between the Hall conductance and the geometric/Berry's phase $\phi(r)$ that appears in Corollary \ref{cor:phase-estimate}. In particular, we know from Corollary \ref{cor:phase-estimate} that for $H(\theta_x,\theta_y) = H(\theta_x,0,\theta_y,0)$ and small enough $r$, the following is true:
\begin{eqnarray}\label{adiabatic_phase_1}
1- |\braket{\Psi_0}{\Psi_{\circlearrowleft}(r)}| \le \left|\braket{\Psi_0}{\Psi_{\circlearrowleft}(r)} - e^{i\phi(r)}\right| \le 8 Q_{\max} J \,L \, \frac{e^{-\alpha^2\Delta^2}}{\Delta}\cdot r,\quad
\Delta = \gamma - 2 Q_{\max}\, JL\, r,
\end{eqnarray}
where we used (\ref{bound:rot-H_X}-\ref{bound:rot-H_Y}) to bound $D_H(r)$ in (\ref{phase-estimate}) and the gap estimate (\ref{bound:gap-estimate}) to compute the uniform lower bound for the spectral gap of the family of Hamiltonians $\{H(\theta_x,0,\theta_y,0)\}_{(\theta_x,\theta_y)\in\Lambda(r)}$, remembering that $H(0,0,0,0)=H_0$ has gap $\gamma > 0$, by assumption. From the definition of $\Delta$ above, we can see exactly how small $r$ needs to be in order for the gap to remain open during the quasi-adiabatic evolution. At this point, we note the following useful inequality:
\be\label{adiabatic_phase_2}
\left|\braket{\Psi_0}{\Psi_{\circlearrowleft}(r)}^{\left(\frac{2\pi}{r}\right)^2} - e^{4\pi^2\, i\phi(r)/r^2}\right| \le 24 Q_{\max} J \,L \, \frac{e^{-\alpha^2\Delta^2}}{r\cdot \Delta},\quad
\Delta = \gamma - 2 Q_{\max}\, JL\, r.
\ee
To show (\ref{adiabatic_phase_2}), we use the
general inequality for real $b$ with $0 \le b \le 1$ and $m\in \N$ derived in the Appendix:
\be\label{bound:power_difference}
\left |b - e^{i\theta}\right| \le \epsilon \le 1/2 \implies \left |b^m - e^{im\theta}\right| \le \sqrt{\frac{7}{3}}\, m\epsilon.
\ee
The above relation implies that $0\le b\le 1$:
\be
\left |be^{i\phi} - e^{i\theta}\right| = \left |b - e^{i(\theta-\phi)}\right| \le \epsilon \le 1/2 \implies \left |(be^{i\phi})^m - e^{im\theta}\right| = \left |b^m - e^{im(\theta-\phi)}\right| \le \sqrt{\frac{7}{3}}\, m\epsilon,
\ee
which combines with (\ref{adiabatic_phase_1}) to give (\ref{adiabatic_phase_2}).
Moreover, combining the simple inequality 
$$\left|e^{i\phi_1}-e^{i\phi_2}\right| = \left|e^{i(\phi_1-\phi_2)/2}-e^{-i(\phi_1-\phi_2)/2}\right| = 2\sin|(\phi_1-\phi_2)/2| \le |\phi_1-\phi_2|$$ with (\ref{bound:phase_partials}), (\ref{def:cond}) and (\ref{adiabatic_phase_2}), we get:
\begin{cor}\label{cor:adiabatic_phase_3}
For $0 < r \le \frac{\gamma}{4Q_{\max} J L}$, the following bound holds for a numeric constant $C >0$:
\begin{eqnarray}\label{adiabatic_phase_3}
\left|\braket{\Psi_0}{\Psi_{\circlearrowleft}(r)}^{\left(\frac{2\pi}{r}\right)^2} - e^{2\pi i\,\sigma_{xy}/(e^2/h)}\right| \le C \left(Q_{\max}\frac{J}{\gamma} L\right) \left(\frac{e^{-\frac{\alpha^2\gamma^2}{2}}}{r} + \left(Q_{\max} \frac{J}{\gamma} L\right)^2 r \right),
\end{eqnarray}
\end{cor}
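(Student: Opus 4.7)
The plan is to insert the intermediate phase $e^{4\pi^2 i\phi(r)/r^2}$ and apply the triangle inequality to split the quantity as
\[
\Bigl|\braket{\Psi_0}{\Psi_{\circlearrowleft}(r)}^{(2\pi/r)^2} - e^{2\pi i\sigma_{xy}/(e^2/h)}\Bigr|
\le \underbrace{\Bigl|\braket{\Psi_0}{\Psi_{\circlearrowleft}(r)}^{(2\pi/r)^2} - e^{4\pi^2 i\phi(r)/r^2}\Bigr|}_{\mathrm{(I)}}
+ \underbrace{\Bigl|e^{4\pi^2 i\phi(r)/r^2} - e^{2\pi i\sigma_{xy}/(e^2/h)}\Bigr|}_{\mathrm{(II)}}.
\]
Term (I) records the quasi-adiabatic error around the small loop $\Lambda(r)$, while (II) records the error incurred in replacing the averaged geometric phase $\phi(r)/r^2$ by its infinitesimal limit, which by (\ref{def:cond}) equals $\sigma_{xy}/(2\pi\,e^2/h)$ up to the factor $4\pi^2$.

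For (I), I would apply (\ref{adiabatic_phase_2}) directly. The hypothesis $r \le \gamma/(4 Q_{\max} J L)$, inserted into $\Delta = \gamma - 2 Q_{\max} J L\, r$, guarantees $\Delta \ge \gamma/2$, so that $e^{-\alpha^2 \Delta^2}/\Delta$ is controlled by $2 e^{-\alpha^2\gamma^2/2}/\gamma$ up to numerical constants, producing a contribution of order $(Q_{\max} J L/\gamma)\,e^{-\alpha^2 \gamma^2/2}/r$ that is absorbed into the first term of the claimed bound. For (II), I would use the elementary inequality $|e^{i\phi_1}-e^{i\phi_2}| \le |\phi_1-\phi_2|$ combined with Corollary \ref{cor:phi(r)}, which supplies
\[
\Bigl|\phi(r)/r^2 - 2\Im\{\braket{\partial_{\theta_y}\Psi_0}{\partial_{\theta_x}\Psi_0}\}\big|_{(0,0)}\Bigr| \le r \cdot \Bigl(\tfrac12 \sup_{s_3}\|\partial_s g(s,s)|_{s=s_3}\| + \tfrac{r}{4!} \sup_{s_3,s_4}\|\partial_s^2[g(s,s_4)+g(s_4,s)]|_{s=s_3}\|\Bigr),
\]
reducing the task to a uniform bound on first and second derivatives of $g(s,s')=2\Im\{\braket{\partial\Psi_0}{\partial\Psi_0}\}$ along $\Lambda(r)$.

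The main technical obstacle is showing that these derivatives of $g$ scale no worse than $(Q_{\max} J L/\gamma)^3$. Expanding via the product rule, each such term is a polynomial in the partial derivatives of $\ket{\Psi_0(\theta_x,\theta_y)}$ up to third order. I would estimate these using (\ref{bound:partial}) and (\ref{bound:partial_2}), supplemented by an analogous third-order bound obtained by differentiating (\ref{gs:evol}) once more and using the parallel-transport condition (\ref{parallel-transport}), together with (\ref{bound:rot-H_X})-(\ref{bound:rot-H_Y}) and analogous $\|\partial_\theta^k H\| \le Q_{\max}^k J L$ estimates derived by iterating the commutator argument behind (\ref{bound:terms}). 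Under the uniform gap lower bound $\Delta \ge \gamma/2$ valid on $\Lambda(r)$, each such polynomial is dominated by $(Q_{\max} J L/\gamma)^3$; the restriction $r \le \gamma/(4 Q_{\max} J L)$ further ensures that the $r^2$-weighted contribution in Corollary \ref{cor:phi(r)} is no larger than the $r$-weighted one, so that (II) is bounded by $(Q_{\max} J L/\gamma)^3 \cdot r$, matching the second term of the claim. Combining the estimates for (I) and (II) then yields the corollary.
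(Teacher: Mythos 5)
Your proposal follows the paper's argument essentially verbatim: insert the intermediate phase $e^{4\pi^2 i\phi(r)/r^2}$, control term (I) by the bound (\ref{adiabatic_phase_2}) together with the gap estimate $\Delta = \gamma - 2Q_{\max}JLr$, control term (II) by $|e^{i\phi_1}-e^{i\phi_2}|\le|\phi_1-\phi_2|$ plus Corollary \ref{cor:phi(r)}, and then bound the derivatives of $g$ using the $\partial\Psi_0$-estimates (\ref{bound:partial})--(\ref{bound:partial_2}), their higher-order analogues, and the Hamiltonian-derivative bounds of (\ref{bound:rot-H_X})--(\ref{bound:rot-H_Y}). This is precisely the route the paper takes, so no further comparison is needed.
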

where we used (\ref{bound:partial}-\ref{bound:partial_2}) and similar bounds for the norms of partials of the form $\partial^2_s, \partial_{s'} \partial_s, \partial_{s'} \partial^2_s, \partial^3_s$ acting on $\ket{\Psi_0(s',s)}$ and $\ket{\Psi_0(s,s')}$, in order to bound the partials $\partial_s g(s,s)$ and $\partial_s \, [g(s',s)+g(s,s')]$ in (\ref{bound:phase_partials}). In particular, these bounds are in effect bounds on the norms of the same set of partials acting on $H(s',0,s,0)$ and $H(s,0,s',0)$, for which we use arguments similar to those leading to (\ref{bound:rot-H_X}-\ref{bound:rot-H_Y}) to estimate.

The bound for the second term in (\ref{main_bound}) is given as Corollary \ref{cor:gs_evol} below. Before we derive the bound, we show certain trace inequalities and develop energy estimates that allow us to study the quasi-adiabatic evolution of $\Psi_0(0,0)$ under a closed path in parameter space, when we can no longer rely on the assumption of a uniform lower bound $\Delta >0$ for the spectral gap. Note here that we consider {\it each} of the paths 
\be\label{path_components}
\Lambda_1(2\pi): (0,0)\rightarrow (2\pi,0),\, \Lambda_2(2\pi): (2\pi,0)\rightarrow (2\pi,2\pi), \, \Lambda_3(2\pi):(2\pi,2\pi)\rightarrow (0,2\pi), \, \Lambda_4(2\pi):(0,2\pi)\rightarrow (0,0)
\ee
to be closed in parameter space, since $H_0=H(2\pi,0,0,0) = H(2\pi,0,2\pi,0)=H(0,0,2\pi,0)$, which follows directly from the Aharonov-Bohm $2\pi$-periodicity of the interactions $\Phi(Z;\theta_x,0,\theta_y,0)$.
%% Partial Trace Inequalities %%

\subsection{Partial trace approximation}
We now prove trace inequalities that describe the
partial trace of the quasi-adiabatic evolution of $\ket{\Psi_0}$, both near and far from the twists driving the evolution.   The estimates after tracing
out sites near from the twists driving the evolution are based on locality,
while the estimates after tracing out sites far from the twists driving
the evolution use the virtual flux idea of \cite{hast-lsm}, by comparing
the partial trace to a partial trace of a density matrix in which the
angles $\phi$ are twisted also.

First, we define the following useful states:
\begin{eqnarray}
\ket{\Psi_X(\theta)} &=& U_X(0,0,\theta)\,\ket{\Psi_0}, \quad \ket{\Psi_X(\theta,-\theta)} = U^{(2)}_X(0,0,\theta)\,\ket{\Psi_0}, \label{psi_X}\\ 
\ket{\Psi_Y(\theta)} &=& U_Y(0,0,\theta)\,\ket{\Psi_0}, \quad \ket{\Psi_Y(\theta,-\theta)} = U_Y^{(2)}(0,0,\theta)\,\ket{\Psi_0}\label{psi_Y},
\end{eqnarray}
where the unitaries $U_X(0,0,\theta)$, $U^{(2)}_X(0,0,\theta)$, $U_Y(0,0,\theta)$ and $U^{(2)}_Y(0,0,\theta)$ are defined in (\ref{def:unitary_X}-\ref{def:unitary_Y}), with dynamics based on the one-parameter families of Hamiltonians $\{H(\theta',0,0,0)\}_{\theta'\in [0,\theta]}$, $\{H(\theta',-\theta',0,0)\}_{\theta'\in [0,\theta]}$, $\{H(0,0,\theta',0)\}_{\theta'\in [0,\theta]}$ and $\{H(0,0,\theta',-\theta')\}_{\theta'\in [0,\theta]}$, respectively.
We define the density matrices corresponding to the above pure states by
\begin{eqnarray}
\rho_X(\theta) &=& \pure{\Psi_X(\theta)},\quad \rho_X(\theta,-\theta) = \pure{\Psi_X(\theta,-\theta)}\label{rho_X},\\
\rho_Y(\theta) &=& \pure{\Psi_X(\theta)},\quad \rho_Y(\theta,-\theta) = \pure{\Psi_Y(\theta,-\theta)}\label{rho_Y}.
\end{eqnarray}
Now, for the sets $\Omega_X$, $\Omega_Y$ defined in (\ref{sets_omega}),
we introduce the following Hamiltonian restrictions: 
\begin{eqnarray}
H_{\Omega_X}(\theta) &=& \sum_{Z \subset \Omega_X} \Phi(Z;\theta,0,0,0), \quad H_{\Omega_X^c}(\theta) = \sum_{Z \not \subset \Omega_X}\Phi(Z;0,-\theta,0,0),\\ 
H_{\Omega_Y}(\theta) &=& \sum_{Z \subset \Omega_Y} \Phi(Z;0,0,\theta,0), \quad H_{\Omega_Y^c}(\theta) = \sum_{Z \not \subset \Omega_Y}\Phi(Z;0,0,0,-\theta).
\end{eqnarray}
Using the above restrictions, we have the following useful decompositions:
\begin{eqnarray}\label{omega_X_decomposition}
H(\theta,0,0,0) &=& H_{\Omega_X}(\theta) + H_{\Omega_X^c}(0), \quad H(\theta,-\theta,0,0) = H_{\Omega_X}(\theta) + H_{\Omega_X^c}(\theta),\\
H(0,0,\theta,0) &=& H_{\Omega_Y}(\theta) + H_{\Omega_Y^c}(0), \quad H(0,0,\theta,-\theta) = H_{\Omega_Y}(\theta) + H_{\Omega_Y^c}(\theta).
\label{omega_Y_decomposition}
\end{eqnarray}

For any set $Z$, we define $\overline Z$ to be the complement of $Z$.
We define $\Omega_X^c$ to be the set of sites within distance $R$ of $\overline{\Omega_X}$ and we define $\Omega_Y^c$ to be the set of sites within distance $R$ of
$\overline{\Omega_Y}$.  Then, $H_{\Omega_X^c}$ is supported on $\Omega_X^c$ and
similarly $H_{\Omega_Y^c}$ is supported on $\Omega_Y^c$.  Thus, $H$ is a
sum of terms $\Phi(Z)$ such that each set $Z$ is either supported on $\Omega_X$, or
on $\Omega_X^c$.

%% Partial Trace Approximations %%
\begin{lemma}[Partial trace approximations]\label{lem:partial_trace}
The following partial-trace norm inequalities hold for the states and sets defined above, for a numeric constant $C > 0$:
\begin{flalign*}
&\max\left\{\|\Tr_{\overline{\Omega_X^c}} (\rho_X(\theta) - \rho_X(0))\|_1,\, \|\Tr_{\overline{\Omega_Y^c}} (\rho_Y(\theta) - \rho_Y(0))\|_1\right\} \le C |\theta| \frac{\alpha \gamma}{\sqrt{2\pi}}\, \left(Q_{\max} \frac{J}{\gamma} L\right) \frac{L}{\sigma^2} e^{- \frac{L}{8R}}, \\
&\max\{\|\Tr_{\overline{\Omega_X}} \left(\rho_X(\theta) - R_X(\theta, \rho_X(0))\right)\|_1,\,
 \|\Tr_{\overline{\Omega_Y}} \left(\rho_Y(\theta) - R_Y(\theta,\rho_Y(0))\right)\|_1\} \le \\
 &C |\theta| \left(Q_{\max} \frac{J}{\gamma} L\right)\left(3 \frac{\alpha \gamma}{\sqrt{2\pi}} \frac{L}{\sigma^2} e^{- \frac{L}{8R}} + e^{-\alpha^2\gamma^2}\right)
 \end{flalign*}
 \begin{proof}
We only prove the bounds for $\rho_X$, since the bounds for $\rho_Y$ follow similarly.
Setting $H^{(1)}(\theta)=H(\theta,0,0,0)$ and $H^{(2)}(\theta)=H(\theta,-\theta,0,0)$, we note that
\begin{eqnarray*}
\partial_{\theta'} \rho_X(\theta') &=& i\left[\Sa\left(H^{(1)}(\theta'),\partial_{\theta'} H_{\Omega_X}(\theta')\right),\rho_X(\theta')\right], \\
\partial_{\theta'} \rho_X(\theta',-\theta') &=& i\left[\Sa\left(H^{(2)}(\theta'),\partial_{\theta'} H_{\Omega_X}(\theta')\right),\rho_X(\theta',-\theta')\right] + i\left[\Sa\left(H^{(2)}(\theta'),\partial_{\theta'} H_{\Omega_X^c}(\theta')\right),\rho_X(\theta',-\theta')\right].
\end{eqnarray*}
Furthermore, recalling the definition of $\Sa^{(M)}(H,A)$ given in (\ref{sa_decomposition}) and using the fact that $\partial_{\theta'} H_{\Omega_X}(\theta')$ has support on a strip of width $2R$ centered on the line $x=1$, we have:
\begin{eqnarray*}
\Sa^{(L/4-R)}(H^{(1)}(\theta'),\partial_{\theta'} H_{\Omega_X}(\theta')) = \Sa^{(L/4-R)}(H^{(2)}(\theta'),\partial_{\theta'} H_{\Omega_X}(\theta'))
\end{eqnarray*}
with support on $\overline{\Omega^c_X}$.
Moreover, from Lemma \ref{lem:sa_approx} with $ L/4-R \ge \sigma^2 R$, we get:
\begin{eqnarray*}
\left\|\Sa(H^{(1)}(\theta'),\partial_{\theta'} H_{\Omega_X}(\theta'))-\Sa^{(L/4-R)}(H^{(1)}(\theta'),\partial_{\theta'} H_{\Omega_X}(\theta'))\right\| &\le& \frac{16\alpha}{\sqrt{2\pi}}\, Q_{\max} J \frac{L^2}{\sigma^2} \, e^{- \frac{L}{8R}+\frac{1}{2}}\\
\left\|\Sa(H^{(2)}(\theta'),\partial_{\theta'} H_{\Omega_X}(\theta'))-\Sa^{(L/4-R)}(H^{(1)}(\theta'),\partial_{\theta'} H_{\Omega_X}(\theta'))\right\| &\le&  \frac{16\alpha}{\sqrt{2\pi}}\, Q_{\max} J \frac{L^2}{\sigma^2}\, e^{- \frac{L}{8R}+\frac{1}{2}}\\
\left\|\Sa(H^{(2)}(\theta'),\partial_{\theta'} H_{\Omega_X^c}(\theta'))-\Sa^{(L/4-R)}(H^{(2)}(\theta'),\partial_{\theta'} H_{\Omega_X^c}(\theta'))\right\| &\le& \frac{16\alpha}{\sqrt{2\pi}}\, Q_{\max} J \frac{L^2}{\sigma^2} \, e^{- \frac{L}{8R}+\frac{1}{2}}.
\end{eqnarray*}
We also note that $\Sa^{(L/4-R)}(H^{(2)}(\theta'),\partial_{\theta'} H_{\Omega_X^c}(\theta'))$ is supported on $\overline{\Omega_X}$, since $\partial_{\theta'} H_{\Omega_X^c}(\theta')$ has support on a strip of width $2R$ centered on the line $x=L/2+1$. Equipped with the above observations, we turn our attention to proving the upper bound on $\|\Tr_{\overline{\Omega_X^c}} (\rho_X(\theta) - \rho_X(0))\|_1$:
\begin{flalign*}
&\|\Tr_{\overline{\Omega_X^c}} (\rho_X(\theta) - \rho_X(0))\|_1 
= \left\|\int_0^\theta \Tr_{\overline{\Omega_X^c}} [\Sa(H^{(1)}(\theta'),\partial_{\theta'} H_{\Omega_X}(\theta')),\rho_X(\theta')]\, d\theta' \right\|_1 \\
&\le |\theta| \sup_{\theta'\in[0,\theta]} \left\|\Tr_{\overline{\Omega_X^c}} [\Sa(H^{(1)}(\theta'),\partial_{\theta'} H_{\Omega_X}(\theta')),\rho_X(\theta')] \right\|_1
= |\theta| \sup_{\theta'\in[0,\theta]} \sup_{\stackrel{A\in \A_{\Omega^c_X}}{\|A\|=1}} |\Tr(A\,[\Sa(H^{(1)}(\theta'),\partial_{\theta'} H_{\Omega_X}(\theta')),\rho_X(\theta')])|\\
&= |\theta| \sup_{\theta'\in[0,\theta]} \sup_{\stackrel{A\in \A_{\Omega^c_X}}{\|A\|=1}} |\Tr([A,\Sa(H^{(1)}(\theta'),\partial_{\theta'} H_{\Omega_X}(\theta'))] \,\rho_X(\theta'))|
\le  |\theta| \sup_{\theta'\in[0,\theta]} \sup_{\stackrel{A\in \A_{\Omega^c_X}}{\|A\|=1}} \|[A,\Sa(H^{(1)}(\theta'),\partial_{\theta'} H_{\Omega_X}(\theta'))]\|\\
&=  |\theta| \sup_{\theta'\in[0,\theta]} \sup_{\stackrel{A\in \A_{\Omega^c_X}}{\|A\|=1}} \|[A,\Sa(H^{(1)}(\theta'),\partial_{\theta'} H_{\Omega_X}(\theta'))-\Sa^{(L/4-R)}(H^{(1)}(\theta'),\partial_{\theta'} H_{\Omega_X}(\theta'))]\|\\
&\le 2\, |\theta|\, \sup_{\theta'\in[0,\theta]} \|\Sa(H^{(1)}(\theta'),\partial_{\theta'} H_{\Omega_X}(\theta'))-\Sa^{(L/4-R)}(H^{(1)}(\theta'),\partial_{\theta'} H_{\Omega_X}(\theta'))\|
\le |\theta| \frac{32\alpha}{\sqrt{2\pi}}\, Q_{\max} J \frac{L^2}{\sigma^2} \, e^{- \frac{L}{8R}+\frac{1}{2}}.
\end{flalign*}
To prove the bound on $ \|\Tr_{\overline{\Omega_X}} \left(\rho_X(\theta) - R_X(\theta, \rho_X(0))\right)\|_1$, we define the unitary $U_{{\Omega_X}}(\theta)$ with support on the set $\overline{\Omega_X^c}$ by the differential equation:
\be
\partial_{\theta'} \, U_{{\Omega_X}}(\theta') = i\, \Sa^{(L/4-R)}(H^{(1)}(\theta'),\partial_{\theta'} H_{\Omega_X}(\theta'))\,U_{{\Omega_X}}(\theta'),\quad U_{{\Omega_X}}(0)= \one.
\ee
Now, we have:
\begin{flalign*}
&\|\Tr_{\overline{\Omega_X}} (\rho_X(\theta) - \rho_X(\theta,-\theta))\|_1 = \left\|\Tr_{\overline{\Omega_X}} \left[U^{\dagger}_{{\Omega_X}}(\theta) \, (\rho_X(\theta) - \rho_X(\theta,-\theta))\,U_{{\Omega_X}}(\theta)\right]\right\|_1\\
 &\le \int_0^\theta \left\| \Tr_{\overline{\Omega_X}} \left[\Sa(H^{(1)}(\theta'),\partial_{\theta'} H_{\Omega_X}(\theta'))-\Sa^{(L/4-R)}(H^{(1)}(\theta'),\partial_{\theta'} H_{\Omega_X}(\theta')),\rho_X(\theta')\right] \right\|_1 \, d\theta' \nonumber \\
&+ \int_0^\theta \left\| \Tr_{\overline{\Omega_X}} \left[\Sa(H^{(2)}(\theta'),\partial_{\theta'} H_{\Omega_X}(\theta'))-\Sa^{(L/4-R)}(H^{(1)}(\theta'),\partial_{\theta'} H_{\Omega_X}(\theta')),\rho_X(\theta',-\theta')\right] \right\|_1 d\theta' \nonumber\\
 &+ \int_0^\theta \left\| \Tr_{\overline{\Omega_X}} \left[\Sa(H^{(2)}(\theta'),\partial_{\theta'} H_{\Omega^c_X}(\theta')),\rho_X(\theta',-\theta')\right] \right\|_1\, d\theta' \\
 &\le 2|\theta| \sup_{\theta'\in [0,\theta]} \left\|\Sa(H^{(1)}(\theta'),\partial_{\theta'} H_{\Omega_X}(\theta'))-\Sa^{(L/4-R)}(H^{(1)}(\theta'),\partial_{\theta'} H_{\Omega_X}(\theta')) \right\|\, \nonumber \\
 &+ 2|\theta| \sup_{\theta'\in [0,\theta]} \left\|\Sa(H^{(2)}(\theta'),\partial_{\theta'} H_{\Omega_X}(\theta'))-\Sa^{(L/4-R)}(H^{(1)}(\theta'),\partial_{\theta'} H_{\Omega_X}(\theta')) \right\|\, \nonumber \\
 &+ 2 |\theta| \sup_{\theta'\in[0,\theta]} \left\|\Sa(H^{(2)}(\theta'),\partial_{\theta'} H_{\Omega^c_X}(\theta'))-\Sa^{(L/4-R)}(H^{(2)}(\theta'),\partial_{\theta'} H_{\Omega^c_X}(\theta')) \right\| \\
 &\le |\theta|  \frac{96\alpha}{\sqrt{2\pi}}\, Q_{\max} J \frac{L^2}{\sigma^2} \, e^{- \frac{L}{8R}+\frac{1}{2}}
\end{flalign*}
Finally, noting that $\rho_X(0,0) = \rho_X(0)=P_0$ and that $\{R_X(\theta', \rho_X(0,0))\}_{\theta'\in[0,\theta]}$ is the family of ground states corresponding to the unitarily equivalent family of Hamiltonians $\{H(\theta',-\theta',0,0)\}_{\theta'\in[0,\theta]}$ with spectral gap $\gamma$, we bound:
\be
\|\Tr_{\overline{\Omega_X}} (\rho_X(\theta,-\theta) - R_X(\theta, \rho_X(0,0)))\|_1 \le \|\rho_X(\theta,-\theta) - R_X(\theta, \rho_X(0,0))\|_1 \le |\theta|\, Q_{\max} \frac{J}{\gamma}\, L\, e^{-\alpha^2\gamma^2}
\ee
by applying (\ref{delta-small}) to the quasi-adiabatic evolution of $\rho_X(0,0)$.
Using the triangle inequality 
$$\|\Tr_{\overline{\Omega_X}} \left(\rho_X(\theta) - R_X(\theta, \rho_X(0))\right)\|_1\le \|\Tr_{\overline{\Omega_X}} (\rho_X(\theta) - \rho_X(\theta,-\theta))\|_1+\|\Tr_{\overline{\Omega_X}} (\rho_X(\theta,-\theta) - R_X(\theta, \rho_X(0,0)))\|_1$$
with the above bounds, completes the proof.
\end{proof}
\end{lemma}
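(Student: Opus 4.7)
The plan is to prove the two trace-norm inequalities in Lemma \ref{lem:partial_trace} separately, since they require structurally different arguments: the first is a pure locality statement (observables far from the twist see no change), while the second uses the virtual-flux / twist-anti-twist trick from \cite{hast-lsm} together with locality. Throughout, I will use the duality $\|\Tr_{\overline{W}} M\|_1 = \sup_{A\in\A_W,\,\|A\|\le 1} |\Tr(AM)|$ together with cyclicity to turn partial-trace norms into norms of commutators, and then feed commutators into Lemma \ref{lem:sa_approx}. By symmetry I only argue for $\rho_X$; the $\rho_Y$ case is identical.

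For the first bound, note that $\rho_X(\theta)$ is driven by the single generator $\Sa(H^{(1)}(\theta'),\partial_{\theta'}H_{\Omega_X}(\theta'))$, whose second argument is supported in a strip of width $2R$ about the line $x=1$. Lemma \ref{lem:sa_approx} replaces this generator by its truncation $\Sa^{(L/4-R)}$, whose support lies inside $\overline{\Omega_X^c}$, at a cost of order $(\alpha/\sqrt{2\pi})\,Q_{\max}J\,(L^2/\sigma^2)\,e^{-L/(8R)}$ (one needs $L/4-R\ge \sigma^2R$ so we land in the exponential-in-$L$ regime of Lemma \ref{lem:sa_approx}). For any test observable $A\in\A_{\Omega_X^c}$ with $\|A\|\le 1$, the commutator of $A$ with the truncated generator vanishes since their supports are disjoint, so only the truncation error contributes. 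Integrating $\partial_{\theta'}\rho_X(\theta')=i[\Sa(\cdot),\rho_X(\theta')]$ over $[0,\theta]$ and taking the supremum over $A$ gives the claimed $|\theta|\cdot L^2/\sigma^2\,e^{-L/(8R)}$ scaling.

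For the second bound, I would first dominate by the triangle inequality
\[
\|\Tr_{\overline{\Omega_X}}(\rho_X(\theta)-R_X(\theta,\rho_X(0)))\|_1 \le \|\Tr_{\overline{\Omega_X}}(\rho_X(\theta)-\rho_X(\theta,-\theta))\|_1 + \|\rho_X(\theta,-\theta)-R_X(\theta,\rho_X(0))\|_1,
\]
handling the two pieces separately. The second piece is controlled by Lemma \ref{lem:quasi}: because $H(\theta',-\theta',0,0)=R_X(\theta',H_0)$ by (\ref{twist-anti-twist}), this family is unitarily equivalent to $H_0$ and thus has the constant gap $\gamma$, so quasi-adiabatic evolution of the ground state produces $R_X(\theta,\rho_X(0))$ with error of order $|\theta|\,Q_{\max}(J/\gamma)L\,e^{-\alpha^2\gamma^2}$. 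For the first piece, the generators driving $\rho_X(\theta)$ and $\rho_X(\theta,-\theta)$ differ by $\Sa(H^{(2)},\partial_{\theta'}H_{\Omega_X^c}(\theta'))$ plus the mismatch $\Sa(H^{(1)},\partial_{\theta'}H_{\Omega_X}) - \Sa(H^{(2)},\partial_{\theta'}H_{\Omega_X})$. The first of these, after truncation by Lemma \ref{lem:sa_approx} to $\Sa^{(L/4-R)}$, is supported in $\overline{\Omega_X}$, so its commutator with any $A\in\A_{\Omega_X}$ vanishes. The mismatch, also after truncation, vanishes identically because $H^{(1)}$ and $H^{(2)}$ agree on the ball of radius $L/4-R$ around the support of $\partial_{\theta'}H_{\Omega_X}$. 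In both cases only the truncation error survives, contributing three copies of the $(L/\sigma^2)\,e^{-L/(8R)}$ term, which yields the stated bound after integrating over $\theta'$.

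The main technical obstacle is the second bound: one must route the truncation errors through three different estimates (twist generator, anti-twist generator, and the mixed $H^{(1)}/H^{(2)}$ mismatch) while simultaneously keeping the adiabatic error separate, and the whole argument only works because $\Omega_X$ and $\overline{\Omega_X^c}$ are chosen to be thick enough that the truncation radius $L/4-R$ both exceeds the range of the evolved observable and lands in the exponential regime of Lemma \ref{lem:sa_approx}. A minor but important point is that one should introduce a localized unitary $U_{\Omega_X}(\theta)$ generated by the truncated $\Sa^{(L/4-R)}$, conjugate both states by it (which leaves the partial trace invariant since $U_{\Omega_X}$ acts on $\overline{\Omega_X^c}\supset\Omega_X$), and only then differentiate, so that the derivative picks up precisely the truncation defects rather than the full (non-local) generators.
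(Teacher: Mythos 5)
Your proposal follows the paper's own argument in all essentials: dualize the partial-trace norms into commutator norms, truncate the quasi-adiabatic generators via Lemma \ref{lem:sa_approx} so that locality makes the commutators vanish, and for the second bound split through the anti-twisted state $\rho_X(\theta,-\theta)$, using the unitary-equivalence gap estimate of Lemma \ref{lem:quasi} for one piece and conjugation by the localized unitary $U_{\Omega_X}$ followed by differentiation for the other. One slip to fix: you write $\overline{\Omega_X^c}\supset\Omega_X$, but in fact $\overline{\Omega_X^c}\subset\Omega_X$; the conjugation by $U_{\Omega_X}$ commutes with $\Tr_{\overline{\Omega_X}}$ precisely because the support $\overline{\Omega_X^c}$ of $U_{\Omega_X}$ is \emph{contained} in $\Omega_X$ and hence disjoint from the traced-out region $\overline{\Omega_X}$.
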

%% ENERGY ESTIMATES %%
\subsection{Energy estimates}
We now prove the energy estimates.
\begin{prop}[Energy estimate]\label{prop:energy}
For the states $\ket{\Psi_X(\theta)}$ and $\ket{\Psi_Y(\theta)}$ defined in (\ref{psi_X}) and (\ref{psi_Y}), respectively, the following energy estimate is true, for a numeric constant $C >0$:
\begin{eqnarray*}
\left|\braket{\Psi_X(\theta)}{H(\theta,0,0,0)\, \Psi_X(\theta)}-E_0\right| \le C |\theta| \,Q_{\max}\frac{J}{\gamma} L^3 \left( \frac{\alpha \gamma}{\sqrt{2\pi}}\,\frac{L}{\sigma^2} \, e^{- \frac{L}{8R}} + e^{-\alpha^2\gamma^2}\right) J,
\end{eqnarray*}
with $E_0$ the ground state energy of $H_0$. The same estimate holds for $\left|\braket{\Psi_Y(\theta)}{H(0,0,\theta,0)\, \Psi_Y(\theta)}-E_0\right|$.
\begin{proof}
We will show the bound for $\ket{\Psi_X(\theta)}$, since the bound for $\ket{\Psi_Y(\theta)}$ follows from a symmetric argument. Noting that $R_X(\theta, \rho_X(0))$ is the ground state of $H(\theta,-\theta,0,0)$, we have:
\be
E_0(0) = \Tr (H_0 \rho_X(0)) = \Tr (H(\theta,-\theta,0,0)\, R_X(\theta, \rho_X(0))) = \Tr (H_{{\Omega_X}}(\theta) \,R_X(\theta, \rho_X(0)))+ \Tr (H_{\Omega^c_X}(0) \rho_X(0))\nonumber
\ee
where we used the unitary equivalence from (\ref{twist-anti-twist}) to get the second equality and also, 
$$\Tr (H_{\Omega^c_X}(\theta) R_X(\theta, \rho_X(0))) = \Tr (H_{\Omega^c_X}(0) \rho_X(0)).$$
Now, recalling the decompositions given in (\ref{omega_X_decomposition}) and noting that $\rho_X(0) = \pure{\Psi_0}$, we have the following estimates:
\begin{flalign*}
&\left|\braket{\Psi_X(\theta)}{H(\theta,0,0,0)\, \Psi_X(\theta)}-E_0(0)\right| = |\Tr \left(H_{\Omega_X}(\theta) (\rho_X(\theta) - R_X(\theta, \rho_X(0)))\right) + \Tr \left(H_{\Omega^c_X}(0) (\rho_X(\theta) - \rho_X(0))\right)|\\
&\le |\Tr \left(H_{\Omega_X}(\theta) (\rho_X(\theta) - R_X(\theta, \rho_X(0)))\right)| + |\Tr \left(H_{\Omega^c_X}(0) (\rho_X(\theta) - \rho_X(0))\right)| \\
&\le \|H_{\Omega_X}(\theta)\|\cdot  \left(\|\Tr_{\overline{\Omega_X}} (\rho_X(\theta) - R_X(\theta, \rho_X(0)))\|_1\right) 
+ \|H_{\Omega^c_X}(0)\|\cdot \|\Tr_{{\overline{\Omega_X^c}}} (\rho_X(\theta) - \rho_X(0))\|_1\\
&\le J\, L^2\, \left(\|\Tr_{\overline{\Omega_X}} (\rho_X(\theta) - R_X(\theta,\rho_X(0)))\|_1 + \|\Tr_{{\overline{\Omega_X^c}}} (\rho_X(\theta) - \rho_X(0))\|_1\right)
\end{flalign*}
and using Lemma \ref{lem:partial_trace}, completes the proof. \end{proof}
\end{prop}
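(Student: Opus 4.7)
The central idea is a ``virtual anti-flux'' cancellation based on the unitary equivalence~(\ref{twist-anti-twist}). A direct comparison of $\Tr[H(\theta,0,0,0)\,\rho_X(\theta)]$ with $E_0=\Tr[H_0\,\rho_X(0)]$ is hopeless: the two Hamiltonians differ on a macroscopic strip near $x=1$, and the difference $\rho_X(\theta)-\rho_X(0)$ is only controlled in trace norm after partially tracing one of two complementary regions, via Lemma~\ref{lem:partial_trace}. The plan is therefore to rewrite $E_0$ in a form where the ``far'' piece $H_{\Omega_X^c}$ contributes exactly its untwisted value, leaving only differences of the form controlled by Lemma~\ref{lem:partial_trace}.

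Concretely, I will proceed as follows. By~(\ref{twist-anti-twist}), $R_X(\theta,H_0)=H(\theta,-\theta,0,0)$, so $R_X(\theta,\rho_X(0))$ is the exact ground state of $H(\theta,-\theta,0,0)$ with the same energy $E_0$. The decomposition~(\ref{omega_X_decomposition}) writes $H(\theta,0,0,0)=H_{\Omega_X}(\theta)+H_{\Omega_X^c}(0)$ and $H(\theta,-\theta,0,0)=H_{\Omega_X}(\theta)+H_{\Omega_X^c}(\theta)$, with a common ``near'' piece $H_{\Omega_X}(\theta)$. For the ``far'' piece, the anti-twist is tailored so that $H_{\Omega_X^c}(\theta)=R_X(\theta,H_{\Omega_X^c}(0))$: interactions near $x=L/2+1$ acquire the same $R_X(\theta,\cdot)$ rotation, while bulk terms commute with $Q_X$ because they live entirely inside one of the two charge halves. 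Cyclicity of the trace gives $\Tr[H_{\Omega_X^c}(\theta)\,R_X(\theta,\rho_X(0))]=\Tr[H_{\Omega_X^c}(0)\,\rho_X(0)]$, so that $E_0=\Tr[H_{\Omega_X}(\theta)\,R_X(\theta,\rho_X(0))]+\Tr[H_{\Omega_X^c}(0)\,\rho_X(0)]$. Subtracting from $\Tr[H(\theta,0,0,0)\,\rho_X(\theta)]$ leaves
\[
\Tr\!\big[H_{\Omega_X}(\theta)\big(\rho_X(\theta)-R_X(\theta,\rho_X(0))\big)\big]+\Tr\!\big[H_{\Omega_X^c}(0)\big(\rho_X(\theta)-\rho_X(0)\big)\big].
\]

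Because $H_{\Omega_X}(\theta)$ is supported on $\Omega_X$ and $H_{\Omega_X^c}(0)$ on $\Omega_X^c$, each trace equals that operator tested against the appropriate partial trace of the difference of density matrices, and by H\"older each term is bounded by $\|H_\cdot\|$ times a partial-trace $1$-norm of exactly the type appearing in Lemma~\ref{lem:partial_trace}. The crude estimate $\|H_{\Omega_X}(\theta)\|,\|H_{\Omega_X^c}(0)\|\le JL^2$, obtained from assumption~1(b) summed over the $L^2$ sites of $T$, combined with the two bounds of Lemma~\ref{lem:partial_trace} then yields the stated inequality. The $Y$-direction statement follows by an identical argument with $X$ and $Y$ interchanged.

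The only non-routine step is the exact cancellation $\Tr[H_{\Omega_X^c}(\theta)R_X(\theta,\rho_X(0))]=\Tr[H_{\Omega_X^c}(0)\rho_X(0)]$; without the virtual anti-flux, the analogous expression for $E_0$ would mismatch and leave an uncontrolled $O(JL^2)$ residual that would swamp the target bound. Once that identity is in place, the remainder---operator supports, the $JL^2$ strength bound, and the two partial-trace inequalities of Lemma~\ref{lem:partial_trace}---is just bookkeeping.
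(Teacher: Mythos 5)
Your proposal is correct and follows the same route as the paper: rewrite $E_0$ via the virtual anti-twist so that $R_X(\theta,\rho_X(0))$ is the exact ground state of $H(\theta,-\theta,0,0)$, split using the decomposition~(\ref{omega_X_decomposition}) so the far piece contributes its untwisted value, then bound the two residual trace differences by $JL^2$ times the partial-trace norms of Lemma~\ref{lem:partial_trace}. Your explicit verification that $H_{\Omega_X^c}(\theta)=R_X(\theta,H_{\Omega_X^c}(0))$ is a useful elaboration of the identity the paper states without comment, but it is the same argument.
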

%% Big loop phase %%
\subsection{The phase around the big loop.}
In order to derive the next bound, we break the evolution of $\ket{\Psi_{\circlearrowleft}(2\pi)}$ around $\Lambda(2\pi)$ into its four individual components $\{\Lambda_i(2\pi)\}_{i=1}^4$ given in (\ref{path_components}) and apply Proposition \ref{prop:energy} to each one, which gives:
\begin{cor}\label{cor:gs_evol}
For a numeric constant $C > 0$, the following bound holds:
\be\label{gs_evol}
\left|\braket{\Psi_0}{\Psi_{\circlearrowleft}(2\pi)} - 1\right| \le C |\theta| \,Q_{\max} \left(\frac{J}{\gamma}\right)^2 L^3 \left( \frac{\alpha \gamma}{\sqrt{2\pi}}\,\frac{L}{\sigma^2}\, e^{- \frac{L}{8R}}+e^{-\alpha^2\gamma^2}\right),
\ee
\begin{proof}
We begin by noting that the $2\pi$ periodicity of $H(\theta_x,0,\theta_y,0)$ in each angle, implies:
\begin{flalign*}
&\braket{\Psi_0}{\Psi_{\circlearrowleft}(2\pi)} = \braket{\Psi_0}{U^{\dagger}_Y(0,0,2\pi)\, U^{\dagger}_X(0,0,2\pi)\,U_Y(0,0,2\pi)\,U_X(0,0,2\pi)\Psi_0}\\
&= \braket{\Psi_Y(2\pi)}{P_0\, U^{\dagger}_X(0,0,2\pi)\,P_0 \,U_Y(0,0,2\pi)\,P_0\,\Psi_X(2\pi)}
+ \braket{\Psi_Y(2\pi)}{P_0\, U^{\dagger}_X(0,0,2\pi)\,Q_0 \,U_Y(0,0,2\pi)\,P_0\,\Psi_X(2\pi)}\nonumber\\
&+ \braket{\Psi_Y(2\pi)}{P_0\, U^{\dagger}_X(0,0,2\pi)\,U_Y(0,0,2\pi)\,\delta_X(2\pi)}
+ \braket{\delta_Y(2\pi)}{U^{\dagger}_X(0,0,2\pi)\,U_Y(0,0,2\pi)\,\Psi_X(2\pi)}\\
&= |\braket{\Psi_Y(2\pi)}{\Psi_0}|^2\, |\braket{\Psi_0}{\Psi_X(2\pi)}|^2 + \braket{\Psi_Y(2\pi)}{\Psi_0}\, \braket{\delta_X(2\pi)}{\delta_Y(2\pi)}\,\braket{\Psi_0}{\Psi_X(2\pi)} \\
&+ \braket{\Psi_Y(2\pi)}{\Psi_0}\,\braket{\Psi_0}{\Psi'_X(2\pi)}\,\braket{\delta'_Y(2\pi)}{\delta_X(2\pi)} +
\braket{\Psi_Y(2\pi)}{\Psi_0}\,\braket{\delta_X(2\pi)}{U_Y(0,0,2\pi)\,\delta_X(2\pi)} 
\nonumber\\
&+ \braket{\delta_Y(2\pi)}{\delta'_X(2\pi)}\,\braket{\Psi_0}{U_Y(0,0,2\pi)\,\Psi_X(2\pi)} + \braket{\delta_Y(2\pi)}{U^{\dagger}_X(0,0,2\pi)\,\delta_Y(2\pi)}\,\braket{\Psi_0}{\Psi_X(2\pi)} \\
&+ \braket{\delta_Y(2\pi)}{U^{\dagger}_X(0,0,2\pi)\,(1-P_0)\, U_Y(0,0,2\pi)\,\delta_X(2\pi)},
\end{flalign*}
where we inserted $\one = P_0 + Q_0$ between unitaries and set:
\begin{flalign*}
&\ket{\Psi_X(2\pi)} = U_X(0,0,2\pi)\ket{\Psi_0},\quad \ket{\Psi_Y(2\pi)}=U_Y(0,0,2\pi)\ket{\Psi_0},\\
&\ket{\Psi'_X(2\pi)} = U^{\dagger}_X(0,0,2\pi)\ket{\Psi_0},\quad \ket{\Psi'_Y(2\pi)}=U^{\dagger}_Y(0,0,2\pi)\ket{\Psi_0},\\
&\delta_X(2\pi) = Q_0 \ket{\Psi_X(2\pi)} ,\quad  \delta_Y(2\pi) = Q_0 \ket{\Psi_Y(2\pi)},\quad
\delta'_X(2\pi) = Q_0 \ket{\Psi'_X(2\pi)},\quad \delta'_Y(2\pi) = Q_0 \ket{\Psi'_Y(2\pi)}.
\end{flalign*}
The above expression for $\braket{\Psi_0}{\Psi_{\circlearrowleft}(2\pi)}$ combined with the fact that 
$$|\braket{\Psi_Y(2\pi)}{\Psi_0}|^2 = 1 - \|\delta_Y(2\pi)\|^2,\quad |\braket{\Psi_X(2\pi)}{\Psi_0}|^2 = 1 - \|\delta_X(2\pi)\|^2$$ and 
$\|\delta_Y(2\pi)\|^2=\|\delta'_Y(2\pi)\|^2$, $\|\delta_X(2\pi)\|^2=\|\delta'_X(2\pi)\|^2$, gives the bound:
\be
|\braket{\Psi_0}{\Psi_{\circlearrowleft}(2\pi)}-1| \le 2(\|\delta_Y(2\pi)\|^2+\|\delta_X(2\pi)\|^2) + 4 \|\delta_Y(2\pi)\|\,\|\delta_X(2\pi)\| + \|\delta_Y(2\pi)\|^2\,\|\delta_X(2\pi)\|^2
\ee
Finally, noting that $H(2\pi,0,0,0)=H(0,0,2\pi,0)=H_0 \ge E_0 \one+ \gamma\, Q_0$, we have
\be
\|\delta_Y(2\pi)\|^2 \le \frac{1}{\gamma}\, |\braket{\Psi_Y(2\pi)}{H_0\,\Psi_Y(2\pi)}-E_0|, \quad 
\|\delta_X(2\pi)\|^2 \le \frac{1}{\gamma}\, |\braket{\Psi_X(2\pi)}{H_0\,\Psi_X(2\pi)}-E_0|
\ee
and applying Proposition \ref{prop:energy} with $\theta = 2\pi$ to the above inequalities completes the proof.
\end{proof}
\end{cor}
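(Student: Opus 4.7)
The plan is to exploit the Aharonov-Bohm $2\pi$-periodicity together with the energy estimate from Proposition \ref{prop:energy}. Since $H(\theta_x+2\pi,\phi_x,\theta_y,\phi_y)=H(\theta_x,\phi_x,\theta_y,\phi_y)$ (and similarly for $\theta_y$), the quasi-adiabatic generators satisfy $U_Y(2\pi,0,2\pi)=U_Y(0,0,2\pi)$ and $U_X(0,2\pi,2\pi)=U_X(0,0,2\pi)$, which collapses (\ref{def:loop-states}) to
\be
\ket{\Psi_{\circlearrowleft}(2\pi)} = U_Y^{\dagger}(0,0,2\pi)\,U_X^{\dagger}(0,0,2\pi)\,U_Y(0,0,2\pi)\,U_X(0,0,2\pi)\,\ket{\Psi_0}.
\ee
Moreover $H(2\pi,0,0,0)=H(0,0,2\pi,0)=H_0$ by the same periodicity, so Proposition \ref{prop:energy} at $\theta=2\pi$ bounds both $\braket{\Psi_X(2\pi)}{H_0\,\Psi_X(2\pi)}-E_0$ and $\braket{\Psi_Y(2\pi)}{H_0\,\Psi_Y(2\pi)}-E_0$ by a common quantity $\varepsilon_{\mathrm{en}}$. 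Writing $Q_0=\one-P_0$ and using the spectral gap inequality $H_0-E_0\ge\gamma\,Q_0$, the orthogonal components $\delta_X=Q_0\ket{\Psi_X(2\pi)}$ and $\delta_Y=Q_0\ket{\Psi_Y(2\pi)}$ satisfy $\|\delta_{X,Y}\|^2\le\varepsilon_{\mathrm{en}}/\gamma$.

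Next I would insert $\one=P_0+Q_0$ at each of the three junctions between the four unitaries, producing $2^3=8$ terms. The unique all-$P_0$ term factorizes, via $P_0=\pure{\Psi_0}$, into
\be
|\braket{\Psi_0}{\Psi_X(2\pi)}|^2\,|\braket{\Psi_0}{\Psi_Y(2\pi)}|^2 = (1-\|\delta_X\|^2)(1-\|\delta_Y\|^2),
\ee
which differs from unity by $O(\|\delta_X\|^2+\|\delta_Y\|^2)$. Each of the remaining seven cross terms contains at least one factor of $Q_0$ acting on one of the four states $U_X^{\pm 1}\ket{\Psi_0}$ or $U_Y^{\pm 1}\ket{\Psi_0}$. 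A key observation here is that, because $P_0$ has rank one, $|\braket{\Psi_0}{U^{\dagger}\Psi_0}|=\overline{|\braket{\Psi_0}{U\Psi_0}|}$, so $\|Q_0 U^{\dagger}_{X,Y}\ket{\Psi_0}\|=\|Q_0 U_{X,Y}\ket{\Psi_0}\|=\|\delta_{X,Y}\|$; this lets Cauchy-Schwarz and unitarity bound each cross term by a product of $\|\delta_X\|$'s and $\|\delta_Y\|$'s, totaling $O(\|\delta_X\|^2+\|\delta_Y\|^2+\|\delta_X\|\|\delta_Y\|)$.

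Summing these contributions and substituting $\|\delta_{X,Y}\|^2\le\varepsilon_{\mathrm{en}}/\gamma$ yields the claimed bound; the extra factor of $J/\gamma$ in (\ref{gs_evol}) relative to Proposition \ref{prop:energy} is precisely the $1/\gamma$ conversion from an energy excess to an overlap deficit. The main subtlety — and the only place the proof is more than routine bookkeeping — is verifying the norm equality $\|Q_0 U^{\dagger}\ket{\Psi_0}\|=\|Q_0 U\ket{\Psi_0}\|$, which depends crucially on the non-degeneracy of $\ket{\Psi_0}$; without it, the reverse-evolution states could have substantially larger projections onto excited eigenspaces, and the seven cross terms would require more delicate treatment via a separate energy estimate for $U^{\dagger}_{X,Y}\ket{\Psi_0}$.
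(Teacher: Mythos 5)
Your proposal is correct and takes essentially the same approach as the paper: reduce $V_{\circlearrowleft}(0,0,2\pi)$ to the commutator $U_Y^\dagger U_X^\dagger U_Y U_X$ via $2\pi$-periodicity, insert $P_0+Q_0$ between the unitaries, use the rank-one identity $\|Q_0 U^\dagger \Psi_0\|=\|Q_0 U\Psi_0\|$ to bound every $Q_0$-containing term by products of $\|\delta_X\|,\|\delta_Y\|$, and then convert $\|\delta_{X,Y}\|^2$ to energy excess via the spectral gap and apply Proposition \ref{prop:energy}. The only cosmetic difference is that you insert resolutions of identity at all three junctions (yielding $8$ terms), whereas the paper inserts the middle one first and peels off the outer $\delta$'s, landing on $7$ terms; both routes give the same bound $2(\|\delta_Y\|^2+\|\delta_X\|^2)+4\|\delta_Y\|\|\delta_X\|+\|\delta_Y\|^2\|\delta_X\|^2$.
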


\section{Decomposing flux-space.}
Now that we have the estimates from Corollary \ref{cor:adiabatic_phase_3} and Corollary \ref{cor:gs_evol}, it remains to show that the following bound holds which is proven below:
\begin{cor}\label{cor:stokes} For a numeric constant $C > 0$, the following bound holds:
\be\label{eq:stokes}
\left|\braket{\Psi_0}{\Psi_{\circlearrowleft}(2\pi)}-\braket{\Psi_0}{\Psi_{\circlearrowleft}(r)}^{\left(\frac{2\pi}{r}\right)^2}\right| \le C  \left(Q_{\max} \alpha J\, L\right)^{3/2} \left(\left(Q_{\max} \alpha J\, L\right) r^{\frac{1}{2}}+ \frac{e^{-\frac{\alpha^2\gamma^2}{2}}}{r}\right) \le C \left(Q_{\max} \alpha J\, L\right)^{5/2} e^{-\frac{\alpha^2\gamma^2}{6}},
\ee
\be
\label{rchoice}
r=\frac{2\pi}{\lfloor  \left(Q_{\max} \alpha J\, L\right)^{2/3} e^{\frac{1}{3} \alpha^2\gamma^2}\rfloor}
\ee
\end{cor}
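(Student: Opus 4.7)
The plan is to realize the Stokes-theorem-like decomposition announced in the sketch of the main argument. Writing $N=\lfloor 2\pi/r\rfloor$, I would partition the $2\pi \times 2\pi$ flux-space square into $N^2$ small squares of side $r$, indexed by the lower-left corners $(jr,kr)$ with $0\le j,k<N$. Using the composition rules
\begin{align*}
U_X(\theta_x+r,\theta_y,s)\,U_X(\theta_x,\theta_y,r)&=U_X(\theta_x,\theta_y,r+s),\\
U_Y(\theta_x,\theta_y+r,s)\,U_Y(\theta_x,\theta_y,r)&=U_Y(\theta_x,\theta_y,r+s),
\end{align*}
together with the definitions (\ref{def:evol_ur})--(\ref{def:unitary_loop_R}), I would prove the exact algebraic identity
\[
V_{\circlearrowleft}(0,0,2\pi)=\prod_{j,k} V[(0,0)\to(jr,kr)]\;V_{\circlearrowleft}(jr,kr,r)\;V^{\dagger}[(0,0)\to(jr,kr)],
\]
in some fixed ordering of the $(j,k)$ pairs so that shared edges between adjacent tiles cancel and the uncancelled boundary reproduces $\Lambda(2\pi)$.

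Second, I would apply this identity to $\ket{\Psi_0}$ and control each factor. For the tile anchored at $(jr,kr)$, the conjugated unitary $V[(0,0)\to(jr,kr)]\,V_{\circlearrowleft}(jr,kr,r)\,V^{\dagger}[(0,0)\to(jr,kr)]$ acts on $\ket{\Psi_0}$; by the translation lemma (lemma \ref{lem:translation} as invoked in the sketch), the phase produced is within a small error of the phase $\phi(r)$ at the origin, independently of $(j,k)$. The key locality input here is lemma \ref{lem:local_approximation}, which replaces the evolution $V^{\dagger}[(0,0)\to(jr,kr)]$ by one generated by the localized operator $U_{\Omega}$ supported inside $\Omega_X\cap\Omega_Y$, at the cost of an error that is super-polynomially small in $L$; this is what lets us identify the result as a phase without a lower bound on the gap along the path from the origin to $(jr,kr)$.

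Third, once each of the $N^2$ factors is shown to act on $\ket{\Psi_0}$ as multiplication by $e^{i\phi(r)}$ up to an error $\varepsilon_{\rm tile}$, the telescoping product yields
\[
\bigl|\braket{\Psi_0}{\Psi_{\circlearrowleft}(2\pi)}-e^{iN^2\phi(r)}\bigr|\le N^2\,\varepsilon_{\rm tile},
\]
while Corollary \ref{cor:phase-estimate}/Corollary \ref{cor:adiabatic_phase_3} give $\bigl|\braket{\Psi_0}{\Psi_{\circlearrowleft}(r)}^{N^2}-e^{iN^2\phi(r)}\bigr|\le \sqrt{7/3}\,N^2\,\varepsilon_{\rm small}$ through inequality (\ref{bound:power_difference}), where $\varepsilon_{\rm small}\lesssim (Q_{\max}JL/\gamma)\cdot r\cdot e^{-\alpha^2\gamma^2/2}/\gamma$. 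A triangle inequality then produces (\ref{eq:stokes}); the first form exhibits the two competing errors, namely $N^2 r^{3/2}$ from translating the tile phase and $N^2 e^{-\alpha^2\gamma^2/2}/r$ from Corollary \ref{cor:adiabatic_phase_3}, and the choice (\ref{rchoice}) is precisely the $r$ that balances them, giving the stretched-exponential second form.

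The main obstacle is the second step: without a guaranteed spectral gap along the straight paths from $(0,0)$ to $(jr,kr)$ or along the sides of the tile, the quasi-adiabatic evolution may not track any ground state, so we cannot directly invoke lemma \ref{lem:quasi}. The remedy is the combination of the translation lemma (which relates the tile phase to $\phi(r)$) with the locality-based replacement $U_X\to U_{\Omega}$ from lemma \ref{lem:local_approximation}; all remaining bookkeeping is uniform in $(j,k)$ because $U_{\Omega}(\theta_x,\theta_y,\theta_x)$ is simply a rotated version of $U_{\Omega}(\theta_x,0,\theta_x)$ by (\ref{rotation_omega}).
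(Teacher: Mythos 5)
Your proposal matches the paper's decomposition (Fig.~\ref{fig:decomposition}), invokes the same two key inputs (the Translation Lemma \ref{lem:translation} and the adiabatic estimate (\ref{adiabatic_phase_1})), and balances the same competing errors via the choice (\ref{rchoice}). However, there is a gap in your Step~3 that also leads to an incorrect exponent count.

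The telescoping bound $\bigl|\braket{\Psi_0}{\Psi_{\circlearrowleft}(2\pi)}-e^{iN^2\phi(r)}\bigr|\le N^2\varepsilon_{\rm tile}$ requires a \emph{vector}-norm estimate $\|U_i\ket{\Psi_0}-e^{i\phi(r)}\ket{\Psi_0}\|\le\varepsilon_{\rm tile}$, where $U_i$ denotes the $i$-th tile operator. But Lemma \ref{lem:translation} only supplies a bound on the \emph{scalar} overlap $|p_i-p_N|$ with $p_i=\braket{\Psi_0}{U_i\Psi_0}$ and $p_N=\braket{\Psi_0}{\Psi_\circlearrowleft(r)}$. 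Converting the scalar bound into the vector bound necessarily loses a square root, e.g.\ via $\|U_i\Psi_0-e^{i\phi(r)}\Psi_0\|\le\sqrt{2|p_i-e^{i\phi(r)}|}\le\sqrt{2|p_i-p_N|}+\sqrt{2|p_N-e^{i\phi(r)}|}$. You do not make this square root explicit, and it is precisely what produces the stated scalings: applying (\ref{eq:translation}) gives $|p_i-p_N|\lesssim(Q_{\max}\alpha JL)^5r^5+(Q_{\max}\alpha JL)^3r^2e^{-\alpha^2\gamma^2}$, and taking the square root and multiplying by $N^2\sim r^{-2}$ yields $(Q_{\max}\alpha JL)^{5/2}r^{1/2}+(Q_{\max}\alpha JL)^{3/2}e^{-\alpha^2\gamma^2/2}/r$. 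So both competing terms in (\ref{eq:stokes}) arise from the two terms of the Translation Lemma after the square root; your attribution of the second term to Corollary \ref{cor:adiabatic_phase_3}, and your ``$N^2 r^{3/2}$'' for the first, are both off (the first term scales as $N^2\cdot r^{5/2}=(2\pi)^2 r^{1/2}$ in $r$, not as $N^2 r^{3/2}$).

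The paper handles this step differently and more transparently: rather than telescoping with a vector bound, it inserts $\one=P_0+Q_0$ after each tile factor, which produces a recursion with remainder terms $q_{[i,N^2]}$ bounded by $\|Q_0U_i\Psi_0\|=\sqrt{1-|p_i|^2}\le\sqrt{2(1-|p_N|)}+\sqrt{2|p_i-p_N|}$. This makes the square root appear automatically and also requires the separate product estimate $|p_1\cdots p_{N^2}-(p_N)^{N^2}|\le e\cdot N^2\sup_i|p_i-p_N|$. Either route works, but if you wish to use the telescoping bookkeeping you should spell out the scalar-to-vector conversion, as it is not merely cosmetic — it changes the exponents and is the reason the final bound decays only as a stretched exponential.
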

We proceed with the proof of this bound by
turning our focus on a decomposition process used to break the large evolution around flux space into evolutions around tiny loops on the $(2\pi) \times (2\pi)$ lattice. 
\begin{figure}
\centering
\includegraphics[width=300px]{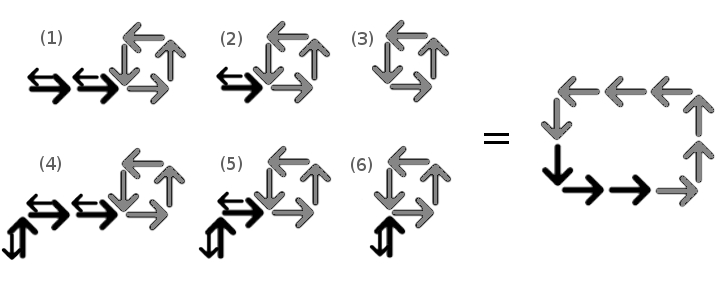}
\caption{{\small{The total evolution around a rectangle of dimension $2\times 3$ is decomposed into $6$ evolutions that all share the common features of evolving from the origin, first with $\theta_y$ and then with $\theta_x$, to reach $(\theta_x,\theta_y)$ in flux space, then following a tiny counter-clockwise loop and finally, reversing the path to go from $(\theta_x,\theta_y)$ to the origin. Any $m \times n$ evolution can be decomposed in this manner, by completing first the bottom row, as in steps $(1)-(3)$, and then stacking the remaining rows on top, as in steps $(4)-(6)$.  That is, the unitary corresponding to quasi-adiabatic evolution around the larger loops is exactly equal to the
product of unitaries corresponding to evolution around the smaller loops.
}}}
\label{fig:decomposition}
\end{figure}
Fig.~\ref{fig:decomposition} describes the process used to decompose the evolution around any $m \times n$ rectangle into $m\cdot n$ evolutions involving the $r\times r$ squares which form the lattice in flux space. This process is effectively a discrete version of Stokes' Theorem.

We define the family of states whose evolution follows the paths that appear in the decomposition process given by Fig.~\ref{fig:decomposition}:
\begin{equation}\label{def:states}
\ket{\Psi_{\circlearrowleft}(\theta_x,\theta_y,r)} = V^{\dagger}[(0,0)\rightarrow(\theta_x,\theta_y)]\,V_{\circlearrowleft}(\theta_x,\theta_y,r)\, V[(0,0)\rightarrow(\theta_x,\theta_y)]  \ket{\Psi_0}.
\end{equation}

To see why these states are important, note that projecting onto the ground-state after every individual cyclic evolution in the decomposition of $\braket{\Psi_0}{\Psi_{\circlearrowleft}(2\pi)}$ given in Fig.~\ref{fig:decomposition}, corresponds to the product of the following $(2\pi/r)^2$ terms:
\begin{equation}\label{eq:decomposition}
\braket{\Psi_0}{\Psi_{\circlearrowleft}(0,2\pi-r,r)}\braket{\Psi_0}{\Psi_{\circlearrowleft}(r,2\pi-r,r)}\cdots\braket{\Psi_0}{\Psi_{\circlearrowleft}(2\pi-r,0,r)}.
\end{equation}
The contribution from terms projecting off the ground-state is a bit more involved. It is shown in the Appendix that in order to prove the bound (\ref{eq:stokes}) in Corollary \ref{cor:stokes}, it suffices to combine the bound (\ref{adiabatic_phase_1}) with a bound of the form proven below:
\begin{lemma}\label{lem:translation}
For sufficiently large $L$, we have:
\begin{equation}\label{eq:translation}
|\braket{\Psi_0}{\Psi_{\circlearrowleft}(\theta_x,\theta_y,r)} -\braket{\Psi_0}{\Psi_{\circlearrowleft}(r)}| \le
C \left(Q_{\max} \alpha J\, L\right)^3 r^2 \left(\left(Q_{\max} \alpha J\, L\right)^2 r^3 + e^{-\alpha^2\gamma^2}\right) , \quad \forall \, \theta_x,\theta_y \in [0,2\pi]
\end{equation}
with $\alpha$ given in Lemma \ref{lem:local_approximation}.
\end{lemma}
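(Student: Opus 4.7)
The goal is to show that the sandwiched small-loop operator $V^\dagger[(0,0)\!\to\!(\theta_x,\theta_y)]\,V_\circlearrowleft(\theta_x,\theta_y,r)\,V[(0,0)\!\to\!(\theta_x,\theta_y)]$ acts on $\ket{\Psi_0}$ approximately the same way as $V_\circlearrowleft(0,0,r)$ does, in the weak sense of matrix elements against $\ket{\Psi_0}$. Since no spectral gap is assumed along the transport path, no adiabatic argument is available along that path, and the comparison must instead be purely local, built from the unitary equivalence (\ref{twist-anti-twist}) and the Twisting Lemma (Lemma \ref{lem:local_approximation}).

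I would first Taylor-expand each of the four $U_X, U_Y$ factors making up $V_\circlearrowleft(\theta_x,\theta_y,r)$ (and, in parallel, $V_\circlearrowleft(0,0,r)$) in $r$ to fourth order. The $O(r)$ contributions cancel around the closed loop, leaving a quadratic Berry-curvature-type term plus a cubic flux-derivative correction, assembled from $\mathcal D_X,\mathcal D_Y$ evaluated at $(\theta_x,\theta_y)$ and their first flux derivatives. Each such building block has operator norm of order $Q_{\max}\alpha J L$, so the fourth-order remainder is bounded by $(Q_{\max}\alpha J L)^5 r^5$, which matches the polynomial piece of the stated error.

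To compare the quadratic and cubic coefficients at $(\theta_x,\theta_y)$ with those at $(0,0)$, I would conjugate by the unitary $\hat R \equiv e^{i\theta_x Q_X}e^{i\theta_y Q_Y}$. By (\ref{eq:unitary_equiv_X})-(\ref{eq:unitary_equiv_Y}), $\hat R^{-1} H(\theta_x+s,\theta_y+s')\hat R$ equals the anti-twisted Hamiltonian $H(s,\theta_x,s',\theta_y)$, which agrees with $H(s,s')$ outside strips of width $2R$ around $x=L/2+1$ and $y=L/2+1$. Hence the $\hat R^{-1}$-conjugated small-loop coefficients equal their $(0,0)$-counterparts plus corrections supported far from $\Omega_0$; by Lemma \ref{lem:sa_approx} and the Lieb-Robinson bound derived for $\Sa^k$-interactions, these corrections act on $\Omega_0$-localized observables only with exponentially-in-$L$ amplitude. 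On the state side, Lemma \ref{lem:local_approximation} together with identity (\ref{rotation_omega}) (and the analogous identity obtained by applying $R_X(\theta_x,\cdot)$ to the corresponding $X$-direction local unitary) lets me replace $V[(0,0)\!\to\!(\theta_x,\theta_y)]\ket{\Psi_0}$, as tested against $\Omega_0$-local observables, by $\hat R\,U_\Omega\ket{\Psi_0}$ up to $O(e^{-(L/48R)^{4/5}})$. The two copies of $\hat R$ then cancel inside the matrix element, leaving a purely local action of $U_\Omega$ and the $(0,0)$-coefficients; recognizing $U_\Omega$ as a quasi-adiabatic evolution for a Hamiltonian family unitarily equivalent to $H_0$, and therefore gapped, Lemma \ref{lem:quasi} gives $U_\Omega\ket{\Psi_0}\approx e^{i\varphi}\ket{\Psi_0}$ with error $O(e^{-\alpha^2\gamma^2})$. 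Multiplying by the $O(r^2)$ Berry-curvature prefactor yields the $r^2 e^{-\alpha^2\gamma^2}$ piece, and summing these errors under the choice of $\alpha$ from Lemma \ref{lem:local_approximation} gives the claimed bound.

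The main obstacle will be the bookkeeping in the last step: tracking how the global unitary $\hat R$, the local unitary $U_\Omega$, and the non-local $\Sa$-type tails of the small-loop generators interact when all are applied simultaneously to $\ket{\Psi_0}$. The sub-exponential dominance of the $\Sa^k$ tails by the associated Lieb-Robinson velocity --- which is exactly what forces the choice $\alpha \sim L^{1/5}$ in Lemma \ref{lem:local_approximation} --- is what allows the cancellation to go through; any weakening of $\alpha$ would leave the $\Sa^k$-tails sitting at $\Omega_0$ with amplitude larger than $e^{-\alpha^2\gamma^2}$ and spoil the comparison.
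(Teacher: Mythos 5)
Your high-level plan is the right one, and your analysis of the loop operator closely tracks the paper: Taylor-expanding the four $U_X,U_Y$ factors in $r$ to fourth order, identifying the $r^2$ leading correction, bounding the remainder by $(Q_{\max}\alpha J L)^5 r^5$, and conjugating by $\hat R=e^{i\theta_x Q_X}e^{i\theta_y Q_Y}$ to compare the loop at $(\theta_x,\theta_y)$ with the one at $(0,0)$ is precisely the content of Lemma \ref{lem:local_loop_unitary} (the local operator $W(r)$ is the truncated Taylor series, with $\|W(r)-\one\|\le C(Q_{\max}\alpha J L)^2 r^2$).

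The genuine gap is in your final step. You claim that after cancelling the two copies of $\hat R$ you may invoke Lemma \ref{lem:quasi} on $U_\Omega$, ``recognizing $U_\Omega$ as a quasi-adiabatic evolution for a Hamiltonian family unitarily equivalent to $H_0$, and therefore gapped.'' But $U_\Omega(\theta_x,\theta_y,\theta)$ is generated by $\Sa^{(L/24)}(H(\theta_x-\theta,\theta_y),\cdot)$, i.e.\ it is a truncated quasi-adiabatic evolution over the single-twist family $H(\theta_x-\theta,\theta_y)$ for $\theta\in[0,\theta_x]$. That family is \emph{not} unitarily equivalent to $H_0$ and need not be gapped at intermediate flux values; in fact avoiding the assumption of a gap at generic $(\theta_x,\theta_y)$ is the entire point of the paper. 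The only family that is unitarily equivalent to $H_0$ is the twist/anti-twist family $H(\theta,-\theta,\theta_y,-\theta_y)$, via (\ref{twist-anti-twist}), and the actual evolution does not use it. A separate but related problem is that the Twisting Lemma controls the difference of Heisenberg-evolved \emph{operators} in operator norm, not the difference between $V\ket{\Psi_0}$ and $\hat R\,U_\Omega\ket{\Psi_0}$ as states.

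What the paper does instead is Lemma \ref{lem:translation_unitary}, which for any $A_{\Omega_0}\in\A_{\Omega_0}$ bounds
\[
\left|\braket{\Psi_0}{A_{\Omega_0}\Psi_0}-\braket{\Psi_0}{V^\dagger R_Y(\theta_y,R_X(\theta_x,A_{\Omega_0}))V\,\Psi_0}\right|
\]
by decomposing the difference into Twisting-Lemma terms plus partial-trace terms and using Lemma \ref{lem:partial_trace}. The partial-trace estimates are where the gap at $(0,0)$ actually enters: the quantity $\|\Tr_{\overline{\Omega_X}}(\rho_X(\theta)-R_X(\theta,P_0))\|_1$ is controlled by first comparing $\rho_X(\theta)$ to $\rho_X(\theta,-\theta)$ (the virtual-flux/anti-twist evolution) via locality --- the generators agree except near $x=L/2+1$ --- and \emph{then} applying Lemma \ref{lem:quasi} to the twist/anti-twist family, which \emph{is} gapped by (\ref{twist-anti-twist}). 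This virtual-flux step, borrowed from the higher-dimensional Lieb--Schultz--Mattis argument, is exactly what is missing in your proposal; without it the reduction to a gap estimate at the origin does not go through. Once you have Lemma \ref{lem:translation_unitary} in hand, the proof of the present lemma is a three-step triangle inequality: replace $V_\circlearrowleft(\theta_x,\theta_y,r)$ by $R_Y R_X V_\circlearrowleft(0,0,r)$, replace $V_\circlearrowleft(0,0,r)$ by $W(r)$, and apply Lemma \ref{lem:translation_unitary} with $A_{\Omega_0}=W(r)-\one$, the $r^2$ prefactor on the $e^{-\alpha^2\gamma^2}$ error coming from $\|W(r)-\one\|\le C(Q_{\max}\alpha J L)^2 r^2$.
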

Note that the above bound allows us to translate the coordinates $(\theta_x,\theta_y)$ to the origin $(0,0)$ in flux space, where we can make use of the gap $\gamma >0$, to get good bounds on the approximation of the evolution of the ground state $\ket{\Psi_0}$.
\subsection{The Translation Lemma}
%% TRANSLATION LEMMA %%
We now focus our attention on proving (\ref{eq:translation}), but before we start, we develop the following important Lemma:
\begin{lemma}\label{lem:translation_unitary}
Let $A_{\Omega_0} \in \A_{\Omega_0}$, with $\Omega_0$ defined in $(\ref{omega_0})$. Then, the following bound holds for a numeric constant $C > 0$:
\begin{eqnarray*}
&&\left|\braket{\Psi_0}{A_{\Omega_0} \Psi_0}- \braket{\Psi_0}{V^{\dagger}[(0,0)\rightarrow(\theta_x,\theta_y)]\,R_Y(\theta_y,R_X(\theta_x,A_{\Omega_0}))\, V[(0,0)\rightarrow(\theta_x,\theta_y)]  \Psi_0}\right| \\
&&\le C (|\theta_x|+|\theta_y|) \,\|A_{\Omega_0}\| \left(Q_{\max}\, \frac{J}{\gamma}\, L\right) e^{-\alpha^2\gamma^2}
\end{eqnarray*}
with $\alpha$ given in Lemma \ref{lem:local_approximation}.
\begin{proof}
We recall the density matrices $\rho_X(\theta_x)$ and $\rho_Y(\theta_y)$, defined in $(\ref{rho_X}-\ref{rho_Y})$. Using the rotation identity (\ref{rotation_omega}) for the auxiliary unitary $U_{\Omega}(\theta_x,\theta_y,\theta_x)$ defined in $(\ref{unitary_omega})$ and recalling the definition of $V[(0,0)\rightarrow(\theta_x,\theta_y)]$ in (\ref{def:evol_ur}), we get the following equalities:
\begin{flalign}
&\braket{\Psi_0}{V^{\dagger}[(0,0)\rightarrow(\theta_x,\theta_y)]\,R_Y(\theta_y,R_X(\theta_x,A_{\Omega_0}))\, V[(0,0)\rightarrow(\theta_x,\theta_y)]  \Psi_0} - \braket{\Psi_0}{A_{\Omega_0} \Psi_0}\nonumber\\ 
&= \Tr\left(\rho_Y(\theta_y) \, U^{\dagger}_X(0,\theta_y,\theta_x) \,R_Y(\theta_y,R_X(\theta_x, A_{\Omega_0})) \,U_X(0,\theta_y,\theta_x)\right) - \Tr(P_0 A_{\Omega_0}) \nonumber\\
&= \Tr\left(\rho_Y(\theta_y) \, \left[U^{\dagger}_X(0,\theta_y,\theta_x) \,R_Y(\theta_y,R_X(\theta_x, A_{\Omega_0})) \,U_X(0,\theta_y,\theta_x) - U_\Omega(\theta_x,\theta_y,\theta_x) \,R_Y(\theta_y,R_X(\theta_x, A_{\Omega_0})) \,U^{\dagger}_\Omega(\theta_x,\theta_y,\theta_x)\right]\right) \nonumber\\
&+ \Tr\left(\left[\rho_Y(\theta_y) - R_Y(\theta_y,P_0)\right] \,U_\Omega(\theta_x,\theta_y,\theta_x) \,R_Y(\theta_y,R_X(\theta_x, A_{\Omega_0})) \,U^{\dagger}_\Omega(\theta_x,\theta_y,\theta_x)\right) \nonumber\\
&+ \Tr\left(R_Y(\theta_y,P_0) \, R_Y\left(\theta_y, \left[U_\Omega(\theta_x,0,\theta_x) \,R_X(\theta_x, A_{\Omega_0}) \,U^{\dagger}_\Omega(\theta_x,0,\theta_x)-U^{\dagger}_X(0,0,\theta_x) \,R_X(\theta_x, A_{\Omega_0}) \,U_X(0,0,\theta_x)\right]\right)\right) \nonumber\\
&+\Tr\left([\rho_X(\theta_x)- R_X(\theta_x,P_0)] R_X(\theta_x, A_{\Omega_0})\right)
\end{flalign}
Finally, recalling the sets $\Omega_X$ and $\Omega_Y$ defined in (\ref{sets_omega}) and the comment preceding their definition, we have:
\begin{flalign}
&\left|\braket{\Psi_0}{V^{\dagger}[(0,0)\rightarrow(\theta_x,\theta_y)]\,R_Y(\theta_y,R_X(\theta_x,A_{\Omega_0}))\, V[(0,0)\rightarrow(\theta_x,\theta_y)]  \Psi_0} - \braket{\Psi_0}{A_{\Omega_0} \Psi_0}\right|\nonumber\\ 
&\le \left\|U^{\dagger}_X(0,\theta_y,\theta_x) \,R_Y(\theta_y,R_X(\theta_x, A_{\Omega_0})) \,U_X(0,\theta_y,\theta_x) - U_\Omega(\theta_x,\theta_y,\theta_x) \,R_Y(\theta_y,R_X(\theta_x, A_{\Omega_0})) \,U^{\dagger}_\Omega(\theta_x,\theta_y,\theta_x)\right\| \nonumber\\
&+ \left\|U_\Omega(\theta_x,0,\theta_x) \,R_X(\theta_x, A_{\Omega_0}) \,U^{\dagger}_\Omega(\theta_x,0,\theta_x)-U^{\dagger}_X(0,0,\theta_x) \,R_X(\theta_x, A_{\Omega_0}) \,U_X(0,0,\theta_x)\right\| \nonumber\\
&+ \|A_{\Omega_0}\| \left(\|\Tr_{\overline{\Omega_Y}}(\rho_Y(\theta_y) - R_Y(\theta_y,P_0))\|_1+\|\Tr_{\overline{\Omega_X}}(\rho_X(\theta_x) - R_X(\theta_x,P_0))\|_1 \right) \nonumber\\
&\le \|A_{\Omega_0}\| \left(Q_{\max}\, \frac{J}{\gamma} \, L\right) \left(|\theta_x|\,4\frac{\alpha \gamma}{\sqrt{2\pi}} L^2\,  e^{- \left(\frac{L}{48 R}\right)^{4/5}} + C (|\theta_x|+|\theta_y|) \left(3 \frac{\alpha \gamma}{\sqrt{2\pi}}\,\frac{L}{\sigma^2} \, e^{- \frac{L}{8R}} + e^{-\alpha^2\gamma^2}\right) \right)\label{bound:long_path}
\end{flalign}
where we used Lemma \ref{lem:local_approximation} and Lemma \ref{lem:partial_trace}. Finally, the condition on $\alpha$ in Lemma \ref{lem:local_approximation} implies that the exponential term in (\ref{bound:long_path}) corresponding to:
\be
e^{-\alpha^2 \gamma^2} = e^{-\left(\frac{\gamma}{4v R}\right)^2 \left(\frac{L \xi}{48 \ln^3 L}\right)^{2/5}} \ge \alpha \gamma L^2\,  e^{- \left(\frac{L}{48 R}\right)^{4/5}},
\ee
dominates the other two terms for sufficiently large $L$. Putting everything together gives the desired bound.
\end{proof}
\end{lemma}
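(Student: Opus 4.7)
The plan is to reinterpret both sides as traces against density matrices and interpolate between them in four controlled steps, alternating applications of Lemma \ref{lem:local_approximation} (the Twisting Lemma) and Lemma \ref{lem:partial_trace}, with the rotation--equivariance identity (\ref{rotation_omega}) for $U_\Omega$ serving as the bridge that lets me pass from the ``$\theta_y$-twisted'' side to the ``$\theta_y=0$'' side. Using the definition $V[(0,0)\rightarrow(\theta_x,\theta_y)] = U_X(0,\theta_y,\theta_x)\,U_Y(0,0,\theta_y)$ and the cyclic property of the trace, the right-hand side becomes
$$\Tr\bigl(\rho_Y(\theta_y)\, U^{\dagger}_X(0,\theta_y,\theta_x)\, R_Y(\theta_y, R_X(\theta_x, A_{\Omega_0}))\, U_X(0,\theta_y,\theta_x)\bigr),$$
while the left-hand side is $\Tr(P_0 A_{\Omega_0})$. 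Note that $R_X(\theta_x, A_{\Omega_0})$ has support inside $\Omega_0$ since conjugation by $e^{i\theta_x Q_X}$ does not enlarge supports.

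The interpolation proceeds as follows. Step 1: Apply the first bound of Lemma \ref{lem:local_approximation} with the choice of observable $R_Y(\theta_y, R_X(\theta_x, A_{\Omega_0})) \in \A_{\Omega_0}$, replacing $U^{\dagger}_X(0,\theta_y,\theta_x)(\cdot)U_X(0,\theta_y,\theta_x)$ by $U_\Omega(\theta_x,\theta_y,\theta_x)(\cdot) U^{\dagger}_\Omega(\theta_x,\theta_y,\theta_x)$; this costs a stretched-exponential $e^{-(L/48R)^{4/5}}$ error. Step 2: The evolved observable is now supported inside $\Omega_X\cap\Omega_Y$, so I may swap $\rho_Y(\theta_y)$ for $R_Y(\theta_y,P_0)$ at the cost of $\|\Tr_{\overline{\Omega_Y}}(\rho_Y(\theta_y)-R_Y(\theta_y,P_0))\|_1$ from Lemma \ref{lem:partial_trace}. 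Step 3: Invoke the rotation identity to write $U_\Omega(\theta_x,\theta_y,\theta_x)=R_Y(\theta_y, U_\Omega(\theta_x,0,\theta_x))$, so the whole expression takes the form $\Tr\bigl(R_Y(\theta_y, P_0)\cdot R_Y(\theta_y, \cdots)\bigr)$; unitary invariance of the trace strips the outer $R_Y(\theta_y,\cdot)$ completely, leaving an expression involving $U_\Omega(\theta_x,0,\theta_x)$ only, which I then swap back to $U^{\dagger}_X(0,0,\theta_x)$ via the second bound of Lemma \ref{lem:local_approximation}. Step 4: Cyclicity turns the remaining trace into $\Tr(\rho_X(\theta_x)\, R_X(\theta_x, A_{\Omega_0}))$, and since $\Tr(P_0 A_{\Omega_0}) = \Tr(R_X(\theta_x, P_0)\, R_X(\theta_x, A_{\Omega_0}))$ by unitary invariance of the trace, the remaining discrepancy is bounded by $\|A_{\Omega_0}\|\cdot \|\Tr_{\overline{\Omega_X}}(\rho_X(\theta_x)-R_X(\theta_x,P_0))\|_1$, again from Lemma \ref{lem:partial_trace}.

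The main obstacle is verifying that the asymptotic bookkeeping actually produces the claimed $e^{-\alpha^2\gamma^2}$ rate rather than being limited by the Twisting Lemma's $e^{-(L/48R)^{4/5}}$ error or the ``far from the twist'' tail $e^{-L/8R}$ appearing in Lemma \ref{lem:partial_trace}. With the choice $\alpha = (4vR)^{-1}(L\xi/48\ln^3 L)^{1/5}$ fixed in Lemma \ref{lem:local_approximation}, one has $\alpha^2\gamma^2$ scaling only as $L^{2/5}/\ln^{6/5} L$, so $e^{-\alpha^2\gamma^2}$ is asymptotically the \emph{largest} (i.e.\ dominant) error term, while $e^{-(L/48R)^{4/5}}$ and $e^{-L/8R}$ are subdominant and can be absorbed into the numeric constant $C$ for $L$ sufficiently large. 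Collecting the four contributions, factoring out the common prefactor $\|A_{\Omega_0}\|\, Q_{\max}(J/\gamma)L$, and recognizing that each step contributes at most one factor of $|\theta_x|$ or $|\theta_y|$, produces the stated bound.
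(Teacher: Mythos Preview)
Your proposal is correct and follows essentially the same approach as the paper: the four-step interpolation you describe (Twisting Lemma, partial trace for $\rho_Y$, rotation identity plus Twisting Lemma again, partial trace for $\rho_X$) matches exactly the paper's four-term telescoping decomposition, and your identification of $e^{-\alpha^2\gamma^2}$ as the dominant error for the specified $\alpha$ is precisely the concluding observation there. The only cosmetic difference is that the paper writes the decomposition as a single chain of equalities before bounding, whereas you narrate it step by step.
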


%% Localizing the Loop Unitary %%
\subsection{Localizing the loop unitary $V_{\circlearrowleft}(\theta_x,\theta_y,r)$}
Now, we show that up to order $4$ in $r$, the operator $V_{\circlearrowleft}(0,0,r)$ defined in (\ref{def:unitary_loop_R}), can be approximated by the sum of operators with support on a cross of radius $L/8$ centered at the origin $x=y=0$. Moreover, we show how we may approximate $V_{\circlearrowleft}(\theta_x,\theta_y,r)$ by $R_Y(\theta_y, R_X(\theta_x, V_{\circlearrowleft}(0,0,r)))$, up to a small error in depending on $L$ and $r$.
\begin{lemma}\label{lem:local_loop_unitary}
For sufficiently large $L$, there exists a constant $C > 0$, such that for all $\theta_x,\theta_y \in [0,2\pi]$ we have the bound:
\be
\|V_{\circlearrowleft}(\theta_x,\theta_y,r)- R_Y(\theta_y, R_X(\theta_x, V_{\circlearrowleft}(0,0,r)))\| \le C\left(\frac{\alpha \gamma}{\sqrt{2\pi}}\,\left(Q_{\max} \frac{J}{\gamma} L\right) \frac{L}{\sigma^2}\, e^{- \frac{L}{96R}} \, r + \left(Q_{\max} \alpha J\, L\right)^5 r^5\right).
\ee
Moreover, there exists an operator $W(r) \in \A_{\Omega_0}$, such that $\|W(r)-\one\| \le C \left(Q_{\max} \alpha J L\right)^2 r^2$ and:
\be
\|V_{\circlearrowleft}(0,0,r) - W(r)\| \le C\left(\frac{\alpha \gamma}{\sqrt{2\pi}}\,\left(Q_{\max} \frac{J}{\gamma} L\right) \frac{L}{\sigma^2}\, e^{- \frac{L}{96R}} \, r + \left(Q_{\max} \alpha J\, L\right)^5 r^5\right).
\ee
\end{lemma}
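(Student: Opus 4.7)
The plan is to Taylor-expand $V_{\circlearrowleft}(\theta_x,\theta_y,r)$ in $r$ through order $r^4$ via Dyson's series, reorganize the expansion into nested-commutator (Magnus) form, and then approximate each coefficient by an $\Omega_0$-localized operator using the Lieb-Robinson bounds already developed in the paper. Define $W(r)$ as the exponential of the truncated Magnus expansion at $(\theta_x,\theta_y)=(0,0)$; combining the Taylor remainder with the localization error will yield both claimed inequalities.

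First, I would Dyson-expand each of the four factors $U_X, U_Y$ appearing in $V_{\circlearrowleft}(\theta_x,\theta_y,r)$ through order $r^4$. Using the operator-norm bound $\|\DX\|,\|\DY\|\le (2\alpha/\sqrt{2\pi})Q_{\max}JL$ that follows from (\ref{naive_bnd:sa}) together with (\ref{bound:rot-H_X})--(\ref{bound:rot-H_Y}), the Taylor remainder is bounded by $C(Q_{\max}\alpha J L)^5 r^5$, matching the second term of both claimed bounds. The closed-loop structure forces the linear-in-$r$ contribution to vanish, so $V_{\circlearrowleft}-\one = O(r^2)$; at second order a direct computation gives $r^2\bigl(i[\partial_{\theta_x}\DY - \partial_{\theta_y}\DX] + [\DX,\DY]\bigr)\big|_{(\theta_x,\theta_y)}$, and at higher order one checks that the surviving monomials reorganize into nested commutators of $\DX,\DY$ and their $\theta$-derivatives, giving a finite Magnus expansion $\exp\bigl(\sum_{k=2}^{4} r^k M_k(\theta_x,\theta_y)\bigr)$ up to $O(r^5)$.

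The second step is to $\Omega_0$-localize each $M_k$. Each generator $\DX(\theta_x,\theta_y)$ is a sum of $\Sa$-operators built from $\partial_{\theta_x}\Phi$-terms supported in a strip of width $2R$ around $x=1$; Lemma \ref{lem:sa_approx} lets me replace it by $\Sa^{(M)}$-truncations supported in a strip of width $2R+M$, and analogously for $\DY$ around $y=1$. Choosing $M$ proportional to $L$ (e.g.\ $M=L/48$), each nested commutator $[\cdots[\DX,\DY]\cdots]$ receives contributions only from local pieces of $\DX$ and $\DY$ whose supports overlap, which by the Lieb-Robinson bound (\ref{lr-bound:adiabatic}) forces them to lie within $O(d_0)$ of the corner $(1,1)$; since $M$ plus the Lieb-Robinson tail is less than $L/8-R$ for sufficiently large $L$, the result lies in $\A_{\Omega_0}$. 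Derivatives $\partial_{\theta_x}\DY,\partial_{\theta_y}\DX$ are handled identically, since the product rule forces any contribution to involve an interaction term lying simultaneously in both strips. Each truncation per generator costs at most $O\bigl((\alpha/\sqrt{2\pi})(Q_{\max}JL)(L/\sigma^2)e^{-L/(96R)}\bigr)$, which, multiplied by $r$, produces the first term in the claimed bound; exponentiating the truncated Magnus series at $(0,0)$ defines $W(r)\in\A_{\Omega_0}$, and $\|W(r)-\one\|\le C(Q_{\max}\alpha J L)^2 r^2$ follows from the $O(r^2)$ leading order and the operator-norm bound on the generators.

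For the equivariance statement, I use the identities (\ref{eq:unitary_equiv_X})--(\ref{eq:unitary_equiv_Y}): $R_X(\theta_x)R_Y(\theta_y,H_0)=H(\theta_x,-\theta_x,\theta_y,-\theta_y)$ differs from the actual parameter-shifted Hamiltonian $H(\theta_x,0,\theta_y,0)$ only in the anti-twist strips at $x=L/2+1$ and $y=L/2+1$, which sit at distance exceeding $L/4-R$ from $\Omega_0$. Consequently the $\Omega_0$-supported surrogates of $\DX$, $\DY$, and their derivatives at $(\theta_x,\theta_y)$ agree exactly with the $R_Y(\theta_y)R_X(\theta_x)$-conjugates of the corresponding surrogates at $(0,0)$, so the equivariance identity follows term by term in the Magnus expansion with the same error bound as above. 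The main obstacle is the combinatorial step of verifying that, after cancellations dictated by the closed-loop structure, the Taylor expansion through order $r^4$ consists entirely of nested commutators of $\DX,\DY$ and their $\theta$-derivatives: the naive Dyson multiplication produces many monomials $\DX^a\DY^b$ that individually are strip-supported rather than $\Omega_0$-supported, and the cancellations must be tracked carefully at each order while keeping the Lieb-Robinson bookkeeping tight enough to accommodate the exponent $e^{-L/(96R)}$.
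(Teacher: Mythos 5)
Your high-level plan (Taylor expand, localize each coefficient, verify equivariance) is the same as the paper's, and your remainder estimate $C(Q_{\max}\alpha J L)^5 r^5$ is correct. But the crucial step you flag at the end — ``the naive Dyson multiplication produces many monomials $\DX^a\DY^b$ that individually are strip-supported rather than $\Omega_0$-supported, and the cancellations must be tracked carefully'' — is precisely the step you do not carry out, and it is the heart of the lemma. Invoking a Magnus expansion and asserting that the coefficients ``reorganize into nested commutators'' is not a proof: the Magnus coefficients for a four-factor product $U_Y^\dagger U_X^\dagger U_Y U_X$ of non-commuting one-parameter families contain many terms, and you would need an explicit combinatorial argument that, at each order through $r^4$, all unlocalized monomials cancel. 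The paper resolves this by a different device: after replacing $U_X,U_Y$ by their $\Sa^{(M)}$-truncated counterparts (Lemma~\ref{lem:local_unitaries}), it writes $V^{(M)}_\circlearrowleft(\theta_x,\theta_y,r)=F_r(\theta_x,\theta_y,r,r)$ and uses the exact factorization
$F_r(s_1,s_2)=F_r(s_1,0)\,G_r(s_1,s_2)\,F_r(0,s_2)$.
The point of this factorization is that each factor is generated by an operator that is \emph{already} a difference of $\Sa^{(M)}$-operators at nearby flux values — e.g.\ $\Delta_Y=i\DY^{(M)}(\theta_x,\theta_y+s_1)-i\DY^{(M)}(\theta_x+r,\theta_y+s_1)$ — which is automatically $O(r)$ and, more importantly, automatically supported in a small $4M\times 2M$ box near the corner $x=y=1$, because the $r$-difference kills every $\Sa^k$-interaction that does not depend on $\theta_x$. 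This makes the localization manifest term by term without having to chase cancellations across monomials. Your proposal has no analogue of this step.

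A secondary issue is the mechanism you invoke for localization. You say the nested commutators ``by the Lieb-Robinson bound (\ref{lr-bound:adiabatic})'' are forced to lie within $O(d_0)$ of the corner. In the paper's argument the localization is not a Lieb-Robinson statement at all: after truncating to $\Sa^{(M)}$ with $M=L/48$, the operators $\Delta_Y$, $\Delta_X$, $D_Y^{(M)}$, $D_X^{(M)}$ have \emph{strict} (exact, not approximate) support in boxes of linear size $O(M)$, and a $k$-fold nested commutator has strict support in a cross of radius $\le (k+1)M$; the choice $M=L/48$ guarantees this sits inside $\Omega_0$ for $k\le 4$. The only place exponential tails appear is in the one-time replacement $V_\circlearrowleft\to V^{(M)}_\circlearrowleft$, which produces the $e^{-L/(96R)}r$ term. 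Folding Lieb-Robinson tails into every commutator, as your sketch suggests, would still work in principle but would require re-deriving the error budget and would not give the clean strict-support statement that the equivariance argument (which requires that each term lie strictly away from the anti-twist strips at $x=L/2+1$, $y=L/2+1$) relies on.

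So: right skeleton, correct remainder bound, correct equivariance idea, but the central combinatorial lemma is deferred rather than proved, and the specific factorization that makes it tractable is missing.
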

Before proving the above Lemma, we introduce the unitaries $U_{X(M)}(\theta_x,\theta_y,r)$ and $U_{Y(M)}(\theta_x,\theta_y,r)$ defined by the following differential equations:
\begin{eqnarray*}
\partial_s U_{X(M)}(\theta_x,\theta_y,s)_{s=r} &=& i\, \Sa^{(M)}(H(\theta_x+r,0,\theta_y,0), \partial_\theta H(\theta,0,\theta_y,0)_{\theta=\theta_x+r}) \, U_{X(M)}(\theta_x,\theta_y,r), \quad U_{X(M)}(\theta_x,\theta_y,0) = \one \\
\partial_s U_{Y(M)}(\theta_x,\theta_y,s)_{s=r} &=& i\, \Sa^{(M)}(H(\theta_x,0,\theta_y+r,0), \partial_\theta H(\theta_x,0,\theta,0)_{\theta=\theta_y+r}) \, U_{Y(M)}(\theta_x,\theta_y,r), \quad U_{Y(M)}(\theta_x,\theta_y,0) = \one,
\end{eqnarray*}
with $\Sa^{(M)}(H,A)$ defined in (\ref{sa_decomposition}). The following Lemma gives us a bound on the error of approximating the unitaries $U_X(\theta_x,\theta_y,r)$ and $U_Y(\theta_x,\theta_y,r)$ with $U_{X(M)}(\theta_x,\theta_y,r)$ and $U_{Y(M)}(\theta_x,\theta_y,r)$, respectively.
\begin{lemma}\label{lem:local_unitaries}
The following bounds hold for all $\theta_x,\theta_y \in [0,2\pi]$ and $r \ge 0$, with $M \ge \sigma^2 R$:
\begin{eqnarray}
\|U_X(\theta_x,\theta_y,r) - U_{X(M)}(\theta_x,\theta_y,r)\| &\le& 64 \frac{\alpha \gamma}{\sqrt{2\pi}}\,\left(Q_{\max} \frac{J}{\gamma} L\right) \frac{M}{\sigma^2}\, e^{- \frac{M}{2R}} \, r, \\ 
\|U_Y(\theta_x,\theta_y,r) - U_{Y(M)}(\theta_x,\theta_y,r)\| &\le& 64 \frac{\alpha \gamma}{\sqrt{2\pi}}\,\left(Q_{\max} \frac{J}{\gamma} L\right) \frac{M}{\sigma^2}\, e^{- \frac{M}{2R}} \, r.
\end{eqnarray}
\begin{proof}
We only prove the bound for $U_X(\theta_x,\theta_y,r)$, since the bound for $U_Y(\theta_x,\theta_y,r)$ follows from a symmetric argument. First, note that $\|U_X(\theta_x,\theta_y,r) - U_{X(M)}(\theta_x,\theta_y,r)\| = \|U^{\dagger}_X(\theta_x,\theta_y,r) U_{X(M)}(\theta_x,\theta_y,r) - \one\|$.
Moreover, setting $$\Delta_M(\theta_x,\theta_y, r) = \Sa^{(M)}(H(\theta_x+r,0,\theta_y,0), \partial_\theta H(\theta,0,\theta_y,0)_{\theta=\theta_x+r})-\Sa(H(\theta_x+r,0,\theta_y,0), \partial_\theta H(\theta,0,\theta_y,0)_{\theta=\theta_x+r})$$
we get after differentiating with respect to $r$:
\be
\partial_r \left(U^{\dagger}_X(\theta_x,\theta_y,r) U_{X(M)}(\theta_x,\theta_y,r)\right)_{r=s} = i\, \left(U^{\dagger}_X(\theta_x,\theta_y,s) \, \Delta_M(\theta_x,\theta_y, s)\, U_{X(M)}(\theta_x,\theta_y,s)\right)
\ee
and hence using a triangle inequality we have
\be
\|U^{\dagger}_X(\theta_x,\theta_y,r) U_{X(M)}(\theta_x,\theta_y,r) - \one\| \le \int_0^r \|\Delta_M(\theta_x,\theta_y,s)\| \, ds \le 64 \frac{\alpha \gamma}{\sqrt{2\pi}}\,\left(Q_{\max} \frac{J}{\gamma} L\right) \frac{M}{\sigma^2}\, e^{- \frac{M}{2R}} \, r,
\ee
where we used Lemma \ref{lem:sa_approx} to get the final inequality, which completes the proof.
\end{proof}
\end{lemma}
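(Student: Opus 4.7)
The plan is to prove both bounds by the same argument (the $U_Y$ bound follows by the same steps with the roles of $x$ and $y$ swapped), so I will focus on the $U_X$ estimate. The basic identity I will exploit is that unitaries preserve the operator norm, giving
\be
\|U_X(\theta_x,\theta_y,r) - U_{X(M)}(\theta_x,\theta_y,r)\| = \|U_X^{\dagger}(\theta_x,\theta_y,r)\,U_{X(M)}(\theta_x,\theta_y,r) - \one\|. \nonumber
\ee
Since both unitaries agree at $r=0$, I differentiate the product in $r$: the $U_X^{\dagger}$ factor contributes $-i\,\Sa(H(\theta_x+r,\theta_y),\partial_\theta H(\theta,\theta_y)_{\theta=\theta_x+r})$ on the right, while $U_{X(M)}$ contributes $+i\,\Sa^{(M)}(\cdots)$ on the left, and conjugation by $U_X^{\dagger}$ renders these into a single sandwich whose norm is controlled by
\be
\Delta_M(\theta_x,\theta_y,s) \equiv \Sa^{(M)}(H(\theta_x+s,\theta_y),\partial_\theta H(\theta,\theta_y)_{\theta=\theta_x+s}) - \Sa(H(\theta_x+s,\theta_y),\partial_\theta H(\theta,\theta_y)_{\theta=\theta_x+s}). \nonumber
\ee
Integrating and using the triangle inequality, I obtain the clean bound
$\|U_X^{\dagger}U_{X(M)} - \one\| \le \int_0^r \|\Delta_M(\theta_x,\theta_y,s)\|\, ds$, exactly as in the analogous manipulation already used to derive Lemma~\ref{lem:local_approximation}.

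Next I bound $\|\Delta_M(\theta_x,\theta_y,s)\|$ by applying Lemma~\ref{lem:sa_approx} term-by-term. The key observation is that $\partial_\theta H(\theta,\theta_y)_{\theta=\theta_x+s}$ is a sum over the (at most) $L$ sites of the strip near $x=1$, and by the estimate (\ref{bound:terms}) each site contributes a group of interactions of total norm at most $Q_{\max} J$, supported on a set of diameter $\le R$. Using the additive decomposition in (\ref{sa_decomposition_2}), Lemma~\ref{lem:sa_approx} therefore gives
\be
\|\Delta_M(\theta_x,\theta_y,s)\| \le \sum_{Z} \frac{2\alpha}{\sqrt{2\pi}}\,\|A_Z\|\,g_\alpha(M/R) \le \frac{2\alpha}{\sqrt{2\pi}}\cdot Q_{\max} J L \cdot g_\alpha(M/R). \nonumber
\ee
Since the hypothesis of the lemma is $M \ge \sigma^2 R$, the large-$x$ branch of $g_\alpha$ applies: $g_\alpha(M/R) = (32R/\sigma^2)(M/R)\,e^{-M/(2R)} = (32 M/\sigma^2)\,e^{-M/(2R)}$. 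Substituting back gives $\|\Delta_M\| \le \frac{64\alpha}{\sqrt{2\pi}}\,Q_{\max} J L \cdot (M/\sigma^2)\,e^{-M/(2R)}$, and since this bound is uniform in $s$, multiplying by $r$ yields exactly the stated estimate (after trivially inserting a factor $\gamma/\gamma$ to match the form $\frac{\alpha\gamma}{\sqrt{2\pi}}\cdot Q_{\max}\frac{J}{\gamma}L$).

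There is no real obstacle here; the content is purely bookkeeping. The only place one has to be slightly careful is in invoking Lemma~\ref{lem:sa_approx}, which is stated for a single operator $A_Z$ of diameter at most $R$, and making sure the decomposition $\partial_\theta H = \sum_s (\partial_\theta H)_s$ into site-indexed groups respects this assumption — which it does, because each site's contribution is a finite sum of interactions $\Phi(Z)$ with $\diam(Z) \le R$. A symmetric argument, interchanging the roles of the $x$- and $y$-strips and using $\partial_\theta H(\theta_x,\theta)_{\theta=\theta_y+r}$ in place of $\partial_\theta H(\theta,\theta_y)_{\theta=\theta_x+r}$, establishes the $U_Y$ bound with identical constants.
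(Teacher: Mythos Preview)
Your proof is correct and follows essentially the same approach as the paper: both reduce to bounding $\|U_X^\dagger U_{X(M)}-\one\|$ via its $r$-derivative, obtain the integral $\int_0^r \|\Delta_M\|\,ds$, and invoke Lemma~\ref{lem:sa_approx} with $M\ge \sigma^2 R$ to extract the $g_\alpha(M/R)=(32M/\sigma^2)e^{-M/(2R)}$ decay. You are simply more explicit than the paper about the term-by-term application of Lemma~\ref{lem:sa_approx} to the decomposition $\partial_\theta H=\sum_Z \partial_\theta\Phi(Z)$ and the resulting factor $\sum_Z\|A_Z\|\le Q_{\max}JL$, which the paper leaves implicit.
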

We now return to the proof of Lemma \ref{lem:local_loop_unitary}:
\begin{proof}[Proof of Lemma \ref{lem:local_loop_unitary}]
Using Lemma \ref{lem:local_unitaries} and several triangle inequalities, we may approximate the unitary defined in (\ref{def:unitary_loop_R}):
$$V_{\circlearrowleft}(\theta_x,\theta_y,r)= U_Y^{\dagger}(\theta_x,\theta_y,r)\, U_X^{\dagger}(\theta_x,\theta_y+r,r) \, U_Y(\theta_x+r,\theta_y,r) \,U_X(\theta_x,\theta_y,r)$$
with the following version:
$$V^{(M)}_{\circlearrowleft}(\theta_x,\theta_y,r)= U_{Y(M)}^{\dagger}(\theta_x,\theta_y,r)\, U_{X(M)}^{\dagger}(\theta_x,\theta_y+r,r) \, U_{Y(M)}(\theta_x+r,\theta_y,r) \,U_{X(M)}(\theta_x,\theta_y,r)$$
up to an exponentially small error.
More formally, we have for all $\theta_x,\theta_y \in [0,2\pi]$:
\be\label{loop_approximation}
\left\|V_{\circlearrowleft}(\theta_x,\theta_y,r) - V^{(M)}_{\circlearrowleft}(\theta_x,\theta_y,r)\right\| \le 256 \frac{\alpha \gamma}{\sqrt{2\pi}}\,\left(Q_{\max} \frac{J}{\gamma} L\right) \frac{M}{\sigma^2}\, e^{- \frac{M}{2R}} \, r.
\ee
We introduce now the following useful unitaries that will aid us in the perturbative expansion of $V^{(M)}_{\circlearrowleft}(\theta_x,\theta_y,r)$:
\begin{eqnarray*}
F_r(\theta_x,\theta_y,s_1,s_2) &=& U_{Y(M)}^{\dagger}(\theta_x,\theta_y,s_1)\, U_{X(M)}^{\dagger}(\theta_x,\theta_y+r,s_2) \, U_{Y(M)}(\theta_x+r,\theta_y,s_1) \,U_{X(M)}(\theta_x,\theta_y,s_2)\\
G_r(\theta_x,\theta_y,s_1,s_2) &=& U_{Y(M)}^{\dagger}(\theta_x+r,\theta_y,s_1)\, U_{X(M)}^{\dagger}(\theta_x,\theta_y+r,s_2) \, U_{Y(M)}(\theta_x+r,\theta_y,s_1) \,U_{X(M)}(\theta_x,\theta_y+r,s_2)
\end{eqnarray*}
where $0\le s_1, s_2 \le r$.
Note that $V^{(M)}_{\circlearrowleft}(\theta_x,\theta_y,r) = F_r(\theta_x,\theta_y,r,r)$ and that the following identities hold:
\begin{eqnarray}
F_r(\theta_x,\theta_y,s_1,s_2) &=& F_r(\theta_x,\theta_y,s_1,0) \, G_r(\theta_x,\theta_y,s_1,s_2) \, F_r(\theta_x,\theta_y,0,s_2)\label{eq:F_r}\\ 
F_r(\theta_x,\theta_y,0,0) &=& G_r(\theta_x,\theta_y,0,s_2) = G_r(\theta_x,\theta_y,s_1,0) = \one \nonumber\\
\partial_{s} G_r(\theta_x,\theta_y,0,s) &=& \partial_s G_r(\theta_x,\theta_y,s,0) = 0.\nonumber
\end{eqnarray}
Finally, we define the following operators: 
\begin{eqnarray*}
D^{(M)}_Y(\theta_x,\theta_y+s) &\equiv& \Sa^{(M)}(H(\theta_x,0,\theta_y+s,0), \partial_\theta H(\theta_x,0,\theta,0)_{\theta=\theta_y+s})\\
\Delta_Y(\theta_x,\theta_y,s_1, r) &\equiv& i\DY^{(M)}(\theta_x,\theta_y+s_1) - i\DY^{(M)}(\theta_x+r,\theta_y+s_1) =-i\,\int_0^r \left(\partial_{s'} \DY^{(M)}(\theta_x+s',\theta_y+s_1)_{s'=s}\right) \, ds\\
D^{(M)}_X(\theta_x,\theta_y+s) &\equiv& \Sa^{(M)}(H(\theta_x,0,\theta_y+s,0), \partial_\theta H(\theta_x,0,\theta,0)_{\theta=\theta_y+s})\\
\Delta_X(\theta_x,\theta_y,s_2, r) &\equiv& i\DX^{(M)}(\theta_x+s_2,\theta_y) -i\DX^{(M)}(\theta_x+s_2,\theta_y+r) =-i\,\int_0^r \left(\partial_{s'} \DX^{(M)}(\theta_x+s_2,\theta_y+s')_{s'=s}\right) \, ds.
\end{eqnarray*}
Note that the support of $\Delta_X(\theta_x,\theta_y,s_2, r)$ lies strictly within a $2M \times 4M$ box, centered at the origin in the $x-y$ orientation, whereas  the support of $\Delta_Y(\theta_x,\theta_y,s_2, r)$ lies strictly within a $4M \times 2M$ box, centered at the origin. This is because each partial derivative eliminates terms that do not depend on the variables of interest and moreover, interaction terms composing the above operators have been trimmed to have a radius of support $M$. Our goal is to show that up to order $4$ in $r$, the unitaries $F_r(\theta_x,\theta_y,s_1,0)$, $G_r(\theta_x,\theta_y,s_1,s_2)$ and $F_r(\theta_x,\theta_y,0,s_2)$ are localized within the set $\Omega_0$, defined in (\ref{omega_0}), and that the localized versions are rotations of the same operators evaluated at $\theta_x=\theta_y=0$.

We begin by showing localization for $F_r(\theta_x,\theta_y,s_1,0)$. A similar argument applies to $F_r(\theta_x,\theta_y,0,s_2)$. One can easily check that:
\be
\partial_s F_r(\theta_x,\theta_y,s,0)_{s=s_1} = U_{Y(M)}^{\dagger}(\theta_x,\theta_y,s_1)\, \Delta_Y(\theta_x,\theta_y,s_1, r)\, U_{Y(M)} (\theta_x,\theta_y,s_1) \cdot F_r(\theta_x,\theta_y,s_1,0)\nonumber
\ee
Noting that we only need to consider partials up to order $3$ since $\|\Delta_Y(\theta_x,\theta_y,s_1, r)\| = \mathcal{O}(r)$ (a discussion on how one may bound $\|\partial_{s'} \DY^{(M)}(\theta_x+s',\theta_y+s_1)_{s'=s}\|$ is given in the Appendix), we have:
\begin{eqnarray*}
\partial_s F_r(\theta_x,\theta_y,s,0)_{s=0} &=& \Delta_Y(\theta_x,\theta_y,0, r)\\
\partial^2_s F_r(\theta_x,\theta_y,s,0)_{s=0} &=& i \left[\Delta_Y(\theta_x,\theta_y,0, r), D^{(M)}_Y(\theta_x,\theta_y)\right] + \partial_s \Delta_Y(\theta_x,\theta_y,s, r)_{s=0} + \Delta^2_Y(\theta_x,\theta_y,0, r)\\
\partial^3_s F_r(\theta_x,\theta_y,s,0)_{s=0} &=& i^2 \left[ \left[\Delta_Y(\theta_x,\theta_y,0, r), D^{(M)}_Y(\theta_x,\theta_y)\right], D^{(M)}_Y(\theta_x,\theta_y)\right] \\
&+& i\, \partial_s \left[\Delta_Y(\theta_x,\theta_y,s, r), D^{(M)}_Y(\theta_x,\theta_y+s)\right]_{s=0}
+ \partial^2_s \Delta_Y(\theta_x,\theta_y,s, r)_{s=0} + \mathcal{O}(r^2).
\end{eqnarray*}
Now, the crucial point is that the operators $\Delta_Y(\theta_x,\theta_y,s, r)$ and $D^{(M)}_Y(\theta_x,\theta_y+s)$ satisfy for $M \le L/24$:
\begin{eqnarray*}
&&\Delta_Y(\theta_x,\theta_y,s, r) = R_Y(\theta_y,R_X(\theta_x,\Delta_Y(0,0,s, r))), \\
&&\left[\Delta_Y(\theta_x,\theta_y,s, r), D^{(M)}_Y(\theta_x,\theta_y+s) -R_Y\left(\theta_y,R_X\left(\theta_x,D^{(M)}_Y(0,s)\right)\right)\right] = 0\\
&&\left[R_Y\left(\theta_y,R_X\left(\theta_x,\left[\Delta_Y(0,0,s, r), D^{(M)}_Y(0,s)\right)\right]\right), D^{(M)}_Y(\theta_x,\theta_y+s) -R_Y\left(\theta_y,R_X\left(\theta_x,D^{(M)}_Y(0,s)\right)\right)\right] = 0
\end{eqnarray*}
which may be verified by considering the action of the twists $R_Y(\theta_y,\cdot)$ and $R_X(\theta_x,\cdot)$ on interaction terms with support contained in $\Omega_0$. For example, note that for the truncated Hamiltonians defined in (\ref{def:H_M}) with $M \le L/2$, we have $R_Y(\theta_y, H_M(Z; 0,0,s,0)) = H_M(Z; 0,0,\theta_y+s,0)$ for $Z$ with support near $y=1$ and $R_X(\theta_x, H_M(Z; s,0,0,0)) = H_M(Z; \theta_x+s,0,0,0))$ for $Z$ with support near $x=1$. This follows from the trivial action of the truncated Hamiltonians $H_M(Z; 0,0,s,0)$ and $H_M(Z; s,0,0,0)$ near the twists at $y=L/2+1$ and $x=L/2+1$, respectively, and the fact that each Hamiltonian is the sum of interaction terms that commute with charge operators whose support covers the interaction. In general, the previous argument can be applied to any Hamiltonian that acts trivially on one of the two boundary lines of the charge operators $Q_X$ and/or $Q_Y$, effectively introducing a \emph{single} boundary twist by applying the corresponding global twist $R_X(\theta_x,\cdot)$ and/or $R_Y(\theta_y,\cdot)$, on the whole Hamiltonian.
%Figure: Decomposition of space
\begin{figure}
\centering
\includegraphics[width=300px]{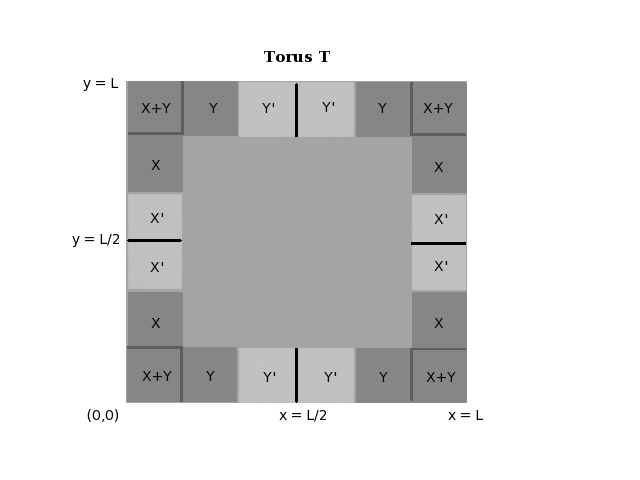}
\caption{\small{Decomposition of the torus $T$ into areas where the operators $\Sa^{(M)}(H(0,\theta_x,r,\theta_y),\partial_{\theta} H(\theta,0,r,\theta_y)|_{\theta=0})$ and $\Sa^{(M)}(H(r,\theta_x,0,\theta_y),\partial_{\theta} H(r,\theta_x,\theta,0)|_{\theta=0})$ act non-trivially. The dark shaded regions are defined as $X=\{s \in T: x(s) \in [-L/6,L/6] \, \wedge \, y(s) \in [-L/3-R, L/3+R] \}$ and $Y=\{s \in T: y(s) \in [-L/6,L/6] \, \wedge \, x(s) \in [-L/3-R, L/3+R] \}$ and correspond to interaction terms $\Sa^{(M)}(H(0,0,r,0),\partial_{\theta} \Phi(Z;\theta,0,r,0)|_{\theta=0})$ with $Z \subset X$ and $\Sa^{(M)}(H(r,0,0,0),\partial_{\theta} \Phi(Z;r,0,\theta,0)|_{\theta=0})$ with $Z \subset Y$, respectively. On the other hand, the light shaded regions $X'=\{s \in T: x(s) \in [-L/6,L/6] \, \wedge \, y(s) \in [2L/3, 4L/3] \}$ and $Y'=\{s \in T: y(s) \in [-L/6,L/6] \, \wedge \, x(s) \in [2L/3, 4L/3] \}$ correspond to terms $\Sa^{(M)}(H(0,0,0,\theta_y),\partial_{\theta} \Phi(Z;\theta,0,0,\theta_y)|_{\theta=0})$ with $Z \subset X'$ and $\Sa^{(M)}(H(0,\theta_x,0,0),\partial_{\theta} \Phi(Z;0,\theta_x,\theta,0)|_{\theta=0})$ with $Z \subset Y'$, respectively. Since each interaction has radius of support $M \le L/6$, interactions in $X'$ commute with interactions in $Y$ and $Y'$ and interactions in $Y'$ further commute with those in $X$.}}
\label{fig:space_decomp}
\end{figure}
%End Figure

Of course, in order for the previous argument to work, we need to show that each term we wish to rotate globally, has support away from the twists at $x=L/2+1$ and $y=L/2+1$. To study the support of the terms derived in the above partials, we choose one of these terms as an illustrative example. In particular, we note that taking commutators of $\Delta_Y(\theta_x,\theta_y,s, r)$ with $D^{(M)}_Y(\theta_x,\theta_y+s)$, can only extend the support of $\Delta_Y(\theta_x,\theta_y,s, r)$ (a $4M \times 2M$ box, centered at the origin) up to $2M$ in the $y$-direction, so terms affected by the twist on $x=L/2$ and $y=L/2$ will not appear in the commutator for $M\le L/12$, since then $(2+2)M\le L/2-2M$. Finally, since the support of $D^{(M)}_Y(\theta_x,\theta_y+s) -R_Y\left(\theta_y,R_X\left(\theta_x,D^{(M)}_Y(0,s)\right)\right)$ is away from the support of $\Delta_Y(\theta_x,\theta_y,s, r)$, the commutator of the two operators is trivially $0$.

The above observations, combined with the partials we computed and the assumption $0\le s_1, s_2 \le r$, imply that for a constant $C >0$, the following bounds hold:
\begin{eqnarray}\label{bound:F_r_1}
\|F_r(\theta_x,\theta_y,s_1,0) - R_Y(\theta_y,R_X(\theta_x, F_r(0,0,s_1,0)\| &\le& C \left(Q_{\max} \alpha J\, L\right)^5 r^5\\
\|F_r(\theta_x,\theta_y,0,s_2) - R_Y(\theta_y,R_X(\theta_x, F_r(0,0,0,s_2)\| &\le& C \left(Q_{\max} \alpha J\, L\right)^5 r^5,\label{bound:F_r_2}
\end{eqnarray}
with the constant power of $(Q_{\max} \alpha J\, L)$ coming from bounds on the norms of fifth-order commutators containing terms like $D^{(M)}_Y(\theta_x,\theta_y+s_1)$ and partials like $\partial_{s'} \DY^{(M)}(\theta_x+s',\theta_y+s_1)_{s'=s}$, whose norms may be bounded using (\ref{naive_bnd:sa}) and arguments similar to the one found in the Appendix, respectively.

Moreover, the supports of the partial derivatives  of $F_r(0,0,s_1,0)$ and $F_r(0,0,0,s_2)$ up to order $4$ in $r$, lie strictly within $\Omega_0$, for $M \le L/48$. By taking $M$ to be a smaller fraction of $L$, we could have continued the expansion to higher orders in $r$, while keeping the supports of $F_r(0,0,s_1,0)$ and $F_r(0,0,0,s_2)$ strictly within $\Omega_0$. For our purposes, it suffices to consider errors of order $\mathcal{O}(r^5)$.

We turn our focus, now, to showing the bound:
\begin{eqnarray}\label{bound:G_r}
\|G_r(\theta_x,\theta_y,s_1,s_2) - R_Y(\theta_y,R_X(\theta_x, G_r(0,0,s_1,s_2)))\| &\le& C \left(Q_{\max} \alpha J\, L\right)^5 r^5
\end{eqnarray}
for some constant $C > 0$, while keeping the support of $G_r(0,0,s_1,s_2)$ within $\Omega_0$, up to order $4$ in $r$ (recalling that $0 \le s_1,s_2 \le r$).
We begin by setting:
\begin{eqnarray*}
\Delta^{r}_{Y}(\theta_x,\theta_y,s_1, s_2) &\equiv& i\, U_{X(M)}^{\dagger}(\theta_x,\theta_y+r,s_2)\, D^{(M)}_Y(\theta_x+r,\theta_y+s_1) \, U_{X(M)} (\theta_x,\theta_y+r,s_2) - i\, D^{(M)}_Y(\theta_x+r,\theta_y+s_1)\\  
&=& \int_0^{s_2} U_{X(M)}^{\dagger}(\theta_x,\theta_y+r,s)\left[D^{(M)}_X(\theta_x+s,\theta_y+r),D^{(M)}_Y(\theta_x+r,\theta_y+s_1)\right] U_{X(M)} (\theta_x,\theta_y+r,s) \, ds
\\
\Delta^{r}_{X}(\theta_x,\theta_y,s_1, s_2) &\equiv& i\, D^{(M)}_X(\theta_x+s_2,\theta_y+r) -i\, U_{Y(M)}^{\dagger}(\theta_x+r,\theta_y,s_1)\, D^{(M)}_X(\theta_x+s_2,\theta_y+r) \, U_{Y(M)} (\theta_x+r,\theta_y,s_1)\\ 
&=& \int_0^{s_1} U_{Y(M)}^{\dagger}(\theta_x+r,\theta_y,s)\left[D^{(M)}_X(\theta_x+s_2,\theta_y+r),D^{(M)}_Y(\theta_x+r,\theta_y+s)\right] U_{Y(M)} (\theta_x+r,\theta_y,s) \, ds
\end{eqnarray*}
Taking partials with respect to the variables $s_1$ and $s_2$, one may verify:
\begin{eqnarray*}
\partial_{s_1'} G_r(\theta_x,\theta_y,s_1',s_2)_{s_1'=s_1} &=& G_r(\theta_x,\theta_y,s_1,s_2) \cdot \left(U_{Y(M)}^{\dagger}(\theta_x+r,\theta_y,s_1)\, \Delta^{r}_{Y}(\theta_x,\theta_y,s_1, s_2)\, U_{Y(M)} (\theta_x+r,\theta_y,s_1)\right) \\
\partial_{s_2'} G_r(\theta_x,\theta_y,s_1,s_2')_{s_2'=s_2} &=& \left(U_{X(M)}^{\dagger}(\theta_x,\theta_y+r,s_2)\, \Delta^{r}_{X}(\theta_x,\theta_y,s_1, s_2)\, U_{X(M)} (\theta_x,\theta_y+r,s_2)\right) \cdot G_r(\theta_x,\theta_y,s_1,s_2)
\end{eqnarray*}
Using the above relations, it is straightforward to check:
\begin{eqnarray*}
\partial_{s_1} G_r(\theta_x,\theta_y,s_1,s_2)_{s_1=0} &=& \Delta^{r}_{Y}(\theta_x,\theta_y,0, s_2)\\
\partial_{s_2} G_r(\theta_x,\theta_y,s_1,s_2)_{s_2=0} &=& \Delta^{r}_{X}(\theta_x,\theta_y,s_1, 0)\\
\partial^2_{s_1} G_r(\theta_x,\theta_y,s_1,s_2)_{s_1=0} &=&
i\left[\Delta^{r}_{Y}(\theta_x,\theta_y,0, s_2),D^{(M)}_Y(\theta_x+r,\theta_y)\right] + \left(\partial_{s_1} \Delta^{r}_{Y}(\theta_x,\theta_y,s_1, s_2)\right)_{s_1=0} + (\Delta^{r}_{Y}(\theta_x,\theta_y,0, s_2))^2\\
\partial^2_{s_2} G_r(\theta_x,\theta_y,s_1,s_2)_{s_2=0} &=&
i\left[\Delta^{r}_{X}(\theta_x,\theta_y,s_1, 0),D^{(M)}_X(\theta_x,\theta_y+r)\right] + \left(\partial_{s_2} \Delta^{r}_{X}(\theta_x,\theta_y,s_1, s_2)\right)_{s_2=0} + (\Delta^{r}_{X}(\theta_x,\theta_y,s_1,0))^2
\end{eqnarray*}
We note here that we only need to evaluate the following partials at $s_1=s_2=0$ in order to have a complete picture of $G_r(\theta_x,\theta_y,s_1,s_2)$ up to order $4$ in $r$:
\begin{eqnarray*}
\partial_{s_2} (\partial_{s_1} G_r(\theta_x,\theta_y,s_1,s_2)_{s_1=0})_{s_2=0} &=& \partial_{s_2} \Delta^{r}_{Y}(\theta_x,\theta_y,0, s_2)_{s_2=0} = \left[D^{(M)}_X(\theta_x,\theta_y+r),D^{(M)}_Y(\theta_x+r,\theta_y)\right]\\
\partial_{s_2} \left(\partial^2_{s_1} G_r(\theta_x,\theta_y,s_1,s_2)_{s_1=0}\right)_{s_2=0} &=& i \left[\left[D^{(M)}_X(\theta_x,\theta_y+r),D^{(M)}_Y(\theta_x+r,\theta_y)\right],D^{(M)}_Y(\theta_x+r,\theta_y)\right] \\
&+& \left[D^{(M)}_X(\theta_x,\theta_y+r), \left(\partial_{s_1} D^{(M)}_Y(\theta_x+r,\theta_y+s_1)\right)_{s_1=0}\right]\\
\partial_{s_1} \left(\partial^2_{s_2} G_r(\theta_x,\theta_y,s_1,s_2)_{s_2=0}\right)_{s_1=0} &=& i \left[\left[D^{(M)}_X(\theta_x,\theta_y+r),D^{(M)}_Y(\theta_x+r,\theta_y)\right],D^{(M)}_X(\theta_x,\theta_y+r)\right] \\
&+& \left[\left(\partial_{s_2} D^{(M)}_X(\theta_x+s_2,\theta_y+r)\right)_{s_2=0}, D^{(M)}_Y(\theta_x+r,\theta_y)\right]\\
\partial^2_{s_1} \left(\partial^2_{s_2} G_r(\theta_x,\theta_y,s_1,s_2)_{s_2=0}\right)_{s_1=0}  &=& i
\left[\left(\partial^2_{s_1} \Delta^{r}_{X}(\theta_x,\theta_y,s_1, 0)\right)_{s_1=0},D^{(M)}_X(\theta_x,\theta_y+r)\right] \\
&+& \left(\partial_{s_2} \left(\partial^2_{s_1} \Delta^{r}_{X}(\theta_x,\theta_y,s_1, 0)\right)_{s_1=0}\right)_{s_2=0}
+ 2 \left[D^{(M)}_X(\theta_x,\theta_y+r),D^{(M)}_Y(\theta_x+r,\theta_y)\right]^2
\end{eqnarray*}
Now, since $\left(\partial^2_{s_1} \Delta^{r}_{X}(\theta_x,\theta_y,s_1, 0)\right)_{s_1=0}= \partial_{s_2} \left(\partial^2_{s_1} G_r(\theta_x,\theta_y,s_1,s_2)_{s_1=0}\right)_{s_2=0},$
we can see from Figure \ref{fig:space_decomp} that all of the above commutators are rotated versions of the same commutators with $\theta_x=\theta_y=0$, where we apply $R_Y(\theta_y, R_X(\theta_x, \cdot))$ to perform the rotation as in the study of $F_r(\theta_x,\theta_y,s_1,0)$. For example, we have for one of the fourth order terms implicit in the first term of the last line of partials evaluated above:
\begin{eqnarray*}
&&\left[\left[\left[D^{(M)}_X(\theta_x,\theta_y+r),D^{(M)}_Y(\theta_x+r,\theta_y)\right],D^{(M)}_Y(\theta_x+r,\theta_y)\right], D^{(M)}_X(\theta_x,\theta_y+r)\right] = \\
&&R_Y\left(\theta_y,R_X\left(\theta_x, \left[\left[\left[D^{(M)}_X(0,r),D^{(M)}_Y(r,0)\right],D^{(M)}_Y(r,0)\right], D^{(M)}_X(0,r)\right]\right)\right)
\end{eqnarray*}
where we have assumed that $M \le L/12$ to make sure that all terms affected by the twists at $x=y=L/2$ vanish. Furthermore, choosing $M = L/48$, guarantees that up to fourth order all commutators have support within $\Omega_0$. To see this, note that each new commutator includes terms with support at most $2M$ away from the support of each term in the commutator. For example, the above commutator has, potentially, the largest support of all the terms up to fourth order. It is supported within a cross of radius $5M$, with two axis of width $2M$ each, centered at the origin.
Now, using (\ref{loop_approximation}) with $V^{(M)}_{\circlearrowleft}(\theta_x,\theta_y,r) = F_r(\theta_x,\theta_y,r,r)$, we get from (\ref{eq:F_r}-\ref{bound:G_r}) and several triangle inequalities:
\be
\|V_{\circlearrowleft}(\theta_x,\theta_y,r)- R_Y(\theta_y, R_X(\theta_x, V_{\circlearrowleft}(0,0,r)))\| \le C\left(\frac{\alpha \gamma}{\sqrt{2\pi}}\,\left(Q_{\max} \frac{J}{\gamma} L\right) \frac{L}{\sigma^2}\, e^{- \frac{L}{96R}} \, r + \left(Q_{\max} \alpha J\, L\right)^5 r^5\right)
\nonumber
\ee
Moreover, setting $W(r)$ to be the sum of terms up to order $4$ in $r$ in the Taylor expansion of the operator $V^{M}_{\circlearrowleft}(0,0,r)$ and noting that $W(0)= \one$ and $\partial_r W(r)_{r=0} = 0$, we have $\|W(r)-\one\| \le C \left(Q_{\max} \alpha J L\right)^2 r^2$, for some constant $C >0$, which completes the proof of this Lemma.
\end{proof}

Now, we come back to the proof of Lemma \ref{lem:translation}, which implies Corollary \ref{cor:stokes}. 
\begin{proof}[Proof of Lemma \ref{lem:translation}]
Using Lemmas \ref{lem:translation_unitary} and \ref{lem:local_loop_unitary} with $A_{\Omega_0} = W(r)-\one$, we get the bound (\ref{eq:stokes}) from the following estimate:
\begin{flalign*}
&|\braket{\Psi_0}{\Psi_{\circlearrowleft}(\theta_x,\theta_y,r)} -\braket{\Psi_0}{\Psi_{\circlearrowleft}(r)}| \le
|\braket{\Psi_0}{W(r) \Psi_0} - \braket{\Psi_0}{\Psi_{\circlearrowleft}(r)}|\\
&+\left|\braket{\Psi_0}{W(r) \Psi_0}- \braket{\Psi_0}{V^{\dagger}[(0,0)\rightarrow(\theta_x,\theta_y)]\,R_Y(\theta_y,R_X(\theta_x,W(r)))\, V[(0,0)\rightarrow(\theta_x,\theta_y)]  \Psi_0}\right|\\
&+\left|\braket{\Psi_0}{V^{\dagger}[(0,0)\rightarrow(\theta_x,\theta_y)]\,R_Y(\theta_y,R_X(\theta_x,[W(r)-V_{\circlearrowleft}(0,0,r)]))\, V[(0,0)\rightarrow(\theta_x,\theta_y)]  \Psi_0}\right|\\
&+\left|\braket{\Psi_0}{V^{\dagger}[(0,0)\rightarrow(\theta_x,\theta_y)]\,\left[V_{\circlearrowleft}(\theta_x,\theta_y,r)-R_Y(\theta_y,R_X(\theta_x,V_{\circlearrowleft}(0,0,r)))\right]\, V[(0,0)\rightarrow(\theta_x,\theta_y)]  \Psi_0}\right| \\
&\le 2\|W(r) -V_{\circlearrowleft}(0,0,r)\| + \|V_{\circlearrowleft}(\theta_x,\theta_y,r)-R_Y(\theta_y,R_X(\theta_x,V_{\circlearrowleft}(0,0,r)))\|\\
&+\left|\braket{\Psi_0}{(W(r)-\one) \Psi_0}- \braket{\Psi_0}{V^{\dagger}[(0,0)\rightarrow(\theta_x,\theta_y)]\,R_Y(\theta_y,R_X(\theta_x,(W(r)-\one)))\, V[(0,0)\rightarrow(\theta_x,\theta_y)]  \Psi_0}\right|,
\end{flalign*}
recalling the bound $\|W(r)-\one\| \le C \left(Q_{\max} \alpha J L\right)^2 r^2$, for some constant $C >0$.
\end{proof}

\section{Proof of main theorem}
At this point, we can put everything together to prove the main result, the Quantization of the Hall Conductance:
\begin{proof}[Proof of the Main Theorem]
Going back to (\ref{main_bound}), we use the bounds derived in Corollary \ref{cor:adiabatic_phase_3}, Corollary \ref{cor:gs_evol} and Corollary \ref{cor:stokes}, to complete the proof,
noting that terms which are exponentially small in $\alpha^2\gamma^2$ dominate terms which are
exponentially small in $L$. 
Finally, we set $G_{R,J,\gamma}(L) = \alpha \gamma$ with $\alpha$ given in the statement of Lemma \ref{lem:local_approximation}. 
\end{proof}
%% DISCUSSION AND EXTENSIONS %%
\section{Discussion and extensions}
\subsection{Hall conductance as an obstruction}

Given two different gapped Hamiltonians, $H_0$ and $H_1$, it is natural to ask whether there is a path in parameter space $\{H(s)\}_{s\in[0,1]}$ which
connects the two Hamiltonians such that $H(s)$ obeys the same upper bounds $R,J$ on the interaction range and strength, and given some lower bound on the spectral gap $\gamma(s)$ of $H(s)$, for $s \in [0,1]$.
As a corollary of our main theorem, we now show that there is an obstruction to
finding such a path connecting two Hamiltonians with different Hall conductance, unless the gap becomes polynomially small in $L$, for some $s \in [0,1]$.

Suppose that $H_0$, $H_1$ both have interaction range $R$ and an energy ratio $J/\gamma$ of order unity, with $L>>1$.  Then, the main theorem implies that the
Hall conductance is very close to an integer multiple of $e^2/h$ in both cases.  However, if the given integers, $n_0,n_1$ differ,
then for any path, for some $s\in [0,1]$, the Hall conductance must equal $(n_0 \pm 1/2)\frac{e^2}{h}
$.

However, as a corollary of the main theorem, we have:
\begin{cor}
For any given $R$ and $q_{max}$, there exists a constant $c$ such that for
$L\geq c (J/\gamma)^5 \ln^{11/2}(J/\gamma)$
\be
\left|\sigma_{xy} - n \cdot\frac{e^2}{h}\right| < \frac{e^2}{2h}
\ee
\begin{proof}
For the given $L$, $G_{R,J,\gamma}(L) \sim (\gamma/J) [(J/\gamma)^{1/5} \ln^{11/2}(J/\gamma)/\ln^{3}(J/\gamma)]^{1/5}\sim [\ln^{5/2}(J/\gamma)]^{1/5}\sim
\ln^{1/2}(J/\gamma)$.  Thus, $\exp(-G^2_{R,J,\gamma}/6)$ is inverse polynomial in $L$, and by appropriate choice of $c$ the desired result follows.
\end{proof}
\end{cor}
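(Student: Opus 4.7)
The plan is to apply the Main Theorem directly and verify that, for the stated lower bound on $L$, the right-hand side of (\ref{mainres}) falls below $e^2/(2h)$. The entire content of the corollary is a scaling calculation of $G_{R,J,\gamma}(L)$ combined with the observation that, once $G_{R,J,\gamma}^2(L)$ is a large enough multiple of $\ln(J/\gamma)$, the exponential factor $e^{-G^2_{R,J,\gamma}(L)/6}$ beats the polynomial prefactor in (\ref{mainres}).

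First I would compute the asymptotics of $G_{R,J,\gamma}(L)$ for $L = c(J/\gamma)^5 \ln^{11/2}(J/\gamma)$, treating $R$ and $q_{\max}$ as fixed. Since $\ln L = 5\ln(J/\gamma) + O(\ln\ln(J/\gamma))$, we have $\ln^3 L \sim 125\,\ln^3(J/\gamma)$, so the one-fifth power in the definition of $G_{R,J,\gamma}$ evaluates to
\[
\left(\frac{L}{48(2\pi) q_{\max} \ln^3 L}\right)^{1/5} \sim \kappa\, c^{1/5}\, (J/\gamma)\, \ln^{1/2}(J/\gamma),
\]
for a constant $\kappa$ depending only on $R$ and $q_{\max}$. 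Recalling $v \propto J$, the outer factor $\gamma/(4vR)$ cancels the $J/\gamma$, yielding $G_{R,J,\gamma}(L) \sim C_c\, \ln^{1/2}(J/\gamma)$ with $C_c \propto c^{1/5}$. Hence $G^2_{R,J,\gamma}(L) \sim C_c^2 \ln(J/\gamma)$ and $e^{-G^2_{R,J,\gamma}(L)/6} \sim (J/\gamma)^{-C_c^2/6}$, an inverse power of $J/\gamma$ whose exponent grows arbitrarily with $c$.

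Next I would substitute this scaling into (\ref{mainres}). The polynomial prefactor $(q_{\max} R^2 (J/\gamma) L)^{5/2}\, G^{5/2}_{R,J,\gamma}(L)$ is bounded by a fixed polynomial in $J/\gamma$ (of the form $(J/\gamma)^{p}$ for some $p = p(R,q_{\max})$ of order $15$, times polylogs, since $L \sim (J/\gamma)^5$). Choosing $c$ so that $C_c^2/6$ strictly exceeds $p$ with a margin forces the right-hand side of (\ref{mainres}) to tend to zero as $J/\gamma \to \infty$, and in particular to fall below $e^2/(2h)$. The same choice verifies the side hypothesis (\ref{Lrequire}), since both $G^2_{R,J,\gamma}(L)$ and $\ln(q_{\max}R^2(J/\gamma)L)$ are of order $\ln(J/\gamma)$ and the ratio is tunable through $c$.

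The step I expect to require the most care is the bookkeeping around the subleading $\ln\ln(J/\gamma)$ terms hidden in $\ln L$, together with checking that a single constant $c$ (depending only on $R$ and $q_{\max}$) works uniformly in $J/\gamma$. There is no genuine mathematical obstacle beyond the Main Theorem itself; this corollary is essentially a user-friendly repackaging of its asymptotic content.
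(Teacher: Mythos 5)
Your proposal is correct and follows the same route as the paper: plug the stated $L$ into the definition of $G_{R,J,\gamma}(L)$, observe that the $\gamma/J$ prefactor cancels the $(J/\gamma)$ coming from $(L/\ln^3 L)^{1/5}$ to leave $G_{R,J,\gamma}(L)\sim c^{1/5}\ln^{1/2}(J/\gamma)$, and then note that $e^{-G^2/6}$ is an inverse power of $J/\gamma$ (hence of $L$) whose exponent can be made arbitrarily large through $c$, beating the polynomial prefactor and the hypothesis (\ref{Lrequire}). Your version simply spells out the bookkeeping (the degree of the prefactor in $J/\gamma$, the explicit check of (\ref{Lrequire})) that the paper leaves to the reader.
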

Thus, if $n_0\neq n_1$, for some $s$, the gap $\gamma(s)$ obeys
$\gamma(s) < c' J L^{-1/5}\cdot \ln^{11/10}(L)$, for some $c'$ which depends
on $R,J$.

\subsection{Fiber bundles}
We now relate our approach to Chern numbers.  First, recall how the Chern number
argument for the Hall conductance worked.  Under the assumption that the
gap is non-vanishing everywhere, the Hamiltonian has a unique ground
state density matrix $|\Psi_0(\theta_x,\theta_y)\rangle\langle\Psi_0(\theta_x,\theta_y)|$ for all $\theta_x,\theta_y$.  This ground state density
matrix is a rank-$1$ projection in a very high dimensional space (the
Hilbert space of the whole system has dimension exponentially large in $L^2$).

Given this family of projections,
they define a complex vector bundle, with the torus as
the base space and $\mathbb{C}$ as the fiber.  The connection given by
adiabatic continuation is the
canonical connection.

One could also consider a different bundle, with the larger Hilbert
space (that is, not just the ground state subspace, but all higher energy eigenspaces included) as the fiber.  This bundle is trivial, being the direct product
of the larger Hilbert space with the torus.
In this case, if there is a unique ground state for all $\theta_x,\theta_y$, then
adiabatic continuation gives a connection
that has the property of preserving the rank-$1$ projection onto the
ground state.

We now show that we can find a rank-$1$ projection that will be
approximately preserved by quasi-adiabatic continuation.  We do this using general error terms ($\err$ below)
so that the results can be applied to other quasi-adiabatic continuations.
Our assumptions below correspond to assumptions that different paths
starting at $(0,0)$ and ending at $(\theta_x,\theta_y)$ give approximately
the same state; the first assumption bounds the difference for two
different paths, while the second bounds the change in the state for
a slight change in the path, as shown in Figure \ref{paths}.

\begin{figure}
\centering
\includegraphics[width=300px]{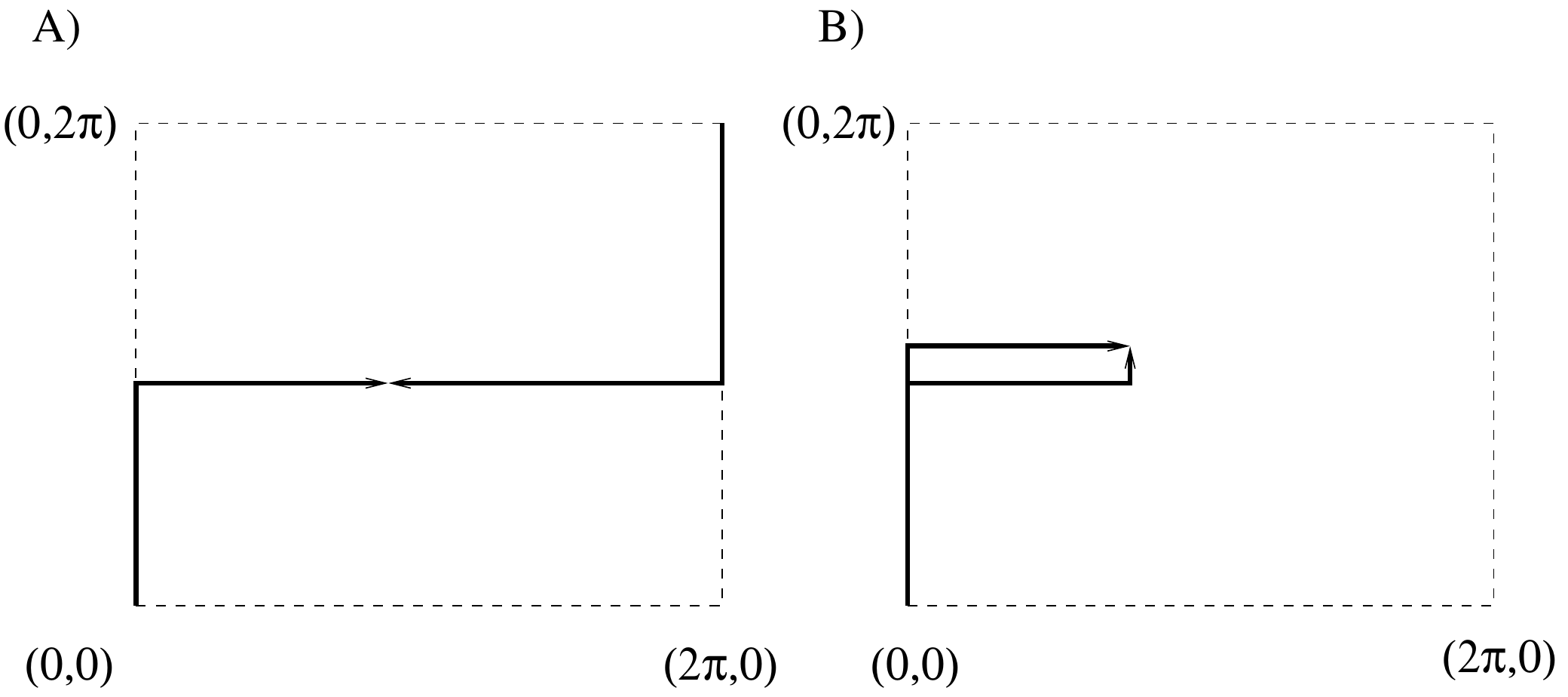}
\caption{\small{Paths on the flux torus.  Dashed lines are used for
the boundaries of the torus, and solid lines for paths.
a) An example of two different paths that are compared in
the first assumption. The first assumption states that the resulting projector
is almost the same for both paths.  b) Another example of two different paths,
one of which travels vertically and then horizontally, while the other
moves a slightly shorter distance vertically, then moves horizontally,
then moves a short distance vertically to arrive at the same point.
The second assumption describes a differential form of the assumption that
these two paths give almost the same projector.
}}
\label{paths}
\end{figure}

We obtain the desired projector by averaging over different paths; the reason this is necessary is that if we choose just one
particular path from $(0,0)$ to obtain the projector at $(\theta_x,\theta_y)$, for example the path first increasing $\theta_y$ and then
increasing $\theta_x$, this would give a discontinuity at $\theta_x=\theta_y=2\pi$.
\begin{lemma}
Suppose $\DX(\theta_x,\theta_y),\DY(\theta_x,\theta_y)$ are quasi-adiabatic continuation operators and $\Psi_0$ is a state 
such that the following
conditions hold:
\begin{enumerate}
\item For any $\theta_x,\theta_y \in [0,2\pi]$, let $\Lambda^a$, for $a=1,2,3,4$ denote four different paths on the flux torus from $(0,0)$ to
$(\theta_x,\theta_y)$ as follows.
We can move from $(0,0)$ to $(0,\theta_y)$ by
either increasing $\theta_y$ from zero or by decreasing it from $2\pi$, and then we move from $(0,\theta_y)$ to
$(\theta_x,\theta_y)$ again by either increasing $\theta_x$ from zero or by decreasing it from $2\pi$.
The paths with $a=1,2$ correspond to the paths in which
we increase $\theta_y$, while those with $a=3,4$ correspond to those with decreasing $\theta_y$, while
the paths with $a=1,3$ correspond to those with increasing $\theta_x$ and those with $a=2,4$ correspond to
those with decreasing $\theta_x$.
Let $U_{a}$, for $a=1,2,3,4$ denote the unitary evolution using quasi-adiabatic continuation along each of the corresponding paths.
Then, we require that
\be
|\langle \Psi_0,U_{a}^{\dagger} U_b \Psi_0 \rangle|^2\geq 1-\err^2 ,
\ee
for all $a,b$, for sufficiently small $\err$.

\item Next, let $U_Y^a(\theta_x,\theta_y)$ denote the unitary evolution corresponding to the vertical segment of path $U^a$ (the
evolution in $\theta_y$), while $U_X^a(\theta_x,\theta_y)$ denotes the unitary evolution corresponding to the horizontal segment
(the evolution in $\theta_x)$.  Note that $U_Y^a$ depends only
on $\theta_y$, not on $\theta_x$.  Then, we require that
\be
\|i\left\{\DY(\theta_x,\theta_y)U_X^a(\theta_x,\theta_y)-U_X^a(\theta_x,\theta_y)\DY(0,\theta_y) U_Y^a(\theta_x,\theta_y)\right\}|\Psi_0\rangle\langle \Psi_0|U^a(\theta_x,\theta_y)+h.c.\|\leq \err.
\ee

\item The operators $\DX(\theta_x,\theta_y),\DX(\theta_x,\theta_y)$
are differentiable with respect to $\theta_x,\theta_y$ everywhere and
are periodic in $\theta_x,\theta_y$ with period $2\pi$.  The norms
$\| \DX(\theta_x,\theta_y) \|$, $\| \DY(\theta_x,\theta_y) \|$ are bounded
by some constant $D$.
\end{enumerate}
Then,
there exists a rank-$1$ projector, $P(\theta_x,\theta_y)$ which is periodic in $\theta_x,\theta_y$ with
period $2\pi$ such that
\be
\| P(0,0)-|\Psi_0\rangle\langle\Psi_0| \| \leq C\err,
\ee
and such that
\be
\|i[\DX(\theta_x,\theta_y),P(\theta_x,\theta_y)]-\partial_{\theta_x} P(\theta_x,\theta_y)\| \leq
C(1+D)\err.
\ee
and
\be
\|i[\DY(\theta_x,\theta_y),P(\theta_x,\theta_y)]-\partial_{\theta_y} P(\theta_x,\theta_y)\| \leq
C(1+D)\err.
\ee
\begin{proof}
For $a=1,2,3,4$ let $P(\theta_x,\theta_y)^{a}=U_a|\Psi_0\rangle\langle\Psi_0|U_a^{\dagger}$.
By the first assumption,
\be
\label{by1}
{\rm Tr}(P(\theta_x,\theta_y)^a P(\theta_x,\theta_y)^b)\geq 1-\err^2.
\ee

For $a \in \{1,2,3,4\}$, and $\theta_x,\theta_y\in [0,2\pi]$, let $w^a(\theta_x,\theta_y)=w^a_x(\theta_x) w^a_y(\theta_y)$ denote the weight of the path $\Lambda_a$, where $w^a_y(\theta_y)$ is equal to $\theta_y/2\pi$ if the path initially moves by decreasing $\theta_y$ from $2\pi$ and is equal to
$1-\theta_y/2\pi$ if the path instead initially moves by increasing $\theta_y$ from $0$ and
$w^a_x(\theta_x)$ is equal to $\theta_x/2\pi$ if the path then moves by decreasing $\theta_x$ from $2\pi$ and is equal to $1-\theta_x/2\pi$ if the path instead moves by increasing $\theta_x$ from $0$.

For $\theta_x,\theta_y\in [0,2\pi]$,
let $O(\theta_x,\theta_y)=\sum_{a=1}^4 w^a(\theta_x,\theta_y) P(\theta_x,\theta_y)^a$.

Since $O(2\pi,\theta_y)=O(0,\theta_y)$ and $O(\theta_x,0)=O(\theta_x,2\pi)$,
we can extend the definition of $O(\theta_x,\theta_y)$ to a periodic function for
all $\theta_x,\theta_y$.  The operator $O$ is differentiable everywhere
except when $\theta_x$ or $\theta_y$ is an integer multiple of $2\pi$.
In the rest of the proof, whenever we evaluate a partial derivative of
$O(\theta_x,\theta_y)$, our bounds are valid for all $\theta_x,\theta_y$ except for
the lines where $\theta_x$ or $\theta_y$ is an integer multiple of $2\pi$.

First, we bound the norm of
$i[\DX(\theta_x,\theta_y),O(\theta_x,\theta_y)]-\partial_{\theta_x} O(\theta_x,\theta_y)$.
Note that for each $a$, $$i[\DX(\theta_x,\theta_y),P(\theta_x,\theta_y)^a]=\partial_{\theta_x} P(\theta_x,\theta_y)^a.$$
Thus,
\begin{eqnarray*}
i[\DX(\theta_x,\theta_y),O(\theta_x,\theta_y)]-\partial_{\theta_x} O(\theta_x,\theta_y)=
-\sum_{a=1}^4 (\partial_{\theta_x} w^a(\theta_x,\theta_y)) P(\theta_x,\theta_y)^a=\\
\left(\frac{1}{2\pi}-\frac{\theta_y}{4\pi^2}\right)\left(P(\theta_x,\theta_y)^{2}-P(\theta_x,\theta_y)^{1}\right)+\frac{\theta_y}{4\pi^2}\left(P(\theta_x,\theta_y)^{4}-P(\theta_x,\theta_y)^{3}\right).
\end{eqnarray*}
Using $\|A\| \le \|A\|_2 = \sqrt{\Tr(A^2)}$, true for any Hermitian operator $A$, we have:
\begin{flalign*}
&\left\|i[\DX(\theta_x,\theta_y),O(\theta_x,\theta_y)]-\partial_{\theta_x} O(\theta_x,\theta_y)\right\|
\le \left(\frac{1}{2\pi}-\frac{\theta_y}{4\pi^2}\right) \left\|P(\theta_x,\theta_y)^{1}-P(\theta_x,\theta_y)^{2}\right\| + \frac{\theta_y}{4\pi^2} \left\|P(\theta_x,\theta_y)^{3}-P(\theta_x,\theta_y)^{4}\right\|\\
&\le \frac{1}{2\pi} \max\left\{\left\|P(\theta_x,\theta_y)^{1}-P(\theta_x,\theta_y)^{2}\right\|_2,\left\|P(\theta_x,\theta_y)^{3}-P(\theta_x,\theta_y)^{4}\right\|_2\right\} \le \frac{\sqrt{2}}{2\pi}\,\err,
\end{flalign*}
where we used (\ref{by1}) to get $\max\left\{\left\|P(\theta_x,\theta_y)^{1}-P(\theta_x,\theta_y)^{2}\right\|_2^2,\left\|P(\theta_x,\theta_y)^{3}-P(\theta_x,\theta_y)^{4}\right\|_2^2\right\} \le 2\err^2$, for the last bound.

We next evaluate
$i[\DY(\theta_x,\theta_y),O(\theta_x,\theta_y)]-\partial_{\theta_y} O(\theta_x,\theta_y)$.
Now the difference between quasi-adiabatic continuation and partial derivative,
namely $i[\DY(\theta_x,\theta_y),P(\theta_x,\theta_y)^a]-\partial_{\theta_y} P(\theta_x,\theta_y)^a$, is non-vanishing.
The operator norm of this difference is equal to $\|\left\{i\DY(\theta_x,\theta_y)U^a(\theta_x,\theta_y)-\partial_{\theta_y}U^a(\theta_x,\theta_y)\right\}|\Psi_0\rangle\langle\Psi_0|U^a(\theta_x,\theta_y)+h.c.\|$.
This is equal to 
$\|i\left\{\DY(\theta_x,\theta_y)U_X^a(\theta_x,\theta_y)-U_X^a(\theta_x,\theta_y)\DY(0,\theta_y) U_Y^a(\theta_x,\theta_y)\right\}|\Psi_0\rangle\langle|U^a(\theta_x,\theta_y)+h.c.\|\leq C \err$.
So,
$\| i[\DY(\theta_x,\theta_y),O(\theta_x,\theta_y)]-\partial_{\theta_y} O(\theta_x,\theta_y)\| \le C\err
+(1/2\pi){\rm max}\{\|P(\theta_x,\theta_y)^{1}-P(\theta_x,\theta_y)^{2}\|,\|P(\theta_x,\theta_y)^{3}-P(\theta_x,\theta_y)^{4}\|\}$, which is
bounded by $C\err$.

Given the operator $O(\theta_x,\theta_y)$, we can define a smoothed
operator by convolving it:
\be
\tilde O(\theta_x,\theta_y)\equiv \int \int O(\phi_x,\phi_y) k(\phi_x-\theta_x,\phi_y-\theta_y) d\phi_x d\phi_y,
\ee
where $k(x,y)$ has support in a small square near $x=y=0$, $k$ is positive,
and the integral of $k$ is equal to unity.  For differentiable $\DX,\DY$,
by choosing the size of the
support of $k$ sufficiently small, we can construct an operator $\tilde O(\theta_x,\theta_y)$ that is everywhere differentiable, is periodic, and
obeys the bounds
\be
\| i[\DX(\theta_x,\theta_y),\tilde O(\theta_x,\theta_y)]-\partial_{\theta_x} \tilde O(\theta_x,\theta_y) \| \leq C\err,
\ee
\be
\| i[\DY(\theta_x,\theta_y),\tilde O(\theta_x,\theta_y)]-\partial_{\theta_y} \tilde O(\theta_x,\theta_y) \| \leq C\err,
\ee
\be
\| \tilde O(\theta_x,\theta_y)-O(\theta_x,\theta_y)\|\leq C\err,
\ee
for all $\theta_x,\theta_y$.

We now define $P(\theta_x,\theta_y)$ to be the projector onto the eigenvector of $\tilde O(\theta_x,\theta_y)$ with the largest
eigenvalue.  By (\ref{by1}), we have ${\rm Tr}(\tilde O(\theta_x,\theta_y) P(\theta_x,\theta_y)^a)\geq 1-C\err$
for $a \in \{1,2,3,4\}$.
Hence, by definition the eigenvector $P(\theta_x,\theta_y)$ has eigenvalue greater than or equal to $1-C\err$.  
This eigenvector is unique for sufficiently small $\err$ and
the next largest eigenvalue is less than or equal to $C\err$.
We now relate these bounds to bounds on
$\|i[\DY(\theta_x,\theta_y),P(\theta_x,\theta_y)]-\partial_{\theta_y} P(\theta_x,\theta_y)\|$.
For notational convenience, we write $\tilde O$ for $\tilde O(\theta_x,\theta_y)$ and $P$ for $P(\theta_x,\theta_y)$ when no derivative is taken.
Since $\|P-\tilde O\|\leq C\err$, we have
$$\|[\DY(\theta_x,\theta_y),P(\theta_x,\theta_y)]-[\DY(\theta_x,\theta_y),\tilde O(\theta_x,\theta_y)]\|\leq 2\|\DY(\theta_x,\theta_y)\|C \err \leq
DC\err.$$
Let 
$\lambda(\theta_x,\theta_y)$ be the largest eigenvalue of
$\tilde O(\theta_x,\theta_y)$ and $v(\theta_x,\theta_y)$ be the corresponding eigenvector so that $P=\lambda(\theta_x,\theta_y)^{-1}|v(\theta_x,\theta_y)\rangle\langle v(\theta_x,\theta_y)|$.  Then, we have
\begin{eqnarray}
(1-P) \partial_{\theta_y} v(\theta_x,\theta_y)&=& \frac{1}{\lambda(\theta_x,\theta_y)-\tilde O} (1-P) \left( \partial_{\theta_y} \tilde O(\theta_x,\theta_y) \right) v(\theta_x,\theta_y),
\end{eqnarray}
Then, since all eigenvalues of $\tilde O$ other than $\lambda$ are close
to zero, we have that $\|(\lambda(\theta_x,\theta_y)-\tilde O)^{-1} (1-P) - (1-P) \|
\leq C\err$.  Thus,
$$\|\partial_{\theta_y} P(\theta_x,\theta_y)-\lambda(\theta_x,\theta_y)^{-1}\left\{ (1-P) \partial_{\theta_y} \tilde O(\theta_x,\theta_y) P+P\partial_{\theta_y} \tilde O(\theta_x,\theta_y) (1-P)\right\}\| \leq C\err.$$
Since $\partial_{\theta_y} \tilde O(\theta_x,\theta_y)$ is close to $i[\DY(\theta_x,\theta_y),\tilde O(\theta_x,\theta_y)]$ which in
turn is close to $i[\DY(\theta_x,\theta_y),P(\theta_x,\theta_y)]$, and since $\lambda^{-1}$ is close to unity, we find that
\be
\|\partial_{\theta_y} P(\theta_x,\theta_y)-i\left\{ (1-P) [\DY(\theta_x,\theta_y),P] P+P[\DY(\theta_x,\theta_y), P] (1-P)\right\} \| \leq C(1+D)\err.
\ee
However, 
$(1-P) [\DY(\theta_x,\theta_y),P] P+P[\DY(\theta_x,\theta_y), P] (1-P)=[\DY(\theta_x,\theta_y),P]$ so
\be
\|\partial_{\theta_y} P(\theta_x,\theta_y)-i[\DY(\theta_x,\theta_y),P] \| \leq C(1+D)\err.
\ee

The proof for the bound on
$\|i[\DX(\theta_x,\theta_y),P(\theta_x,\theta_y)]-\partial_{\theta_x} P(\theta_x,\theta_y)\|$ is similar.
\end{proof}
\end{lemma}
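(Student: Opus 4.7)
The plan is to construct $P(\theta_x,\theta_y)$ by averaging the four path-dependent rank-$1$ projectors with linear tent weights chosen to enforce periodicity, then smoothing the resulting operator and extracting its top spectral projector.

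For $a=1,\ldots,4$ define $P^a(\theta_x,\theta_y)=U_a|\Psi_0\rangle\langle\Psi_0|U_a^{\dagger}$. Each is rank-$1$, and assumption (1) gives $\|P^a-P^b\|_2^2=2(1-|\langle\Psi_0,U_a^{\dagger}U_b\Psi_0\rangle|^2)\le 2\err^2$. No individual $P^a$ is periodic, because ``$\theta_y$ increasing from $0$'' and ``$\theta_y$ decreasing from $2\pi$'' describe the same point on the torus but come from different $U_a$'s. To cure this I form $O(\theta_x,\theta_y)=\sum_a w^a_x(\theta_x)\,w^a_y(\theta_y)\,P^a(\theta_x,\theta_y)$, where $w^a_y$ is the linear tent on $[0,2\pi]$ vanishing at whichever endpoint is the wrong starting point for the $\theta_y$-segment of path $a$ (and similarly for $w^a_x$). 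The weights kill the discontinuous path at each seam, so $O$ extends to a continuous, periodic operator on the flux torus.

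Next, I compute the two commutator defects. Since $\DX$ generates the final horizontal segment of each $U_a$, we have $i[\DX(\theta_x,\theta_y),P^a]=\partial_{\theta_x}P^a$ pointwise, so $i[\DX,O]-\partial_{\theta_x}O=-\sum_a(\partial_{\theta_x}w^a)P^a$. The weight-derivatives come in $\pm$ pairs between the two $\theta_x$-orientations, and the residual rearranges into differences $P^b-P^c$, each bounded by $\err$ via (1). For the $\theta_y$-defect the subtlety is that the $\theta_y$-segment of path $a$ took place at $\theta_x\in\{0,2\pi\}$, generated by $\DY(0,\theta_y)$ then conjugated forward by $U_X^a$, rather than by $\DY(\theta_x,\theta_y)$ directly. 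Assumption (2) is exactly the statement that these two generators agree on $P^a$ up to $\err$, so after the same weight manipulation $\|i[\DY,O]-\partial_{\theta_y}O\|\le C\err$.

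To obtain a smooth rank-$1$ projector, convolve $O$ with a positive smooth bump of very small support to produce $\tilde O$; by (3) the operators $\DX,\DY$ are differentiable, so $\tilde O$ is $C^\infty$, periodic, $\|\tilde O-O\|\le C\err$, and inherits both commutator bounds up to $C\err$. Define $P(\theta_x,\theta_y)$ as the spectral projector of $\tilde O$ onto its largest eigenvalue. Since $\mathrm{Tr}(\tilde O P^a)\ge 1-C\err$ for each $a$, the top eigenvalue is $\ge 1-C\err$ and the rest of the spectrum is $\le C\err$, giving an order-$1$ spectral gap; hence $P$ is uniquely determined, smooth, periodic, and $\|P-P^a\|\le C\err$, which in particular gives the claimed bound at $(0,0)$. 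First-order perturbation theory through the spectral gap yields $\partial_{\theta_y}P=(1-P)(\partial_{\theta_y}\tilde O)P+P(\partial_{\theta_y}\tilde O)(1-P)+O(\err)$. Since $[\DY,P]$ is already off-diagonal in the $P,1-P$ decomposition, the off-diagonal part of $\partial_{\theta_y}\tilde O$ equals $i[\DY,\tilde O]$ up to $C\err$, which equals $i[\DY,P]$ up to $DC\err$ by the $\|\tilde O-P\|$ estimate; combining yields the $C(1+D)\err$ bound, and the $\theta_x$ case is identical.

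I expect the main obstacle to be the $\theta_y$ commutator estimate on $O$: one must confront the mismatch between $\DY(\theta_x,\theta_y)$ and the generator of the vertical segment that actually occurred earlier on path $a$. Assumption (2) is designed precisely for this, but it has to be applied in the right form, and the bookkeeping across all four orientations must be consistent so that the $\partial_{\theta_y}w^a$ terms still combine into controlled differences $P^b-P^c$ rather than contributing to the error.
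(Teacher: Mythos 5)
Your proposal is correct and follows essentially the same route as the paper: define the four path-projectors $P^a$, average with the tent weights $w^a_x w^a_y$ to get a periodic $O$, bound the $\DX$-defect via the weight-derivative cancellations and the $\DY$-defect via assumption~(2), smooth by convolution, extract the top spectral projector, and finish with first-order perturbation theory together with the identity $(1-P)[\DY,P]P+P[\DY,P](1-P)=[\DY,P]$. The only cosmetic omission is that you should note $O$ is differentiable away from the lines $\theta_x,\theta_y\in 2\pi\Z$ so the commutator-defect bounds hold almost everywhere before smoothing, exactly as the paper remarks.
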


Having defined the projector $P(\theta_x,\theta_y)$, we can define a fiber bundle with $\mathbb{C}$ as the fiber, and the projectors define a canonical
connection.
  One can show that the Chern number with this connection is close to the Hall conductance in units
of $e^2/h$.

\subsection{Fractional Hall Effect}
We now discuss the fractional Hall effect.  We do not claim to present
a complete proof of the fractional quantization at the same level of detail
as in the integer case; rather, our goal is to sketch a proof
to indicate where modifications need to be made in comparison to
the integer case.  We will present a detailed proof elsewhere.
In the case of the fractional quantum Hall effect, we consider the same class of
Hamiltonians $H_0$, but now we assume that there are $q$ different
eigenstates of $H_0$, $\Psi_0^a$ for $a=1...q$, which we refer to as ground states.  We assume that
the energy gap between the highest energy ground state and the rest of the
spectrum is at least $\gamma$.

The fractional quantum Hall effect can also be treated using our approach,
with one additional assumption about topological quantum order required and
with an appropriate definition of the Hall conductance, as discussed below.
We first need to define topological quantum order.
Following \cite{bhv}, we define that the $q$ ground states have topological quantum order with error $(l,\epsilon)$ if, for any operator $O_{loc}$ with
$\|O_{loc}\|=1$ supported on a set with diameter $l$ or less,
we have that the $q$-by-$q$ matrix $M$ with matrix elements
\be
M_{ab}=\langle \Psi_0^a,O_{loc} \Psi_0^b\rangle,
\ee
obeys $\| M_{ab}-c\delta_{ab}\|\leq \epsilon$ for some constant $c$.
That is, the projection of $O_{loc}$ into the ground state sector must give
an operator close in operator norm to a multiple of the identity operator.
Given this definition, we require that the system has
$(L/2-1,\epsilon)$ topological order for sufficiently small $\epsilon$.

Next, we need to correctly define the Hall conductance.
This problem was physically discussed by Thouless and Gefen\cite{tg}.
They pointed out that, at experimentally relevant temperatures and
voltages, the terms in the Kubo formula involving matrix
elements between the different ground states should be neglected.
Thus, we should define the Hall conductance in any normalized state
$\Psi_0$ which is a linear combination of ground states $\Psi_0^a$ by
\begin{equation}\label{def:kubofqhe}
\sigma_{xy}(\Psi_0) = 2 \Im \left\lbrace\braket{\frac{\partial}{\partial \theta_y}\Psi_0(0,\theta_y)_{\theta_y=0}}{(1-P_g) \frac{\partial}{\partial \theta_x}\Psi_0(\theta_x,0)_{\theta_x=0}}\right\rbrace \cdot \left(2\pi \, \frac{e^2}{h}\right),
\end{equation}
where $P_g$ is the projector onto the ground state subspace: $P_g=\sum_{a=1}^q |\Psi_0^a\rangle\langle \Psi_0^a|$.  
We will see below that quasi-adiabatic continuation, with an appropriate
choice of $\alpha$, provides a mathematical
realization of the idea of Thouless and Gefen that, in the fractional
Hall effect, experimental measurements are performed slowly enough to
avoid exciting states above some large energy gap, but quickly enough that
we ``shoot through" any avoided crossing of the low energy states \cite{tg,suffcon}.

We will see that this definition gives a Hall
conductance which is almost independent of the particular $\Psi_0$ we choose.
We will define $\sigma_{xy}$ to be the average of $\sigma_{xy}(\Psi_0^a)$
over $a=1,...,q$.
Using these assumptions, we will prove quantization, up to small error, of the Hall conductance in
units of $\frac{e^2}{h\,q}$, with $q$ the dimension of the ground state subspace.
This topological order assumption is certainly necessary to prove this.  To
see why the topological order assumption is necessary, let us
consider an example of a system lacking this assumption.
Imagine that
a system is close to a first order phase transition between a state with filling fraction $\nu=1/3$ and another one with $\nu=1/5$.  By
tuning the parameters in the Hamiltonian, we can construct a situation in which the two phases have the
same energy and the ground state is $8$-fold degenerate.  However, in $3$ of the ground states, we will
observe a conductance close to $1/3$ and in $5$ of them it will be close to $1/5$, neither of which is
a multiple of $1/8$.

We will use the same choices of $\alpha,r$ as in the integer case.
However, throughout this subsection we use $o(1)$ to denote quantities which vanish at fixed $R,J,\gamma$ in the
limit of large $L$ and vanishing $\epsilon$, while we use $O(\epsilon)$ to denote quantities which are bounded by a constant times
$\epsilon$ in the limit of small $\epsilon$ at fixed $R,J,L,\gamma,q,q_{max}$. 
Such quantities which are $O(\epsilon)$ will depend on $R,J/\gamma,L,q,q_{max}$ only polynomially, and hence if $\epsilon$ is exponentially small in
$L$ will be small.
 Due to the detail involved in the
estimates in the integer case, here we will be content to sketch the errors, although, for sufficiently
small $\epsilon$, the final error bounds on quantization (now up to a multiple of $e^2/hq$) will be the same.
We use ``close" to denote that two quantities are within $o(1)$ of each other.

The assumption of topological order implies that the energy difference between
the ground states is at most $\epsilon L^2 J$, since the Hamiltonian is
a sum of local operators, each with energy difference bounded by
$\epsilon J$.  Given this energy difference, it immediately
follows \cite{hast-qad} that the matrix elements of the quasi-adiabatic
evolution operators $\DX,\DY$ between the different ground
states at $\theta_x=\theta_y=0$ are also ${\cal O}(\epsilon)$.  An alternate way to bound the
matrix elements of $\DX,\DY$ between the different ground states is to
note that each one is a sum of operators $S_{\alpha}(H,A_Z)$ and each such operator, up to corrections which are exponentially small in $L$, is local up to length scale $L/2-1$.

One can now repeat the integer proof, with one important difference described below.
First, one can prove that evolution of a state $\Psi_0$ which is
a linear combination of ground states around the small loop of size $r$
near the origin gives a phase which approximates $e^{i\sigma_{xy}(\Psi_0) (r^2/2\pi)(e^2/h)}$.  There are two sources of error here.  The
first is that quasi-adiabatic continuation produces non-vanishing
matrix elements between the ground states, while we have defined
the Hall conductance with the projector $(1-P_g)$ which removes those
intermediate states.  However, quasi-adiabatic continuation
has only small matrix elements between the ground states and these terms
are ${\cal O}(\epsilon)$ for any fixed $R,J,L,\gamma,q,q_{max}$ which in turn implies
a fixed value of $\alpha$.
The second source of error is the same as in the integer case:
errors due to the non-vanishing value of $r$, and errors due to the
finite value of $\alpha$ meaning that we only approximate the
quantity $(1-P_g) \partial_{\theta_x} \Psi_0(\theta_x,0)$ by quasi-adiabatic continuation. 

Since the evolution around the small loop can be approximated by
a local operator, it is close to a constant times the identity operator
when projected into the ground state subspace, and hence that
$\sigma_{xy}(\Psi_0)$ is approximately independent of $\Psi_0$, as claimed.

The proof that the product of evolutions around the small loops is equal to
the evolution around the big loop is identical to the previous case.

The energy estimates and the translation lemma
both rely on the partial trace estimates.
The partial trace estimates can be repeated as before, with
the same error estimates as in the integer case, plus errors which
are ${\cal O}(\epsilon)$.
In proving the partial trace estimates in the fractional case, note that
the product $\exp(-iQ_X \theta) U_X^{(2)}(0,0,\theta)$ 
is close to a phase multiplied by the identity operator
when projected into the ground state subspace.
This can be shown by using the topological
order assumption: since the Hamiltonians
$H(\theta,-\theta,0,0)$ are unitarily equivalent for
all $\theta$, we need to show that
the partial derivative of
$\exp(-iQ_X \theta) U_X^{(2)}(0,0,\theta)$ is close to a multiple of identity operator
after projection into the ground state subspace;
however, this follows because this partial derivative is a sum of local
operators at $\theta=0$.

The translation lemma \ref{lem:translation_unitary} can be repeated as before, with the same error estimates as in the
integer case, plus errors which are ${\cal O}(\epsilon)$.
Thus, we can, with the same error estimates up to terms which are ${\cal O}(\epsilon)$, show that 
the analogue of corollary \ref{cor:stokes} holds, where now $\Psi_0$
in corollary \ref{cor:stokes} can be replaced with any linear combination
of the $\Psi_0^a$, $a=1,...,q$, and $\Psi_{\circlearrowleft}(r)$ is
defined by the quasi-adiabatic continuation of that state:
\begin{eqnarray}
\ket{\Psi^a_{\circlearrowleft}(r)} = V_{\circlearrowleft}(0, 0 , r) \ket{\Psi_0(0,0)} = U^{\dagger}_{Y}(0, 0 , r)\,U^{\dagger}_{X}(0, r , r)\, U_{Y}(r , 0 , r)\, U_{X}(0, 0 , r) \ket{\Psi_0^a(0,0)}.
\end{eqnarray}

The evolution around the big loop can again be exactly decomposed
into a product of evolutions $V^\dagger[(0,0]\rightarrow (\theta_x,\theta_y)]V_\circlearrowleft(\theta_x,\theta_y,r) V[(0,0)\rightarrow (\theta_x,\theta_y)]$
around small loops of size $r$.  
Since the evolution around each small loop is local and so is close to $e^{i \sigma_{xy} (r^2/2\pi) (e^2/h)}$ times
the identity operator in the ground state subspace, this implies
that the unitary $V_\circlearrowleft(0,0,2\pi)$ is close to
a multiple of identity in the ground state subspace.  That is,
$Z_{ab}\equiv \braket{\Psi^a_0}{\Psi_{\circlearrowleft}^b(2\pi)}$ satisfies $\|Z-z\one\| \le o(1)$, for some phase $z$,
with $z$ close to $e^{i \sigma_{xy} 2\pi h/e^2}$.

However, here is the difference.  Unlike the integer case, we do not
know that $Z$ is close to the identity matrix itself, for the following
reason.
The energy estimates given in Proposition \ref{prop:energy} for evolution around one side of the large loop
can again be repeated as before, with the same error estimates up to
terms which are ${\cal O}(\epsilon)$, to show that quasi-adiabatic evolution
of a ground state around that side gives a state which is close to a linear
combination of ground states.  However in this case the operator
$U_X(0,0,2\pi)$ is supported on a set of diameter $L$, so the
assumption of topological quantum order does not tell us that this operator
is close to constant times the identity operator in the ground state sector.

Since we no longer know that the phase cancels between evolution around
opposite sides of the big loop, we must determine what this phase is.
Let $u_x$ and $u_y$ denote $q \times q$ matrices defined as follows:
\be
(u_x)_{ab}=\braket{\Psi^a_0}{U_X(0,0,2\pi) \Psi^b_0}, \quad
(u_y)_{ab}=\braket{\Psi^a_0}{U_Y(0,0,2\pi) \Psi^b_0}.
\ee
Then, $Z=u_y^{\dagger}\, u_x^{\dagger}\, u_y\, u_x$ plus error terms which
are due to leakage out of the ground state sector. In particular, the product of operators
$u_y^{\dagger}\, u_x^{\dagger}\, u_y\, u_x$ corresponds to projecting back into the
ground state sector after each side of the big loop, hence the error terms which come from projecting
off the ground state are small due to Proposition \ref{prop:energy}, in a manner similar to the proof of Corrolary \ref{cor:gs_evol}.  
Since $\|Z-z\one\| \le o(1)$ for some
phase $z$, we have
$\|u_y^{\dagger}\, u_x^{\dagger}\, u_y\, u_x-z\one\| \le o(1)$.  Moreover, we have that
$\|u_x^{\dagger} u_x-\one\|\le o(1), \, \|u_y^{\dagger} u_y-\one\|\le o(1)$
since, for example,
$$(u_x^{\dagger} u_x)_{ab} = \braket{\Psi^a_0}{U^{\dagger}_X(0,0,2\pi) P_0 U_X(0,0,2\pi) \Psi^b_0} = \delta_{ab} - \braket{\Psi^a_0}{U^{\dagger}_X(0,0,2\pi) (1-P_0) U_X(0,0,2\pi) \Psi^b_0},$$
with $P_0$ the projection onto the ground state sector and the second term above is small due to the energy estimate in Proposition \ref{prop:energy} applied to the quasi-adiabatic continuation with $U_X(0,0,2\pi)$ of $\Psi^a_0$ and $\Psi^b_0$, separately.
Finally, we note that for $q \times q$ matrices $A, B$ satisfying $B = z\one + A$, with $\|A\| \leq o(1)$, we have 
$|\rm{det}(B) -z^q| \leq o(1)$.
Putting everything together and using the triangle inequality, we get from 
$\rm{det}(u_y^{\dagger}\, u_x^{\dagger}\, u_y\, u_x) = \rm{det}(u_x^{\dagger}\, u_x) \cdot \rm{det}(u_y^{\dagger}\, u_y)$ and $\max\{|\rm{det}(u_x^{\dagger}\, u_x)-1|, |\rm{det}(u_y^{\dagger}\, u_y)-1|\} \le o(1)$: 
$$|\rm{det}(u_y^{\dagger}\, u_x^{\dagger}\, u_y\, u_x) -z^q| \le o(1) \implies |\rm{det}(u_x^{\dagger}\, u_x) \cdot \rm{det}(u_y^{\dagger}\, u_y) -z^q| \le o(1) \implies |1 - z^q| \le o(1).$$
Hence, $z$ is close to a $q$-th root of unity and hence $\sigma_{xy}$ is close to an integer multiple of $(1/q) e^2/h$.

{\bf Acknowledgments:} MBH thanks M. Freedman, C. Nayak, and T. Osborne for useful discussions.
SM thanks B. Nachtergaele and R. Sims for useful
discussions on Lieb-Robinson bounds.  SM thanks the organizers of
the workshop on ``Quantum Information and Quantum Spin Systems" at
the Erwin Schr\"{o}dinger Institute and the organizers of the program on ``Quantum Information Science"
at the KITP at UC Santa Barbara, where parts of this work were completed.  SM was supported
by NSF Grant DMS-07-57581 and DOE Contract DE-AC52-06NA25396.

\appendix
\begin{center}
    {\bf APPENDIX}
  \end{center}
\section{Estimating the difference of powers}
In this section, we prove the implication (\ref{bound:power_difference}). First, we have $b \ge |e^{i\theta}| - |b-e^{i\theta}| \ge 1-\epsilon \ge 1/2$ and using induction it is easy to show that $b^m \ge 1- m\epsilon$. Moreover,
using Euler's formula $e^{i\theta}=\cos \theta + i\sin\theta$, we have that 
$$|b-e^{i\theta}|^2 = (1-b)^2 + 2b(1-\cos\theta) = (b-\cos\theta)^2+\sin^2\theta \le \epsilon^2 \implies \sin|\theta| \le \epsilon, \,\cos|\theta| \ge \sqrt{1-\epsilon^2} \ge \sqrt{3}/2,$$
which implies that $\cos \phi$ is decreasing for $\phi \in [0,|\theta|]$ and hence, $\sin|\theta| = \int_0^{|\theta|} \cos\phi\, d\phi \implies \sin|\theta| \ge |\theta| \cos|\theta| \implies |\theta|\le 2\epsilon/\sqrt{3}$. Moreover, 
$1-\cos(m\theta) = \int_0^{m\theta} d\phi \int_0^\phi \cos\eta \,d\eta \le \int_0^{m\theta} |\phi| \,d\phi \le m^2\,|\theta|^2/2\le \frac{2}{3} m^2\, \epsilon^2$.
Armed with these bounds, we get the inequality $\left|b^m-e^{im\theta}\right| \le \sqrt{\frac{7}{3}}\,m\,\epsilon$ as follows:
\begin{eqnarray*}
\left|b^m-e^{im\theta}\right|^2 = (1-b^{m})^2 + 2b^m\,(1-\cos(m\theta))\le \frac{7}{3}\,m^2\,\epsilon^2,
\end{eqnarray*}
and taking square roots completes the proof.

%%% Bounding Higher Partials %%%
\subsection{Bounding the norm of higher partials}
The following general formula for a differentiable family of Hamiltonians $H(\theta)$, which can be verified by differentiating both sides with respect to $t$, will be instrumental in the analysis that follows:
\begin{equation}\label{eq:dH}
\left(\partial_{\phi} e^{itH(\phi)}\right)_{\phi=\theta} e^{-itH(\theta)} =  -e^{itH(\theta)} \left(\partial_{\phi} e^{-itH(\phi)}\right)_{\phi=\theta} =  i \int_0^t \tau_u^{H(\theta)} \left(\partial_{\theta} H(\theta)\right) du 
\end{equation}
Using the above identities, we derive:
\begin{eqnarray}
&&\partial_{\phi} \left\{\tau_u^{H(\phi,0,\theta_y,0)}(\partial_{\theta}H(\phi,0,\theta,0))_{\theta=\theta_y}\right\}_{\phi=\theta_x} =\tau_u^{H(\theta_x,0,\theta_y,0)}\left(\left(\partial_{\phi} \left(\partial_{\theta} \, H(\phi,0,\theta,0)\right)_{\theta=\theta_y}\right)_{\phi=\theta_x}\right)+ \nonumber\\
&& \left[\left\{\left(\partial_{\phi} e^{iuH(\phi,0,\theta_y,0)}\right)_{\phi=\theta_x} e^{-iuH(\theta_x,0,\theta_y,0)}\right\}, \tau_u^{H(\theta_x,0,\theta_y,0)}(\partial_{\theta}H(\theta_x,0,\theta,0)_{\theta=\theta_y})\right],\\
&&\partial_{\phi} \left\{\tau_u^{H(\theta_x,0,\phi,0)}(\partial_{\theta}H(\theta_x,0,\theta,0))_{\theta=\phi}\right\}_{\phi=\theta_y}=\tau_u^{H(\theta_x,0,\theta_y,0)}\left(\left(\partial^2_{\theta}\, H(\theta_x,0,\theta,0)\right)_{\theta=\theta_y}\right)+\nonumber\\
&& \left[\left\{\left(\partial_{\phi} e^{iuH(\theta_x,0,\phi,0)}\right)_{\phi=\theta_y} e^{-iuH(\theta_x,0,\theta_y,0)}\right\}, \tau_u^{H(\theta_x,0,\theta_y,0)}(\partial_{\theta}H(\theta_x,0,\theta,0)_{\theta=\theta_y})\right].
\end{eqnarray}
From (\ref{eq:dH}) we have:
\begin{eqnarray*}
\left(\partial_{\phi} e^{iuH(\phi,0,\theta_y,0)}\right)_{\phi=\theta_x} e^{-iuH(\theta_x,0,\theta_y,0)}
&=& i \int_0^u \tau_s^{H(\theta_x,0,\theta_y,0)}(\partial_{\phi}H(\phi,0,\theta_y,0)_{\phi=\theta_x}),\\
\left(\partial_{\phi} e^{iuH(\theta_x,0,\phi,0)}\right)_{\phi=\theta_y} e^{-iuH(\theta_x,0,\theta_y,0)}
&=& i \int_0^u \tau_s^{H(\theta_x,0,\theta_y,0)}(\partial_{\phi}H(\theta_x,0,\phi,0)_{\phi=\theta_y})
\end{eqnarray*}
Using the above formulas with the definition of $\DY(\theta_x,\theta_y)$, we get:
\begin{eqnarray*}
\left\|\left(\partial_\phi \DY(\phi,\theta_y)\right)_{\phi=\theta_x}\right\| &\le& \left\|\left(\partial_{\phi} \left(\partial_{\theta} \, H(\phi,0,\theta,0)\right)_{\theta=\theta_y}\right)_{\phi=\theta_x}\right\| 
\left(\int_{-\infty}^{\infty} |t| \,s_{\alpha}(t)\, dt\right)\nonumber\\
&+&\left\|\left(\partial_{\theta} \, H(\theta_x,0,\theta,0)\right)_{\theta=\theta_y}\right\| \left\|\left(\partial_{\phi} \, H(\phi,0,\theta_y,0)\right)_{\phi=\theta_x}\right\| 
\left(\int_{-\infty}^{\infty} t^2 \,s_{\alpha}(t) \, dt\right)
\\
\left\|\left(\partial_\phi \DY(\theta_x,\phi)\right)_{\phi=\theta_y}\right\| &\le& \left\|\left(\partial^2_{\phi}\, H(\theta_x,0,\phi,0)\right)_{\phi=\theta_y}\right\| \left(\int_{-\infty}^{\infty} |t| \,s_{\alpha}(t)\, dt\right)
+\left\|\left(\partial_{\phi} \, H(\theta_x,0,\phi,0)\right)_{\phi=\theta_y}\right\|^2 \left(\int_{-\infty}^{\infty} t^2 \,s_{\alpha}(t) \, dt\right)
\end{eqnarray*}
with similar bounds for partials involving $\DX(\theta_x,\theta_y)$.
Now, recalling that $s_{\alpha}(t)$ is a Gaussian, we have $\int_{-\infty}^{\infty} |t| \,s_{\alpha}(t)\, dt = \frac{2\alpha}{\sqrt{2\pi}}$ and $\int_{-\infty}^{\infty} t^2 \,s_{\alpha}(t)\, dt = \alpha^2$. Finally, using estimates similar to the ones derived in (\ref{bound:rot-H_X}-\ref{bound:rot-H_Y}), it should be clear that the above partials have norms bounded by $C\, (Q_{\max} \alpha J\, L)^2$, for some $C >0$. In general, one could show that $m$-th order partials of $\DY(\theta_x,\theta_y)$ and $\DX(\theta_x,\theta_y)$ would have norms bounded by $C\, (Q_{\max} \alpha J\, L)^{m+1}$. The same argument applies to partials of $\DY^{(M)}(\theta_x,\theta_y)$ and $\DX^{(M)}(\theta_x,\theta_y)$, after we decompose each into a sum of individual interactions.
 
%%% CONTRIBUTION FROM 1-P_0 %%%
\section{Decomposing the evolution, one loop at a time.}
Let $N= (2\pi)/r$ and define $U_{(N-m)+nN} = V^{\dagger}(m\cdot r,n\cdot r)V_{\circlearrowleft}(m\cdot r,n\cdot r,r) V(m\cdot r,n\cdot r),$ for $m,n\in[0,N-1]$ to be the evolution operator corresponding to the cyclic path leading to and around the square with lower-left corner at $(m\cdot r,n\cdot r)$ in flux-space (see Fig.~\ref{fig:decomposition}). Then, we have the following identity:
\begin{equation}
V_{\circlearrowleft}(0,0,2\pi) = U_{N^2}\, U_{N^2-1} \cdots U_3\, U_2\, U_1
\end{equation}
as can be verified by the decomposition described in Fig.~\ref{fig:decomposition}. To help us track the contribution from each term in the decomposition as we insert $P_0=\pure{\Psi_0}$ and $Q_0=\one-P_0$ after each cyclic evolution, we introduce the scalars 
$$p_{[s,r]} = \braket{\Psi_0}{U_r\, U_{r-1}\cdots U_s \Psi_0},\qquad \mbox{and}\qquad
q_{[s,r]} = \bra{\Psi_0}\, U_r\, U_{r-1}\cdots U_{s-1} Q_0 U_s \ket{\Psi_0}.$$
Moreover, we note that $p_r = \braket{\Psi_0}{U_r \Psi_0}$ and in particular, $p_N = \braket{\Psi_0}{\Psi_\circlearrowleft(0,0,r)} = \braket{\Psi_0}{\Psi_\circlearrowleft(r)}$.
The quantity we are trying to bound in (\ref{eq:stokes}) is $$\left|\braket{\Psi_0}{\Psi_{\circlearrowleft}(2\pi)}-\braket{\Psi_0}{\Psi_{\circlearrowleft}(r)}^{\left(\frac{2\pi}{r}\right)^2}\right| = \left|p_{[1,N^2]}-(p_N)^{N^2}\right|.$$
Using the triangle inequality, we get:
\be
\left|p_{[1,N^2]}-(p_N)^{N^2}\right| \le \left|p_{[1,N^2]}-p_1\, p_{2}\cdots p_{N^2-1}\, p_{N^2}\right| + \left|p_1\, p_{2}\cdots p_{N^2-1}\, p_{N^2}-(p_N)^{N^2}\right|.
\ee 
so it suffices to bound each term on the right.
We start with a bound for $\left|p_{[1,N^2]}-p_1\, p_{2}\cdots p_{N^2-1}\, p_{N^2}\right| $. Using the following recursive relationship:
\begin{eqnarray*}
p_{[1,N^2]} &=& p_1 p_{[2,N^2]} + q_{[1,N^2]}\\
p_{[2,N^2]} &=& p_2 p_{[3,N^2]} + q_{[2,N^2]}\\
&\vdots &
\end{eqnarray*}
we may write:
\begin{equation}\label{bound:p_0}
p_{[1,N^2]} - p_1\, p_{2}\cdots p_{N^2-1}\, p_{N^2} = \sum^{N^2-1}_{i=1} p_1\cdots p_{i-1} q_{[i,N^2]}
\end{equation}
We turn our attention to $|q_{[i,N^2]}|$. We begin by observing that the following series of bounds hold:
\begin{eqnarray}
|q_{[i,N^2]}| &=& |\braket{\Psi_0}{U_{N^2}\cdots U_{i+1} Q_0 U_i \Psi_0}| \le \|Q_0 U_i\ket{\Psi_0}\|
= \sqrt{1 - |p_i|^2} \le \sqrt{2(1 - |p_i|)} \le \sqrt{2(1-|p_N|)} + \sqrt{2|p_i-p_N|}.\nonumber
\end{eqnarray}
Since, we have $N^2-1$ terms in (\ref{bound:p_0}), the triangle inequality  and the above bound together with the fact that $|p_i|\le 1$ for all $i\ge 1$, imply:
\begin{equation}\label{off_gs}
\left|p_{[1,N^2]} - p_1\, p_{2}\cdots p_{N^2-1}\, p_{N^2}\right|\le (N^2-1) \left(\sqrt{2(1-|p_N|)} + \sup_{i\in [1,N^2]} \sqrt{2|p_i-p_N|}\right).
\end{equation}
Moreover, assuming $\sup_{i\in[1,N^2]} |p_i - p_N| = \delta/N^2$ for some $\delta \in [0,1]$, we get:
\begin{equation}
\left|(p_{N})^{N^2} - p_1\, p_{2}\cdots p_{N^2-1}\, p_{N^2}\right| \le \left(1+\delta/N^2\right)^{N^2}-1 \le e^{\delta}-1 \le \int_0^{\delta} e^y \, dy \le e^{\delta} \, \delta \le e\cdot \delta.
\end{equation}
where the first inequality follows from expanding the product 
\begin{equation*}
p_1\, p_{2}\cdots p_{N^2-1}\, p_{N^2} = (p_N + [p_1-p_N])\cdots (p_N + [p_{N^2-1}-p_N])\,(p_N + [p_{N^2}-p_N])
\end{equation*}
and then, after subtracting the term $(p_N)^{N^2}$, using the triangle inequality along with $|p_N|\le 1$. The second inequality follows from comparing the binomial expansion of $(1+x/m)^m$ term by term with the Taylor expansion of $e^x$ and using the simple inequality $\binom{m}{k} \le \frac{m^k}{k!}$.
Putting everything together, we get the final bound:
\be\label{bound:p_1}
\left|\braket{\Psi_0}{\Psi_{\circlearrowleft}(2\pi)}-\braket{\Psi_0}{\Psi_{\circlearrowleft}(r)}^{\left(\frac{2\pi}{r}\right)^2}\right| \le N^2 \left(\sqrt{2(1-|p_N|)} + \sqrt{2\, \sup_{i\in [1,N^2]} |p_i-p_N|} + e \sup_{i\in [1,N^2]} |p_i-p_N| \right)
\ee
recalling that $N = \frac{2\pi}{r}$ and noting that $1-|p_N|$ is bounded in (\ref{adiabatic_phase_1}), while $ \sup_{i\in [1,N^2]} |p_i-p_N|$ is bounded in (\ref{eq:translation}).

%% BIBLIOGRAPHY %%

\end{document}